\documentclass[11pt]{article}

\usepackage[margin=1in]{geometry}
\usepackage{setspace}
\usepackage{amsmath, amssymb, amsthm, mathtools}
\usepackage{bm}
\usepackage{authblk}
\usepackage{orcidlink}

\usepackage{algorithm}
\usepackage{algorithmic}
\usepackage{booktabs}

\usepackage{natbib}
\usepackage{hyperref}
\usepackage{cleveref}
\hypersetup{
	colorlinks=true,
	linkcolor=blue,
	citecolor=blue,
	urlcolor=blue
}

\theoremstyle{plain}
\newtheorem{theorem}{Theorem}
\newtheorem{proposition}{Proposition}
\newtheorem{lemma}{Lemma}
\newtheorem{corollary}{Corollary}
\newtheorem{assumption}{Assumption}

\theoremstyle{definition}

\theoremstyle{remark}
\newtheorem{remark}{Remark}

\def\E{\mathbb{E}}
\def\R{\mathbb{R}}

\def\X{\bm{X}}
\def\u{\bm{u}}
\def\v{\bm{v}}

\newcommand{\identity}{\mathbf{I}}
\newcommand{\Prob}{\mathbb{P}}

\newcommand{\boldalpha}{\bm{\alpha}}
\newcommand{\boldtheta}{\bm{\theta}}
\newcommand{\boldmu}{\bm{\mu}}

\newcommand{\abs}[1]{\left\lvert #1 \right\rvert}
\newcommand{\norm}[1]{\left\| #1 \right\|}

\DeclareMathOperator*{\argmax}{arg\,max}

\title{Projection Diagnostics for Directional Asymmetry and Tail-Ratio Departure in Multivariate Data}

\author[1]{Sayantan Banerjee \orcidlink{0000-0001-5414-4817}}
\affil[1]{OM \& QT Area\\Indian Institute of Management Indore
}

\author[2]{Soudeep Deb \orcidlink{0000-0003-0567-7339}}
\affil[2]{Decision Sciences Area \\ Indian Institute of Management Bangalore}

\date{}

\begin{document}
	
    \maketitle
	
	\begin{abstract}
    We study projection-based diagnostics for distinguishing directional asymmetry from tail-ratio departure in multivariate data. The procedure reduces the problem to one-dimensional projections and computes two quantile-based summaries: a directional skewness measure evaluated over several quantile levels, and an interquantile tail-ratio evaluated relative to a chosen benchmark. The two summaries lead to a four-regime classification: symmetric benchmark-tail, symmetric tail-departed, skewed benchmark-tail, and skewed tail-departed. The quantile formulation avoids relying on third and fourth moments, which can be unstable in heavy-tailed settings. We establish population properties under central symmetry and ellipticity, uniform finite-sample bounds over the searched directions, and consistency of the threshold classifier under separated regimes. A sparse rank-one calculation is also used to show why coordinate directions can complement random directions in high dimensions. The resulting diagnostic is meant to guide subsequent modelling choices, for example whether a symmetric, skewed, tail-departed, or combined multivariate model is appropriate.

    \vspace{0.1in}

    {\scriptsize {\bf Keywords:} Directional skewness; Heavy-tailed distributions; Multivariate non-normality; Quantile-based inference; Random projections}
	\end{abstract}

\section{Introduction}\label{sec:introduction}

Projection-based approaches to multivariate non-normality have a long history. \citet{malkovich1973tests} proposed multivariate normality tests by maximizing univariate skewness, kurtosis, and Shapiro--Wilk type statistics over projections. \citet{baringhaus1991limit} later derived limit distributions for projection-based measures of multivariate skewness and kurtosis under elliptical distributions. An important lesson from this literature is that projection skewness alone is not a test of multivariate normality. In particular, Baringhaus and Henze showed that tests based on projection skewness in the Malkovich--Afifi sense can be inconsistent against fixed elliptically symmetric non-normal alternatives. This observation directly motivates our separation of the diagnostic into two components: directional skewness is used to detect violations of central symmetry, while the tail-ratio component is used to detect heavier- or lighter-tailed behavior relative to a chosen benchmark. Related projection-pursuit ideas also appear in the broader literature on exploratory multivariate analysis and non-Gaussianity. In this paper, we build a robust diagnostic framework based on directional quantiles, using separate projection summaries for asymmetry and tail-ratio departure and combining them into an interpretable four-regime classification.

The quantile-based aspect connects the paper to directional quantile methods, including the quantile tomography framework of \citet{kong2012quantile}, where multivariate information is studied through quantiles of one-dimensional projections. Our tail functional is related to interquantile tail-weight measures studied as robust alternatives to moment kurtosis \citep{ruppert1987what,jones2011skewness}; related quantile spacings also appear in earlier robust location work \citep{crow1967robust}. The present paper uses these ideas directionally: the tail ratio is evaluated across projections and paired with a directional skewness functional, allowing skewness and tail thickness to be separated rather than merged into a single omnibus statistic.

Classical tests of multivariate normality provide important benchmarks, but they are not designed to decompose the source of non-Gaussianity. Mardia's multivariate skewness and kurtosis measures are perhaps the most widely used moment-based summaries, and their asymptotic null distributions yield separate tests for skewness and kurtosis \citep{mardia1970measures, mardia1974applications}.  However, because they depend on third and fourth moments, their behavior may be sensitive to heavy tails and to high-dimensional finite-sample effects.  Omnibus procedures such as the Henze--Zirkler test \citep{henze1990class}, the Doornik--Hansen test \citep{doornik2008omnibus}, and energy-distance tests of normality \citep{szekely2013energy, szekely2017energy} are powerful tools for detecting general departures from multivariate normality, but an omnibus rejection does not identify whether the departure is driven by skewness, tail inflation, contamination, or a combination of these features. As mentioned above, projection-based competitors are closer in spirit to our approach, but they remain moment-based and are usually framed as tests of normality rather than as separate diagnostics for asymmetry and tail-ratio departure. Our numerical experiments further reveal that such tests are extremely time-consuming and therefore an alternative is necessary. Recent work on sub-dimensional Mardia measures meanwhile emphasizes that skewness and kurtosis may be localized in lower-dimensional features of a multivariate distribution, motivating procedures that search over directions or subspaces rather than relying only on global summaries \citep{chowdhury2022sub}.  Our contribution is to combine this projection viewpoint with robust directional quantiles, yielding separate interpretable statistics for skewness and tail behavior.

Our aim is to construct a projection-based diagnostic that separates directional skewness from directional tail-ratio departure in multivariate data. Unlike omnibus tests of multivariate normality, the proposed procedure is meant to indicate the type of departure, rather than only reject a Gaussian null. Let \(\bm X_1,\ldots,\bm X_n\) be independent \(p\)-dimensional observations, and let \(\mathcal{U}_{m}=\{\bm u_{1},\hdots, \bm u_{m}\}\) be a collection of directions sampled uniformly from the unit sphere $\mathbb{S}^{p-1}$. For each \(\bm u\in\mathcal U_m\), the projected sample \(\bm u^\top\bm X_1,\ldots,\bm u^\top\bm X_n\) is treated as a univariate sample. We compute a quantile-based directional skewness measure over a finite set of skewness levels \(\mathcal A_S\), and an interquantile tail-ratio relative to a chosen benchmark. Maximizing these quantities over the searched directions gives two diagnostic summaries, \(\widehat S_{\mathcal U_m,\mathcal A_S}\) and \(\widehat T_{\mathcal U_m}\). The two summaries are interpreted jointly: small values of both suggest behavior close to the chosen benchmark; large \(\widehat T_{\mathcal U_m}\) with small \(\widehat S_{\mathcal U_m,\mathcal A_S}\) suggests symmetric tail-ratio departure; large \(\widehat S_{\mathcal U_m,\mathcal A_S}\) with moderate \(\widehat T_{\mathcal U_m}\) suggests directional asymmetry with limited tail-ratio departure; and large values of both suggest combined asymmetry and tail-ratio departure. The diagnostic is intended as a screening and exploratory tool before fitting a more specific symmetric, skewed, tail-departed, or combined multivariate model.

The construction has three useful features. First, it reduces a high-dimensional diagnostic question to a collection of one-dimensional calculations, so that quantile-based measures of skewness and tail-ratio departure can be used without estimating high-order multivariate moments. Second, the directional statistics are computed separately across projections, which makes the procedure straightforward to implement and easy to parallelize. Third, the two summaries retain separate information about asymmetry and tail-ratio departure, helping to decide whether a symmetric, skewed, tail-departed, or combined model is more appropriate. In the implementation used in this paper, the direction set combines random and coordinate directions. Random directions provide broad coverage of the sphere, while coordinate directions improve sensitivity to marginal or sparse departures. Although we focus on random and coordinate directions in the present numerical study, the same projection framework can accommodate other structured directions, such as directions obtained from principal component analysis (PCA) or independent component analysis (ICA), when these are relevant to the application.

% The main advantages of our proposed method are threefold. First, it reduces a high-dimensional diagnostic question to a collection of one-dimensional problems, allowing robust univariate measures of skewness and tail thickness to be used without estimating high-order multivariate moments. Second, it is computationally simple and parallelizable, since the directional statistics are calculated independently across projections.  Third, it gives a separate assessment of asymmetry and tail-ratio departure, thereby helping the analyst decide whether the additional complexity of a skewed heavy-tailed model is warranted. A further contribution of the paper is to connect these diagnostics statistics for multivariate skewness and tail-ratio departures with familiar dimension-reduction tools.  Particularly, we show how the proposed random-projection statistics can be compared with the directions obtained through principal component analysis (PCA), independent component analysis (ICA), or the low-dimensional visual summaries obtained through the stochastic neighbor embedding mechanisms (SNE or $t$-SNE).  These connections clarify when classical dimension-reduction methods reveal distributional departures and when additional random directional screening is needed.  The resulting framework is therefore useful both as a formal diagnostic for skewed and heavy-tailed modeling and as a distributional interpretation layer for exploratory multivariate analysis.

The organization of the paper is as follows. \Cref{sec:framework} introduces the projection-based skewness and tail-ratio diagnostics, including the moment-based and quantile-based versions, the aggregation over projection directions, and the calibration scheme. \Cref{sec:theory} studies the corresponding population and finite-sample properties. We first record the behavior under central symmetry and ellipticity, then give a random-direction detectability result, uniform empirical control for the quantile diagnostics, and consistency of the four-regime classifier under separated regimes. The section also includes a sparse rank-one calculation to explain why coordinate directions can complement random directions in high dimensions. \Cref{sec:simulation} presents Monte Carlo experiments evaluating the diagnostic across benchmark-tail, tail-departed, skewed, and combined regimes. \Cref{sec:real_data} applies the method to Indian air-quality data. We conclude the paper with some important remarks about limitations and future scopes in \Cref{sec:conclusions}.

% The organization of this paper is as follows. In \Cref{sec:framework}, we first define a class of projection-based skewness and tail diagnostics and discuss choices of directional indices, including moment-based and quantile-based alternatives. We then establish basic population properties under elliptical symmetry and under directional skew alternatives, in \Cref{sec:theory}.  In particular, we show that the population directional skewness vanishes uniformly under centrally symmetric elliptical models, whereas it is nonzero whenever the distribution possesses a detectable asymmetric projection. A detailed empirical analysis through suitable Monte Carlo simulations is presented in \Cref{sec:simulation}. This helps us in evaluating the ability of the proposed diagnostics to distinguish Gaussian, symmetric heavy-tailed, skewed light-tailed and skewed heavy-tailed regimes. A real-data application with Indian air-quality data is presented in \Cref{sec:real_data}. We conclude the paper with some important remarks about limitations and future scopes in \Cref{sec:conclusions}.

\section{Mathematical framework}
\label{sec:framework}

In order to formally describe the procedure, let $\X$ be a $p$-dimensional random vector (where $p \geqslant 2$), with distribution function denoted by $P_{\X}$. For vectors $\bm a, \bm b \in \R^{p}$, let $\bm a^{\top} \bm b$ denote the Euclidean inner product and let $\norm{\bm a}=\sqrt{\bm a^{\top} \bm a}$ denote the Euclidean norm. The unit sphere in $\R^{p}$ is defined as \(\mathbb{S}^{p-1} = \left\{\bm u \in \R^{p} : \norm{\bm u} = 1 \right\}.\)

Let $\X_1,\ldots,\X_n$ be independent observations in $\R^p$, drawn from the distribution $P_{\X}$. As mentioned earlier, based on this sample, we are interested in diagnosing two different types of departures from a Gaussian or, more generally, an elliptically symmetric working model: the first is lack of central symmetry, and the second is tail-ratio behavior that differs from a chosen Gaussian or elliptical benchmark. These two features are often confounded in omnibus tests of multivariate normality. Prior to fitting a fully parametric skewed or heavy-tailed family, we aim to construct low-dimensional summaries which can indicate whether such a model is warranted. The basic device in this regard is projection. For each $\u \in \mathbb{S}^{p-1}$, the one-dimensional projection of $\X_i$ (for $i=1,\ldots,n$) in the direction $\u$ is given by
\begin{equation*}
%    \label{eq:projection-u}
    Y_i(\u) = \u^\top \X_i.
\end{equation*}

The collection \(Y_1(\u),\ldots,Y_n(\u)\) is a univariate sample obtained by viewing the data along the direction $\u$. If the multivariate distribution is centrally symmetric around a point $\boldtheta$, then \(\u^\top(\X-\boldtheta)\) is symmetric about zero for every $\u$. Similarly, if the distribution is elliptical, the standardized one-dimensional projections have a common shape, up to scale. Thus, directional projections give a natural way of examining both asymmetry and tail-ratio departure without estimating high-order multivariate moments directly. 

In practice though, the center of the distribution is unknown. We therefore work with centered observations $\widetilde{\X}_i=\X_i-\widehat{\boldtheta}$, where \(\widehat{\boldtheta}\) is a suitable location estimate. The ordinary sample mean may be used when the data appear reasonably regular. For heavy-tailed data, a coordinate-wise median, geometric median \citep[see][]{chaudhuri1996geometric}, or another robust location estimator is preferable. Unless otherwise stated, all projections below are computed from \(\widetilde{\X}_i\), and we write
\(
    \widetilde Y_i(\u) = \u^\top \widetilde{\X}_i.
\)
For notational simplicity, hereafter we suppress the tilde in some formulas when no confusion is likely.

\subsection{Directional skewness functionals}
\label{subsec:directional-skewness}

For each direction $\u$, let \(F_{\u}\) denote the distribution function of \(\u^\top(\X-\boldtheta)\). A directional skewness functional is a map
\[
\gamma:\mathbb{S}^{p-1}\to\mathbb R, \qquad u\mapsto \gamma(\u),
\]
where \(\gamma(\u)\) measures the asymmetry of the one-dimensional distribution \(F_{\u}\). The only essential requirement is that the functional vanishes under symmetry. More precisely, if \(F_{\u}\) is symmetric about zero, then \(\gamma(\u)=0\). Different choices of \(\gamma\) lead to different versions of the diagnostic. For example, the classical moment-based directional skewness is
\begin{equation*}
    \gamma_{M}(\u) = \frac{\E\{\u^\top(\X-\boldtheta)\}^3}
    {\left[\E\{\u^\top(\X-\boldtheta)\}^2\right]^{3/2}},
\end{equation*}
provided the required moments exist and the denominator is nonzero. Its sample version is naturally defined as
\begin{equation}
\label{eq:moment_based_skewness_statistic}
    \widehat\gamma_{M}(\u) = \frac{n^{-1}\sum_{i=1}^n \{\u^\top(\X_i-\widehat{\boldtheta})\}^3}
    {\left[n^{-1}\sum_{i=1}^n \{\u^\top(\X_i-\widehat{\boldtheta})\}^2 \right]^{3/2}}.
\end{equation}

Although the above is familiar and easy to interpret, it is also sensitive to a few extreme observations. Since one of the purposes of our paper is to separate skewness from heavy tails, the moment-based version should not be the only one considered. We therefore turn to a more robust alternative that is based on projected quantiles. For \(0<a<1\), define
\[
Q_a(\u)=\inf\{y:F_{\u}(y) \geqslant a\},
\]
and let \(\widehat Q_a(\u)\) be the corresponding empirical quantile of $\u^\top(X_1-\widehat{\boldtheta}),\ldots,\u^\top(X_n-\widehat{\boldtheta})$. The Bowley-type directional skewness functional is
\[
\gamma(\u)
=
\frac{
	Q_{0.75}(\u)+Q_{0.25}(\u)-2Q_{0.50}(\u)
}{
	Q_{0.75}(\u)-Q_{0.25}(\u)
},
\]
whenever the denominator is positive. This measure compares the upper and lower halves of the projected distribution around the median. If \(F_{\u}\) is symmetric, then $Q_{0.75}(\u)+Q_{0.25}(\u)-2Q_{0.50}(\u)=0$, so that \(\gamma(\u)=0\). Unlike the third-moment skewness, the quantile-based version remains meaningful for distributions whose third moment does not exist. For \(a\in (0,1/2)\), define the directional quantile skewness
\[
\gamma_{a}(\u)
=
\frac{Q_{1-a}(\u)+Q_a(\u)-2Q_{0.50}(\u)}
     {Q_{1-a}(\u)-Q_a(\u)},
\]
whenever \(Q_{1-a}(u)>Q_a(u)\). This statistic measures the deviation from symmetry by focusing on the $a^{th}$ quantiles on both tails, and it takes the form of the Bowley-type statistic if we set $a=0.25$. Smaller values of \(a\) use more extreme quantiles and can be more sensitive to tail asymmetry. The general empirical version is
\begin{equation}
\label{eq:quantile_based_skewness_statistic}
    \widehat\gamma_{a}(\u) = \frac{\widehat Q_{1-a}(\u)+\widehat Q_{a}(\u)-2\widehat Q_{0.50}(\u)}
    {\widehat Q_{1-a}(\u)-\widehat Q_{a}(\u)},
\end{equation}
for quantile level $a \in (0,0.5)$.  Of course, other robust skewness functionals, such as the medcouple \citep{brys2004robust}, may also be used inside the same framework. We focus on the moment and quantile versions because they have complementary roles: the former is classical and powerful under finite moments, while the latter is stable under heavy tails and local contamination.

%\sd{throughout, I think it would be better not to keep the subscript $Q$ for the quantile based definitions, only the small $q$ or small $a$ to reflect which quantile should be good enough. I am trying to make sure all definitions and notations follow this, please double check the same}

\subsection{Directional tail-weight functionals}
\label{subsec:directional-tail}

A directional tail-weight functional is a map
\[
\tau:\mathbb{S}^{p-1}\to\mathbb R, \qquad u\mapsto \tau(\u)
\]
where \(\tau(\u)\) measures the tail-ratio departure of the projection
\(u^\top(\X-\boldtheta)\). As before, several choices are possible. For instance, we may consider the classical moment-based choice of projected kurtosis:
\[
\tau_{M}(\u)
=
\frac{
	E\{\u^\top(\X-\boldtheta)\}^4
}{
	\left[E\{\u^\top(\X-\boldtheta)\}^2\right]^2
},
\]
whenever the fourth moment exists. The sample version of this is
\begin{equation}
\label{eq:moment_based_tail_statistic}
    \widehat\tau_{M}(\u) = \frac{n^{-1}\sum_{i=1}^n \{\u^\top(\X_i-\widehat{\boldtheta})\}^4}
    {\left[n^{-1}\sum_{i=1}^n \{\u^\top(\X_i-\widehat{\boldtheta})\}^2\right]^2}.
\end{equation}
For a Gaussian projection, the population value is \(3\). Thus one may also work with excess projected kurtosis, \(\tau_{M}(\u)-3.\) We highlight a clear drawback in this case; namely, in very heavy-tailed settings, the fourth moment may not exist, and even when it does exist the empirical kurtosis may have high variance. For this reason we also use a quantile tail-ratio. Fix a tail probability level \(q\in(0,1/2)\), for example \(q=0.025\) or \(q=0.05\). Then, define
\[
\tau_q(\u)
=
\frac{
	Q_{1-q}(\u)-Q_q(\u)
}{
	Q_{0.75}(\u)-Q_{0.25}(\u)
},
\]
the empirical version of which is
\begin{equation}
\label{eq:quantile_based_tail_statistic}
    \widehat\tau_{q}(\u) = \frac{\widehat Q_{1-q}(\u)-\widehat Q_q(\u)}
    {\widehat Q_{0.75}(\u)-\widehat Q_{0.25}(\u)}.
\end{equation}
This compares the width of an outer central interval to the inter-quartile range. It is invariant under location and scale transformations of the projected data. For a standard normal distribution, this takes the simpler form $\Phi^{-1}(1-q)/\Phi^{-1}(0.75)$. For example, when \(q=0.025\), one may calculate its value as approximately 2.91. Heavy-tailed projected distributions tend to have larger values of \(\tau_q(\u)\).

In the rest of the paper, we use the notations $\widehat\gamma(\u)$ and \(\widehat\tau(\u)\) for generic directional skewness statistic and directional tail statistic. In applications and simulations, both the moment-based definitions (\eqref{eq:moment_based_skewness_statistic} and \eqref{eq:moment_based_tail_statistic}), or the quantile tail-ratio-based statistics (\eqref{eq:quantile_based_skewness_statistic} and \eqref{eq:quantile_based_tail_statistic}) can be examined. For formal high-dimensional statements, the quantile versions are often more convenient, because it does not require fourth moments.

\subsection{Aggregating directional evidence}
\label{subsec:aggregation}

Let $\mathcal U_m=\{\u_1,\ldots,\u_m\}\subset \mathbb{S}^{p-1}$ be a collection of directions in the unit sphere. The simplest choice is to draw the directions independently and uniformly from \(\mathbb{S}^{p-1}\). Let \(\mathcal A_S\subset(0,1/2)\) be a finite set of skewness levels. For \(a\in\mathcal A_S\), and for a direction set \(\mathcal U_m\), the corresponding population skewness summary taking \(\gamma_{a}(\u)\) as the skewness measure is given by,
\begin{align}
S_{\mathcal U_m,\mathcal A_S}
=
\max_{\u\in\mathcal U_m}
\max_{a\in\mathcal A_S}
\abs{\gamma_{a}(\u)}.
\label{eq:skewness_measure}
\end{align}
The corresponding tail-weight measure involving \(\tau_q(\u)\)  is 
\begin{align}
T_{\mathcal U_m} = \max_{\u\in\mathcal U_m} \abs{\tau_q(\u)-\tau_{q}^0}.
\label{eq:tail_measure}
\end{align}

Given \(\mathcal U_m\), define the empirical version of skewness and tail diagnostics as
\begin{equation}
    \widehat S_{\mathcal U_m,\mathcal A_S} = \max_{\u\in\mathcal U_m} \max_{a\in\mathcal A_S} \abs{\widehat\gamma_{a}(\u)}, \quad
    \widehat T_{\mathcal U_m} = \max_{\u\in\mathcal U_m} \abs{\widehat\tau_q(\u)-\tau_{q}^0}.
    \label{eq:empirical_skew_tail_measure}
\end{equation}
Here, \(\tau_{q}^0\) is a reference tail value. For the quantile tail-ratio, it can also be derived under Gaussian calibration as shown above. If the purpose is only to rank directions by tail-ratio departure, one may instead use $\widehat T_{\mathcal U_m} = \max_{\u\in\mathcal U_m} \widehat\tau_q(\u)$, but the centered version is more natural for testing against a specified reference law. Note that, we vary \(a\) in the skewness statistic because our simulations indicate that different skewness levels capture different forms of asymmetry, with the Bowley choice \(a=0.25\) sometimes weak for tail-driven skewness. The tail level \(q\) is kept fixed to maintain a stable and interpretable tail-ratio benchmark, since varying \(q\) gave little additional empirical gain while increasing sampling variability.

Apart from the maximum, for general functionals \(\gamma\) and \(\tau\), we may also consider upper quantiles or averages:
\begin{equation*}
    \widehat S_{\mathcal U_m,\mathcal A_S}^{(\alpha)} = \widehat Q_{1-\alpha} \left( \abs{\widehat\gamma(\u_1)}, \ldots ,\abs{\widehat\gamma(\u_m)} \right), \quad
    \widehat T_{\mathcal U_m}^{(\alpha)} = \widehat Q_{1-\alpha} \left(\abs{\widehat\tau(\u_1)-\tau_{0}},\ldots, \abs{\widehat\tau(\u_m)-\tau_{0}} \right),
\end{equation*}
for small \(\alpha\). These trimmed versions are less sensitive to a single unstable direction. Nevertheless, the maximum is useful for detecting localized departures and is the main object of the theoretical analysis.

The maximizing directions are also useful for interpretation. For skewness, one may record \((\widehat{\u}_S,\widehat a_S)\in\argmax_{\u\in\mathcal U_m,\ a\in\mathcal A_S}|\widehat\gamma_{a}(\u)|\), while for tail behavior one may record \(\widehat{\u}_T\in\argmax_{\u\in\mathcal U_m}|\widehat\tau_q(\u)-\tau_q^0|\). The direction \(\widehat{\u}_S\), together with the selected level \(\widehat a_S\), indicates whether the strongest asymmetry is central or tail-oriented, and \(\widehat{\u}_T\) identifies the linear combination with the largest tail-ratio departure from the reference. This makes the procedure more interpretable than a single omnibus rejection of multivariate normality.

We use a direction set that combines random and coordinate projections. The random directions \(\u_1,\ldots,\u_m\) are drawn independently from \(\mathrm{Unif}(\mathbb S^{p-1})\), for example by generating \(Z_j\sim N_p(0,I_p)\) and setting \(\u_j=Z_j/\|Z_j\|_2\). They give broad coverage of the sphere when the direction of non-Gaussianity is unknown.

In high dimensions, however, departures from Gaussianity may be localized in sparse directions, in which case purely random projections can have weak alignment with the signal. We augment the random directions with coordinate directions and take
\(\mathcal U_m = \mathcal U_{\mathrm{rand}} \cup \mathcal U_{\mathrm{coord}}\), where \(\mathcal U_{\mathrm{rand}}\) consists of $m$ independent random directions drawn uniformly from \(\mathbb S^{p-1}\), and \(\mathcal U_{\mathrm{coord}}=\{\bm{e}_1,\ldots,\bm{e}_p\}\) is the set of coordinate directions. The coordinate directions improve sensitivity to marginal asymmetry or marginal tail-ratio departures, while the random directions retain coverage of non-axis-aligned alternatives. This choice is computationally simple and matches the sparse-direction discussion in Section~\ref{subsec:sparse_directional_alternatives}.

Other structured directions can also be incorporated. For example, principal component directions may be useful when non-Gaussianity is associated with dominant variance modes, while independent component directions are natural when one wants to search for non-Gaussian linear combinations. Such random, optimized, and structured projections are classical in projection pursuit and related multivariate screening problems; see, for example, \citet{friedman1974projection,diaconis1984asymptotics,pena2001multivariate,hyvarinen2001ica,loperfido2018skewness}. We do not use PCA or ICA directions in the numerical implementation below; they are mentioned only as possible extensions of the same projection framework.

\subsection{Calibration}
\label{subsec:calibration}

The raw values of \(\widehat S_{\mathcal U_m,\mathcal A_S}\) and \(\widehat T_{\mathcal U_m}\) are not directly comparable across sample sizes, dimensions, or choices of \(m\). The maximum over many directions will naturally be larger than a statistic computed along a single direction. Calibration is therefore an essential part of the procedure. We propose to use a simulation-based calibration under a reference model. The simplest reference is the Gaussian model
\[
H_0^G : \; \X \sim N_p(\boldmu,\Sigma).
\]
Let \(\widehat{\boldmu}\) and \(\widehat\Sigma\) be estimates of location and scatter. Then, for bootstrap iterations \(b=1,\ldots,B\), one can generate a reference sample
\[
\X_1^{(b)},\ldots,\X_n^{(b)} \sim N_p(\widehat{\boldmu},\widehat\Sigma),
\]
and compute the corresponding reference statistics \(\widehat S_{\mathcal U_m,\mathcal A_S}^{(b)},\;\widehat T_{\mathcal U_m}^{(b)}\) using the same direction-generation scheme and the same directional functionals as for the observed data. Let \(\widehat S_{\mathcal U_m,\mathcal A_S}^{\mathrm{obs}}\) and \(\widehat T_{\mathcal U_m}^{\mathrm{obs}}\) denote the skewness and tail-ratio departure statistics for the observed sample. The calibrated \(p\)-values can then be calculated as
\begin{equation}
    p_S = \frac{1+\sum_{b=1}^B \mathbf 1\{\widehat S_{\mathcal U_m,\mathcal A_S}^{(b)}\geqslant \widehat S_{\mathcal U_m,\mathcal A_S}^{\mathrm{obs}}\}}
    {B+1}, \quad
    p_T = \frac{1+\sum_{b=1}^B\mathbf 1\{\widehat T_{\mathcal U_m}^{(b)}\geqslant \widehat T_{\mathcal U_m}^{\mathrm{obs}}\}}
    {B+1}.
    \label{eq:calibrated_p_val}
\end{equation}
Note that the addition of one in numerator and denominator avoids zero Monte Carlo \(p\)-values. 

When the dimension of the observations (\(p\)) is comparable to or larger than \(n\), the sample covariance matrix may be singular or poorly conditioned. In the Gaussian calibration we therefore use a regularized scatter estimate. A simple choice is a linear shrinkage estimator
\[
\widehat\Sigma_\rho=(1-\rho)\widehat\Sigma+\rho T,
\qquad 0\leqslant \rho \leqslant 1,
\]
where \(T\) may be \(\operatorname{diag}(\widehat\Sigma)\) or a multiple of the identity matrix. Shrinkage estimators of this type are standard in high-dimensional covariance estimation \citep{ledoit2004well,chen2010shrinkage}. When sparsity of the covariance matrix is plausible, thresholded covariance estimators provide another option \citep{bickel2008covariance,cai2011adaptive}. For heavy-tailed or elliptical calibration, robust scatter or shape estimators may be preferable, including classical \(M\)-estimators of scatter and Tyler's shape estimator \citep{maronna1976robust,tyler1987distribution}. The calibration step only requires a positive definite scatter or shape estimate; the same estimate is used throughout the bootstrap reference generation.

Gaussian calibration is useful when the working question is whether the data depart from Gaussian behavior through skewness or tail heaviness. In some applications, a broader reference class may be more appropriate. For example, one may wish to test directional asymmetry while allowing symmetric heavy tails. In that case, the skewness diagnostic can be calibrated under an elliptical reference model. While Gaussian calibration treats heavy tails as a departure, elliptical calibration uses them as part of the reference law when testing for asymmetry. For elliptical calibration, we propose to use the following semi-parametric bootstrap. Let \(\widehat{\boldtheta}\) be a robust location estimate, such as the spatial median, and let \(\widehat\Sigma\) be a positive definite scatter or shape estimate. In high-dimensional settings, \(\widehat\Sigma\) may be replaced by a regularized scatter estimate. Compute whitened residuals \(\bm{Z}_i=\widehat\Sigma^{-1/2}(\X_i-\widehat{\boldtheta})\) and radial components \(R_i=\norm{\bm{Z}_i}\). A bootstrap reference sample is generated by drawing \(R_i^\ast\) with replacement from \(R_1,\ldots,R_n\), drawing \(\u_i^\ast\) independently and uniformly from \(\mathbb{S}^{p-1}\), and setting
\[
\X_i^\ast=\widehat{\boldtheta}+\widehat\Sigma^{1/2}R_i^\ast \u_i^\ast .
\]
This preserves the empirical radial distribution while imposing spherical symmetry after whitening. We highlight here that the calibration is intended to be a diagnostic reference distribution only.

%\sd{this part maybe discussed directly in Sec 3.4?}

The calibrated values \(p_S\) and \(p_T\) give a two-dimensional diagnostic for classification into the following classes: symmetric benchmark-tail, symmetric tail-departed, skewed benchmark-tail, and skewed tail-departed. This classification should be used as a guide for subsequent modeling. For instance, a significant tail diagnostic with little evidence of skewness suggests that a symmetric heavy-tailed elliptical model may be adequate. Evidence in both directions points instead toward skewed heavy-tailed families, such as skew-\(t\) or related skew-elliptical models. We present this in more detail in Section~\ref{subsec:four_regime_consistency}.

% At a nominal level \(\alpha\), the rule is
% \[
% \widehat C =
% \begin{cases}
% 	\text{light-tailed symmetric}, 
% 	& p_S>\alpha,\ p_T>\alpha,\\[3pt]
% 	\text{symmetric heavy-tailed}, 
% 	& p_S>\alpha,\ p_T\leqslant \alpha,\\[3pt]
% 	\text{skewed light-tailed}, 
% 	& p_S\leqslant \alpha,\ p_T>\alpha,\\[3pt]
% 	\text{skewed heavy-tailed}, 
% 	& p_S\leqslant \alpha,\ p_T\leqslant \alpha.
% \end{cases}
% \]

\begin{remark}
	There might be cases where non-Gaussianity is concentrated in a few directions. The use of maxima makes the procedure sensitive to such localized departures. It also means that calibration must account for the number and type of directions searched over. We also note that the robust quantile-based version is often preferable in the present setting. Moment skewness and kurtosis are informative when moments exist and sample sizes are large, but they can be dominated by a small number of extreme observations. The quantile version sacrifices some efficiency under exact Gaussianity; however, it gives a more stable separation between asymmetry and tail-ratio departure in the heavy-tailed regimes that motivate this paper.
\end{remark}

\section{Theoretical results}
\label{sec:theory}

We next study the theoretical properties of the projection-based diagnostics introduced in Section~\ref{sec:framework}. First, we formalize the population-level separation between directional asymmetry and directional tail-ratio departure. We then explain when a finite collection of projections can detect a population departure. The results also provide finite-sample control for the empirical quantile-based diagnostics. Finally, we examine a stylized sparse alternative to clarify why random directions and coordinate directions can play complementary roles in high dimensions.

We focus on two scale-free population functionals discussed in the previous section. The first is the Bowley-type directional skewness \(\gamma_a(\u)\), and the second is the quantile tail-ratio \(\tau_q(\u)\). Both are invariant under location and positive scale transformations of the projected data. This invariance is useful because different directions may have different projected variances even under a purely elliptical model. 

\subsection{Population separation under symmetry and ellipticity}
\label{subsec:population-separation}

The first result records the fact that central symmetry of the multivariate law is inherited by every one-dimensional projection. Hence directional skewness must vanish simultaneously in all directions. This observation is consistent with the classical projection-based skewness literature; see, for example, \citet{malkovich1973tests,baringhaus1991limit}.

\begin{lemma}[Directional skewness under central symmetry]
	\label{lem:central_symmetry_skewness}
	Suppose \(\X-\boldtheta\stackrel{d}{=}-(\X-\boldtheta)\). Then, for every \(\u\in \mathbb{S}^{p-1}\), the projected variable \(Y(\u)=\u^\top(\X-\boldtheta)\) is symmetric about zero. If the projected quantiles \(Q_a(\u)\), \(Q_{0.50}(\u)\), and \(Q_{1-a}(\u)\) are unique and \(Q_{1-a}(\u)>Q_a(\u)\), then \(\gamma_{a}(\u)=0\) for every \(a\in\mathcal A_S\). If, in addition, \(E|Y(\u)|^3<\infty\), then the moment directional skewness also vanishes, namely \(E(Y(\u)^3)/\{E(Y(\u)^2)\}^{3/2}=0\), provided \(E(Y(\u)^2)>0\).
\end{lemma}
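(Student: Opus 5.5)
The argument has three steps, taken in the order of the statement: first show that each projection is symmetric, then derive the quantile identity, then handle the moment version.

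\textbf{Step 1: symmetry of projections.} Fix \(\u\in\mathbb{S}^{p-1}\). The map \(\bm x\mapsto \u^\top\bm x\) is measurable and linear. Applying it to both sides of \(\X-\boldtheta\stackrel{d}{=}-(\X-\boldtheta)\) gives
\[
Y(\u)\stackrel{d}{=}\u^\top\{-(\X-\boldtheta)\}=-Y(\u).
\]
So \(F_{\u}(y)=\Prob(Y(\u)\leqslant y)=\Prob(-Y(\u)\leqslant y)=1-\Prob(Y(\u)<-y)\) for every \(y\). This identity is what relates the quantile levels \(a\) and \(1-a\).

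\textbf{Step 2: the quantile identity.} The goal is to show \(Q_{1-a}(\u)=-Q_a(\u)\) and \(Q_{0.50}(\u)=0\). This is the only step that needs care, because the generalized inverse \(Q_a=\inf\{y:F_{\u}(y)\geqslant a\}\) is not exactly symmetric when \(F_{\u}\) has flat pieces. The uniqueness hypothesis is there to remove that problem.

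I would read ``the \(a\)-quantile is unique'' as saying that \(Q_a(\u)\) is the only \(y\) with \(\Prob(Y(\u)<y)\leqslant a\leqslant \Prob(Y(\u)\leqslant y)\). Suppose \(y\) satisfies this at level \(a\). By Step 1, \(-y\) then satisfies
\[
\Prob(Y(\u)<-y)=1-F_{\u}(y)\leqslant 1-a
\]
and
\[
\Prob(Y(\u)\leqslant -y)=1-\Prob(Y(\u)<y)\geqslant 1-a.
\]
So \(-y\) is a \((1-a)\)-quantile. Uniqueness at level \(1-a\) then gives \(Q_{1-a}(\u)=-Q_a(\u)\).

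Taking \(a=1/2\), the same argument shows that \(Q_{0.50}(\u)\) and \(-Q_{0.50}(\u)\) are both medians. Uniqueness of the median forces \(Q_{0.50}(\u)=0\).

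It follows that the numerator \(Q_{1-a}+Q_a-2Q_{0.50}\) equals \(0\). The denominator \(Q_{1-a}-Q_a\) is positive by assumption, so \(\gamma_a(\u)=0\). This holds for every \(a\in\mathcal A_S\subset(0,1/2)\).

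\textbf{Step 3: the moment version.} Since \(E|Y(\u)|^3<\infty\), the third moment \(E\,Y(\u)^3\) is finite. By Step 1, \(Y(\u)^3\stackrel{d}{=}(-Y(\u))^3=-Y(\u)^3\). Hence \(E\,Y(\u)^3=-E\,Y(\u)^3\), so \(E\,Y(\u)^3=0\).

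The second moment is finite by Lyapunov's inequality and positive by hypothesis. The ratio defining the moment skewness is therefore well defined and equals zero.

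The main obstacle is Step 2, the handling of the generalized inverse. If the uniqueness assumption is understood instead as strict monotonicity of \(F_{\u}\) near the relevant points, I would adapt the argument. I would show that \(F_{\u}(-y-)=1-F_{\u}(y)\) and that, by strict monotonicity, the infimum is attained at a point symmetric to the \((1-a)\)-quantile.
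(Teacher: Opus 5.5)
Your proposal is correct and follows the same route as the paper: project the distributional identity to get \(Y(\u)\stackrel{d}{=}-Y(\u)\), combine symmetry with uniqueness to get \(Q_{0.50}(\u)=0\) and \(Q_{1-a}(\u)=-Q_a(\u)\), and use \(E\,Y(\u)^3=-E\,Y(\u)^3\) for the moment version. Your Step 2 spells out the generalized-inverse argument that the paper compresses into ``by symmetry and uniqueness of the relevant projected quantiles''. The one fact you use without stating it is that \(Q_b(\u)=\inf\{y:F_{\u}(y)\geqslant b\}\) itself satisfies \(\Prob(Y(\u)<Q_b(\u))\leqslant b\leqslant F_{\u}(Q_b(\u))\), which follows from right-continuity of \(F_{\u}\).
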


\begin{proof}
	Fix \(\u\in \mathbb{S}^{p-1}\). Since \(\X-\boldtheta\stackrel{d}{=}-(\X-\boldtheta)\), taking inner products with $\u$ gives \(Y(\u)\stackrel{d}{=}-Y(\u)\). Thus \(Y(\u)\) is symmetric about zero. By symmetry and uniqueness of the relevant projected quantiles, \(Q_{0.50}(\u)=0,\,Q_{1-a}(\u)=-Q_a(\u).\) Therefore \(Q_{1-a}(\u)+Q_a(\u)-2Q_{0.50}(\u)=0,\) and hence \(\gamma_{a}(\u)=0\).
    
	For the moment statement, \(Y(\u)\stackrel{d}{=}-Y(\u)\) implies \(E(Y(\u)^3)=E\{(-Y(\u))^3\}=-E(Y(\u)^3)\), whenever the third absolute moment is finite. Hence, \(E(Y(\u)^3)=0\), thereby completing the proof of the lemma.
\end{proof}

We next record the corresponding tail-weight benchmark under ellipticity. The fact that one-dimensional projections of an elliptical distribution differ only by scale is standard; see, for example,  \citet{cambanis1981elliptically} and \citet{fang1990symmetric}. 

Suppose \(\X\) admits the elliptical stochastic representation $\X=\boldtheta+RA\bm{U}$, where \(AA^\top=\Sigma\), \(R\geqslant 0\) is a scalar radial variable, and \(\bm{U}\) is uniformly distributed on \(\mathbb{S}^{p-1}\), independently of \(R\).  This representation is used with the convention that deterministic projection directions \(\u\in\mathbb{S}^{p-1}\) satisfying \(\u^\top\Sigma \u=0\) are degenerate and are excluded.

\begin{lemma}[Scale invariance of projected tail ratios under ellipticity]
	\label{lem:elliptical_tail_ratio}
	Suppose $\X=\boldtheta+RA\bm{U}$ is elliptically distributed as above. For every \(\u\in \mathbb{S}^{p-1}\) with \(\sigma_{\u}^2=\u^\top\Sigma \u>0\), there exists a scalar random variable \(Z_0\), not depending on $\u$, such that $\sigma_{\u} Z_0$ has the same distribution as \(\u^\top(\X-\boldtheta)\). Consequently, \(\tau_q(\u)\) is constant over all nondegenerate directions. That is, there exists a number \(\tau_q^{\rm ell}\), depending on the radial law and on \(q\), such that \(\tau_q(\u)=\tau_q^{\rm ell}\) whenever \(\u^\top\Sigma \u>0\).
\end{lemma}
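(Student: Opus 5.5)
The plan is to reduce everything to the rotational invariance of the uniform law on the sphere. Fix \(\u\in\mathbb{S}^{p-1}\) with \(\sigma_{\u}^2=\u^\top\Sigma\u>0\). From the representation \(\X=\boldtheta+RA\bm U\) we get \(\u^\top(\X-\boldtheta)=R\,(A^\top\u)^\top\bm U\). Set \(\bm v=A^\top\u\), so that \(\norm{\bm v}^2=\u^\top AA^\top\u=\sigma_{\u}^2>0\), and write \(\bm v=\sigma_{\u}\bm w\) with \(\bm w=\bm v/\norm{\bm v}\in\mathbb{S}^{p-1}\). Then \(\u^\top(\X-\boldtheta)=\sigma_{\u}\,R\,\bm w^\top\bm U\).

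Next I will show that \(\bm w^\top\bm U\stackrel{d}{=}\bm e_1^\top\bm U=U_1\) for every unit vector \(\bm w\). To see this, pick an orthogonal matrix \(O\) with \(O^\top\bm e_1=\bm w\). Then \(\bm w^\top\bm U=\bm e_1^\top O\bm U\), and \(O\bm U\stackrel{d}{=}\bm U\) by orthogonal invariance of the uniform distribution on \(\mathbb{S}^{p-1}\). Because \(R\) is independent of \(\bm U\), and \(O\) is deterministic, the pair \((R,O\bm U)\) has the same joint law as \((R,\bm U)\). Hence \(R\,\bm w^\top\bm U\stackrel{d}{=}R\,U_1=:Z_0\). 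This \(Z_0\) depends only on the law of \(R\) and on \(p\), and not on \(\u\). It follows that \(\u^\top(\X-\boldtheta)\stackrel{d}{=}\sigma_{\u}Z_0\). The one point needing care is this joint-law step, since independence is what lets us rotate \(\bm U\) while keeping \(R\) fixed.

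For the tail-ratio claim, I will use quantile equivariance under positive scaling: if \(Y\stackrel{d}{=}cZ_0\) with \(c>0\), then \(F_Y(y)=F_{Z_0}(y/c)\). Plugging this into \(Q_a=\inf\{y:F(y)\geqslant a\}\) gives \(Q_a^{Y}=cQ_a^{Z_0}\) for every \(a\in(0,1)\), without any uniqueness assumption. Therefore
\[
\tau_q(\u)=\frac{\sigma_{\u}\{Q^{Z_0}_{1-q}-Q^{Z_0}_q\}}{\sigma_{\u}\{Q^{Z_0}_{0.75}-Q^{Z_0}_{0.25}\}}=\frac{Q^{Z_0}_{1-q}-Q^{Z_0}_q}{Q^{Z_0}_{0.75}-Q^{Z_0}_{0.25}}=:\tau_q^{\rm ell},
\]
which depends only on the radial law (through \(Z_0\)) and on \(q\).

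The main obstacle is making sure the ratio is well defined, that is, the interquartile range of \(Z_0\) is positive. If \(R=0\) almost surely, the distribution is degenerate, and I would exclude this case in the same way the convention excludes degenerate directions. Otherwise \(Z_0=RU_1\) is symmetric about zero, and its IQR could still vanish if \(\Prob(R>0)\leqslant 1/2\). I would handle this by stating that \(\tau_q^{\rm ell}\) is defined whenever \(Q^{Z_0}_{0.75}>Q^{Z_0}_{0.25}\), matching the standing requirement "whenever the denominator is positive". Under that requirement the constant exists and is common to all nondegenerate directions.
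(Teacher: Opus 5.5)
Your proof is correct and follows the paper's argument. Like the paper, you write \(\u^\top(\X-\boldtheta)=R\,(A^\top\u)^\top\bm U\), use rotational invariance of \(\bm U\) to get \(\u^\top(\X-\boldtheta)\stackrel{d}{=}\sigma_{\u}RU_1=\sigma_{\u}Z_0\), and cancel \(\sigma_{\u}\) in the quantile ratio. Three points you make explicit are left implicit in the paper, and they are refinements rather than a different route: the independence of \(R\) and \(\bm U\) in the joint-law step, quantile equivariance without a uniqueness assumption, and the requirement \(Q^{Z_0}_{0.75}>Q^{Z_0}_{0.25}\) for \(\tau_q^{\rm ell}\) to exist, which fails exactly when \(\Prob(R>0)\leqslant 1/2\).
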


\begin{proof}
	Fix a direction $\u$ with \(\sigma_{\u}^2=\u^\top\Sigma \u>0\), and put \(\bm{g}=A^\top \u\). Then \(\norm{\bm{g}}^2=\u^\top AA^\top \u=\u^\top\Sigma \u=\sigma_{\u}^2\). Since $\u$ is uniform on the sphere, rotational invariance gives \(\bm{g}^\top \bm{U}\stackrel{d}{=}\norm{\bm{g}} \bm{e}_1^\top \bm{U}=\sigma_{\u} e_1^\top \bm{U}\). Therefore
	\(\u^\top(\X-\boldtheta)=R \bm{g}^\top \bm{U}\stackrel{d}{=}\sigma_{\u} R(\bm{e}_1^\top \bm{U}).\) Let \(Z_0=R(\bm{e}_1^\top \bm{U})\), which proves the scale representation.
	
	Let \(Q_a^0\) be the \(a\)-quantile of \(Z_0\). Since \(\sigma_{\u}>0\), the quantiles of \(\sigma_{\u} Z_0\) are \(\sigma_{\u} Q_a^0\). Hence, the common factor \(\sigma_{\u}\) cancels in the ratio defining \(\tau_q(\u)\), giving
	\[
	\tau_q(\u) = \frac{Q_{1-q}^0-Q_q^0}{Q_{0.75}^0-Q_{0.25}^0},
	\]
	which does not depend on $\u$.
\end{proof}

\begin{corollary}[Gaussian and Student-\(t\) benchmarks]
	\label{cor:gaussian_t_benchmarks}
	Under \(N_p(\boldtheta,\Sigma)\), every nondegenerate standardized projection is standard normal, and therefore
	\[
	\tau_q^N = \frac{\Phi^{-1}(1-q)-\Phi^{-1}(q)} {\Phi^{-1}(0.75)-\Phi^{-1}(0.25)}.
	\]
	Under the usual multivariate Student-\(t_\nu\) model with location $\boldtheta$ and scatter \(\Sigma\), every nondegenerate standardized projection is univariate \(t_\nu\), and hence
	\[
	\tau_q^{t_\nu} = \frac{t_\nu^{-1}(1-q)-t_\nu^{-1}(q)}{t_\nu^{-1}(0.75)-t_\nu^{-1}(0.25)}.
	\]
\end{corollary}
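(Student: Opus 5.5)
The plan is to obtain both formulas as direct specializations of Lemma~\ref{lem:elliptical_tail_ratio}. Lemma~\ref{lem:elliptical_tail_ratio} shows that, for any elliptical law, every nondegenerate projection is distributed as \(\sigma_{\u}Z_0\). The same lemma gives \(\tau_q(\u)\) as the ratio of quantile spreads of \(Z_0\). So the only task is to identify the law of \(Z_0\), or of the standardized projection \(\u^\top(\X-\boldtheta)/\sigma_{\u}\), in the two models.

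For the Gaussian case I will argue directly rather than through the radial representation. If \(\X\sim N_p(\boldtheta,\Sigma)\), then \(\u^\top(\X-\boldtheta)\) is a linear function of a Gaussian vector. Hence it is \(N(0,\u^\top\Sigma\u)\), and dividing by \(\sigma_{\u}>0\) gives a standard normal variable. Its quantiles are \(\Phi^{-1}(a)\), which are unique since \(\Phi\) is continuous and strictly increasing. Substituting them into the expression for \(\tau_q(\u)\) from Lemma~\ref{lem:elliptical_tail_ratio} yields \(\tau_q^N\). Using \(\Phi^{-1}(q)=-\Phi^{-1}(1-q)\), this reduces to the earlier form \(\Phi^{-1}(1-q)/\Phi^{-1}(0.75)\).

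For the Student-\(t_\nu\) case I will use the standard scale-mixture representation \(\X=\boldtheta+W^{-1/2}\bm G\), where \(\bm G\sim N_p(0,\Sigma)\) and \(W\sim\chi^2_\nu/\nu\) is independent of \(\bm G\). Then \(\u^\top(\X-\boldtheta)/\sigma_{\u}=(\u^\top\bm G/\sigma_{\u})/\sqrt{W}\) is a standard normal over an independent \(\sqrt{\chi^2_\nu/\nu}\). By definition this is univariate \(t_\nu\). Its distribution function is continuous and strictly increasing, so the quantiles \(t_\nu^{-1}(a)\) are unique. Plugging them into Lemma~\ref{lem:elliptical_tail_ratio} gives \(\tau_q^{t_\nu}\).

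The main point needing care is connecting the ``usual'' multivariate \(t\) definition to the elliptical representation \(\boldtheta+RA\bm U\) assumed in Lemma~\ref{lem:elliptical_tail_ratio}. The normal scale mixture is elliptical with \(R=W^{-1/2}\|A^{-1}\bm G\|\), so the lemma applies. Even without that link, the direct computation above already establishes the claim, so the Lemma is only needed for the ratio-cancellation step. That step is the same positive-scale equivariance of quantiles used in its proof.
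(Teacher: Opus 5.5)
Your proposal is correct and follows essentially the paper's route. The paper invokes closure of the Gaussian and multivariate $t_\nu$ laws under linear projection, standardizes by the projected scale, and substitutes the resulting quantiles into $\tau_q$; your normal scale-mixture calculation spells out the $t_\nu$ step. One small caveat: the side remark $R=W^{-1/2}\|A^{-1}\bm G\|$ requires $\Sigma$ to be nonsingular (writing $\bm G=A\bm Z$ with $\bm Z\sim N_p(0,\identity_p)$ avoids this), but, as you note, the direct computation does not depend on it.
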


\begin{proof}
	Both statements follow from the standard closure of Gaussian and elliptical Student-\(t\) distributions under linear projection. After standardization by the projected scale, the projected Gaussian law is \(N(0,1)\), while the projected multivariate \(t_\nu\) law is univariate \(t_\nu\). Substituting the corresponding quantiles into the definition of \(\tau_q\) gives the two displayed expressions.
\end{proof}

Lemma~\ref{lem:central_symmetry_skewness} and Lemma~\ref{lem:elliptical_tail_ratio} provide the population justification for using the pair of projection summaries. Under central symmetry, all directional skewness values vanish. Under an elliptical law, the scale-free tail-ratio does not vary with direction, although its common value may be Gaussian or heavier than Gaussian depending on the radial distribution. Thus, the skewness diagnostic targets violation of central symmetry, while the tail diagnostic measures departure from a chosen tail benchmark. The next subsection studies when a finite set of directions can find such a population departure.

\subsection{Random-direction detectability}
\label{subsec:random-direction-detectability}

The population results above describe what the directional functionals measure when all directions are available. In practice, however, the diagnostic searches only over a finite set of directions. We first consider the case where the directions are sampled independently and uniformly from the unit sphere. This gives a transparent bound on the probability that a finite random search finds a localized population departure. The use of random projections as a way to probe high-dimensional structure is closely related to the broader literature on low-dimensional projections of high-dimensional data. \citet{hall1993almost} showed that many low-dimensional projections of high-dimensional data can exhibit approximately simple behavior under suitable standardization. Our use of random directions is though different in purpose. We use them as a finite search device for localized departures in directional skewness and directional tail-ratio departure. The spherical-cap argument below quantifies the probability that such a search comes sufficiently close to a signal-bearing direction.

Let \(\psi:\mathbb{S}^{p-1}\to\mathbb R\) be a generic population directional functional. In the applications below, \(\psi(\u)\) is either the directional skewness \(\gamma_a(\u)\), or the centered tail-ratio \(\tau_q(\u)-\tau_q^0\), where \(\tau_q^0\) is a reference value such as the Gaussian benchmark \(\tau_q^N\). We use the geodesic distance \(d_{\mathbb S}(\u,\bm{v})=\arccos(\u^\top \bm{v})\) on \(\mathbb{S}^{p-1}\), and write \(B_{\mathbb S}(\u_0,r)=\{\u\in \mathbb{S}^{p-1}:d_{\mathbb S}(\u,\u_0)\leqslant r\}\) for the spherical cap of radius \(r\) around \(\u_0\). Let \(\zeta_p \) denote the uniform probability measure on \(\mathbb{S}^{p-1}\). 

The following mild regularity condition says only that the directional functional does not change abruptly near a signal-bearing direction.

\begin{assumption}[Local directional regularity]
	\label{ass:lipschitz_directional}
	There exists \(\u_0\in \mathbb{S}^{p-1}\), \(r_0\in(0,\pi]\), and \(L<\infty\) such that \(\abs{\psi(\u)-\psi(\bm v)} \leqslant L\,d_{\mathbb S}(\u,\bm v)\) for all \(\u,\bm v\in B_{\mathbb S}(\u_0,r_0)\). 
\end{assumption}

For quantile-based functionals, this condition follows if the relevant projected quantile curves are locally Lipschitz in the direction and the projected inter-quartile range is bounded away from zero. We state this reduction after the main detection result. Similar regularity assumptions are standard in quantile-process arguments \citep[see, for example,][Chapter~21]{vandervaart1998asymptotic}.

\begin{theorem}[Detection of a localized directional departure]
	\label{thm:random_direction_detection}
	Let \(\psi:\mathbb{S}^{p-1}\to\mathbb R\) be a population directional functional. Suppose that \(\abs{\psi(\u_0)} \geqslant \Delta\) for some \(\u_0\in \mathbb{S}^{p-1}\), \(\Delta>0\). Suppose also that Assumption~\ref{ass:lipschitz_directional} holds in a neighborhood of \(\u_0\). If \(L=0\), set \(r_\Delta=r_0\). Otherwise, set \(r_\Delta=\min\{r_0,\Delta/(2L)\}\). If \(\u_1,\ldots,\u_m\) are independently chosen uniform random directions on \(\mathbb{S}^{p-1}\), then
    \[
	\Prob\left\{\max_{1\leqslant j\leqslant m} \abs{\psi(\u_j)} \geqslant \Delta/2\right\}
	\geqslant
	1-\left[1-\zeta_p \left\{B_{\mathbb S}(\u_0,r_\Delta)\right\}\right]^m .
	\]
\end{theorem}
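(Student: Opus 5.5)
The plan is to reduce the probabilistic statement to a deterministic one. First we show that every direction in the cap \(B_{\mathbb S}(\u_0,r_\Delta)\) already satisfies \(\abs{\psi(\u)}\geqslant \Delta/2\). Then we bound the probability that none of the \(m\) independent uniform directions falls in that cap.

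\emph{Deterministic step.} Fix \(\u\in B_{\mathbb S}(\u_0,r_\Delta)\). Since \(r_\Delta\leqslant r_0\), both \(\u\) and \(\u_0\) lie in \(B_{\mathbb S}(\u_0,r_0)\), where Assumption~\ref{ass:lipschitz_directional} applies. This gives
\[
\abs{\psi(\u)-\psi(\u_0)}\leqslant L\,d_{\mathbb S}(\u,\u_0)\leqslant L\,r_\Delta .
\]
If \(L=0\), then \(\psi(\u)=\psi(\u_0)\), so \(\abs{\psi(\u)}\geqslant\Delta\geqslant\Delta/2\). If \(L>0\), then \(L r_\Delta\leqslant L\cdot\Delta/(2L)=\Delta/2\). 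The reverse triangle inequality then yields \(\abs{\psi(\u)}\geqslant \abs{\psi(\u_0)}-\Delta/2\geqslant \Delta/2\). In both cases the cap is contained in the superlevel set \(\{\u:\abs{\psi(\u)}\geqslant\Delta/2\}\).

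\emph{Probabilistic step.} By this inclusion, the event \(\{\max_j\abs{\psi(\u_j)}\geqslant\Delta/2\}\) contains the event \(\bigcup_{j=1}^m\{\u_j\in B_{\mathbb S}(\u_0,r_\Delta)\}\). By independence and because each \(\u_j\) has law \(\zeta_p\),
\[
\Prob\Bigl\{\bigcap_{j=1}^m \{\u_j\notin B_{\mathbb S}(\u_0,r_\Delta)\}\Bigr\}=\bigl[1-\zeta_p\{B_{\mathbb S}(\u_0,r_\Delta)\}\bigr]^m .
\]
Taking complements gives the stated lower bound.

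The only real obstacle is measurability. The cap is closed, so it is Borel and \(\zeta_p\) of it is well defined. However, \(\psi\) is not assumed measurable away from the cap, so the event \(\{\max_j\abs{\psi(\u_j)}\geqslant\Delta/2\}\) may not be measurable. I would handle this by interpreting \(\Prob\) as an inner probability, or simply by noting that the bound holds for the measurable sub-event \(\bigcup_j\{\u_j\in B_{\mathbb S}(\u_0,r_\Delta)\}\). For the quantile functionals of interest, \(\psi\) is in any case measurable. A further remark: the cap measure is strictly positive whenever \(r_\Delta>0\), so the bound is nontrivial. This holds here because \(r_0>0\) and \(\Delta>0\).
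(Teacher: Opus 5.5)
Your proposal is correct and takes essentially the same route as the paper: the Lipschitz bound plus the reverse triangle inequality place the whole cap $B_{\mathbb S}(\u_0,r_\Delta)$ inside $\{\abs{\psi}\geqslant\Delta/2\}$, and independence of the uniform directions then gives the bound $1-[1-\zeta_p\{B_{\mathbb S}(\u_0,r_\Delta)\}]^m$ on the probability that some direction hits the cap. Your separate handling of $L=0$ and your measurability remark are harmless refinements that the paper's proof leaves implicit.
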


\begin{proof}
	Let \(A_\Delta=B_{\mathbb S}(\u_0,r_\Delta)\). For any \(\u\in A_\Delta\), Assumption~\ref{ass:lipschitz_directional} gives \(\abs{\psi(\u)-\psi(\u_0)} \leqslant Ld_{\mathbb S}(\u,\u_0) \leqslant Lr_\Delta \leqslant \Delta/2\). Hence, by the reverse triangle inequality, \(\abs{\psi(\u)} \geqslant \abs{\psi(\u_0)} - \abs{\psi(\u)-\psi(\u_0)} \geqslant \Delta/2\). Therefore, if at least one sampled direction belongs to \(A_\Delta\), then \(\max_{1\leqslant j\leqslant m}|\psi(\u_j)|\geqslant \Delta/2\). Since the directions are independent and uniform on \(\mathbb{S}^{p-1}\),
	\[
	\Prob\{\exists j:\u_j\in A_\Delta\}
	=
	1-\left\{1-\zeta_p (A_\Delta)\right\}^m.
	\]
	The claim follows.
\end{proof}

The theorem says that the departure need not be visible in many directions. It is enough that it persists in a spherical neighborhood of one direction. The probability of detection is controlled by the mass of that neighborhood and the number of random projections.

The spherical-cap probability has a standard expression. If $\u$ is uniform on \(\mathbb{S}^{p-1}\), then by rotational invariance, \(\zeta_p \{B_{\mathbb S}(\u_0,r)\}=\Prob(U_1\geqslant \cos r)\), where \(U_1\) is the first coordinate of $\u$. For \(p \geqslant 2\), \(U_1\) has density \(f_p(t)=c_p(1-t^2)^{(p-3)/2}\), \(-1<t<1\), with \(c_p=\Gamma(p/2)/\{\sqrt{\pi}\Gamma((p-1)/2)\}\). Hence
\[
\zeta_p \{B_{\mathbb S}(\u_0,r)\}
=
\int_{\cos r}^{1} c_p \left(1-t^2\right)^{(p-3)/2}\,dt .
\]
These facts are standard consequences of rotational invariance on the sphere and are closely related to concentration of measure on \(\mathbb{S}^{p-1}\); see, for example, \citet{ledoux2001concentration} and \citet{vershynin2018highdimensional}. Note that, for small \(r\), the cap probability can be very small, so a purely random search may require many directions when the signal region is narrow. Further, the theorem immediately implies the corresponding statements for the two diagnostics. 

\begin{corollary}[Random-direction detection of skewness]
	\label{cor:random_direction_skewness}
	Suppose that there exist \(\u_0\in S^{p-1}\) and \(a_0\in\mathcal A_S\) such that \(|\gamma_{a_0}(\u_0)|\geqslant \Delta_S>0.\) Suppose also that \(\u\mapsto \gamma_{a_0}(\u)\) is locally \(L_S\)-Lipschitz in a neighborhood of \(\u_0\). If \(\u_1,\ldots,\u_m\) are independent uniform
directions on \(S^{p-1}\), then
\[
\Prob\left\{
\max_{1\leqslant j\leqslant m}
\max_{a\in\mathcal A_S}
|\gamma_{a}(\u_j)|
\geqslant \Delta_S/2
\right\}
\ge
1-\left[1-\zeta_p\{B_{\mathbb S}(\u_0,r_S)\}\right]^m,
\]
where \(r_S=\min\{r_0,\Delta_S/(2L_S)\}\), with the convention \(r_S=r_0\) when \(L_S=0\).
\end{corollary}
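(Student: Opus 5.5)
The corollary is a direct specialization of Theorem~\ref{thm:random_direction_detection}. We apply that theorem to the single functional \(\psi(\u)=\gamma_{a_0}(\u)\) and then pass from the fixed level \(a_0\) to the maximum over \(\mathcal A_S\).

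First, we check the hypotheses of Theorem~\ref{thm:random_direction_detection} with \(\psi=\gamma_{a_0}\), \(\Delta=\Delta_S\) and \(L=L_S\). The signal condition \(|\psi(\u_0)|\geqslant\Delta_S\) is assumed directly. The local \(L_S\)-Lipschitz property of \(\u\mapsto\gamma_{a_0}(\u)\) on a neighborhood of \(\u_0\) is exactly Assumption~\ref{ass:lipschitz_directional}. For this we interpret the neighborhood as the cap \(B_{\mathbb S}(\u_0,r_0)\), and we take \(\gamma_{a_0}\) to be well defined there, i.e.\ \(Q_{1-a_0}(\u)>Q_{a_0}(\u)\) on the cap, so that \(\psi\) is real-valued where it is used. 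With these identifications, the radius \(r_\Delta\) of the theorem coincides with \(r_S=\min\{r_0,\Delta_S/(2L_S)\}\), including the convention \(r_S=r_0\) when \(L_S=0\). The theorem then gives
\[
\Prob\Bigl\{\max_{1\leqslant j\leqslant m}|\gamma_{a_0}(\u_j)|\geqslant \Delta_S/2\Bigr\}\geqslant 1-\bigl[1-\zeta_p\{B_{\mathbb S}(\u_0,r_S)\}\bigr]^m .
\]

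Second, we use the pointwise inequality \(\max_{a\in\mathcal A_S}|\gamma_a(\u_j)|\geqslant|\gamma_{a_0}(\u_j)|\), which holds because \(a_0\in\mathcal A_S\). Taking the maximum over \(j\) gives the event inclusion
\[
\Bigl\{\max_j|\gamma_{a_0}(\u_j)|\geqslant\Delta_S/2\Bigr\}\subseteq\Bigl\{\max_j\max_{a\in\mathcal A_S}|\gamma_a(\u_j)|\geqslant\Delta_S/2\Bigr\}.
\]
Monotonicity of probability then yields the stated bound.

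There is no substantive obstacle here. The only point needing care is measurability of the maxima and well-definedness of \(\gamma_a(\u_j)\) at the other levels \(a\neq a_0\). To handle this, we adopt the convention that \(|\gamma_a(\u)|\) is taken as \(0\) (or the term is omitted) wherever the denominator vanishes. This convention only shrinks the larger event, so it does not affect the inclusion, since the \(a_0\) term alone already ensures the event.
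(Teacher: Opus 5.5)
Your proposal is correct and follows the paper's route: the paper also derives this corollary by applying Theorem~\ref{thm:random_direction_detection} with \(\psi(\u)=\gamma_{a_0}(\u)\). It leaves implicit the pointwise bound \(|\gamma_{a_0}(\u_j)|\leqslant\max_{a\in\mathcal A_S}|\gamma_a(\u_j)|\), and hence the event inclusion, which you spell out; your readings of the neighborhood as \(B_{\mathbb S}(\u_0,r_0)\) and of well-definedness are sensible clarifications of points the paper glosses over.
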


% \begin{proof}
% 	Apply Theorem~\ref{thm:random_direction_detection} with \(\psi(\u)=\gamma_Q(\u)\).
% \end{proof}

\begin{corollary}[Random-direction detection of tail-ratio departure]
	\label{cor:random_direction_tail}
	Let \(\tau_q^0\) be a reference tail value. Suppose that \(\abs{\tau_q(\u_0)-\tau_q^0} \geqslant \Delta_T>0\) for some \(\u_0\in \mathbb{S}^{p-1}\), and that \(\u\mapsto\tau_q(\u)-\tau_q^0\) is locally \(L_T\)-Lipschitz in a neighborhood of \(\u_0\). If \(\u_1,\ldots,\u_m\) are independent uniform directions on \(\mathbb{S}^{p-1}\), then for \(r_T=\min\{r_0,\Delta_T/(2L_T)\}\),
	\[
	\Prob\left\{
	\max_{1\leqslant j\leqslant m} \abs{\tau_q(\u_j)-\tau_q^0}
	\geqslant \Delta_T/2
	\right\}
	\geqslant
	1-\left[1-\zeta_p \left\{B_{\mathbb S}(\u_0,r_T)\right\}\right]^m.
	\]
\end{corollary}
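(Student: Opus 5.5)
The plan is to obtain this corollary directly from Theorem~\ref{thm:random_direction_detection}, applied to the centered tail functional
\[
\psi(\u)=\tau_q(\u)-\tau_q^0 .
\]
The first step is to check the hypotheses of the theorem for this choice. The separation condition \(\abs{\psi(\u_0)}\geqslant\Delta_T\) is exactly the assumed lower bound on \(\abs{\tau_q(\u_0)-\tau_q^0}\), so we take \(\Delta=\Delta_T\). The local \(L_T\)-Lipschitz property of \(\u\mapsto\tau_q(\u)-\tau_q^0\) on a cap \(B_{\mathbb S}(\u_0,r_0)\) is precisely Assumption~\ref{ass:lipschitz_directional}, with \(L=L_T\) and the same radius \(r_0\). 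Since subtracting the constant \(\tau_q^0\) does not change differences \(\psi(\u)-\psi(\v)\), it makes no difference whether the Lipschitz condition is stated for \(\tau_q\) or for \(\tau_q-\tau_q^0\).

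With these identifications, the radius \(r_\Delta\) of the theorem becomes \(r_T=\min\{r_0,\Delta_T/(2L_T)\}\). We adopt the same convention as the theorem, \(r_T=r_0\) when \(L_T=0\); the statement of the corollary leaves this implicit, and the degenerate case should be mentioned explicitly. The directions \(\u_1,\ldots,\u_m\) are independent and uniform by hypothesis, so the theorem applies as stated. Its conclusion, with \(\psi(\u_j)=\tau_q(\u_j)-\tau_q^0\) substituted, is the displayed bound
\[
\Prob\left\{\max_{1\leqslant j\leqslant m}\abs{\tau_q(\u_j)-\tau_q^0}\geqslant \Delta_T/2\right\}
\geqslant
1-\left[1-\zeta_p\left\{B_{\mathbb S}(\u_0,r_T)\right\}\right]^m .
\]

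No step is a real obstacle; the only care needed is the bookkeeping of constants (\(r_0\), the \(L_T=0\) case, and the fact that the centering constant does not affect the Lipschitz condition). If a self-contained argument is preferred, the theorem's proof can be repeated verbatim. For every \(\u\) in the cap \(B_{\mathbb S}(\u_0,r_T)\), the reverse triangle inequality gives \(\abs{\tau_q(\u)-\tau_q^0}\geqslant \Delta_T-L_T r_T\geqslant\Delta_T/2\). The event that at least one of the \(m\) independent uniform directions lands in this cap has probability \(1-\{1-\zeta_p(B_{\mathbb S}(\u_0,r_T))\}^m\), and on that event the maximum is at least \(\Delta_T/2\).
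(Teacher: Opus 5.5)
Your proposal is correct and is the paper's argument: the paper obtains this corollary by applying Theorem~\ref{thm:random_direction_detection} with \(\psi(\u)=\tau_q(\u)-\tau_q^0\), \(\Delta=\Delta_T\) and \(L=L_T\). You are also right to state the convention \(r_T=r_0\) when \(L_T=0\) explicitly, since the statement of this corollary omits it while the companion skewness corollary includes it.
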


% \begin{proof}
% 	Apply Theorem~\ref{thm:random_direction_detection} with \(\psi(\u)=\tau_q(\u)-\tau_q^0\).
% \end{proof}

The proofs of both corollaries are straightforward from \Cref{thm:random_direction_detection}, by setting \(\psi(\u)=\gamma_{a}(\u)\) and \(\psi(\u)=\tau_q(\u)-\tau_q^0\), respectively. It remains to explain when the local Lipschitz assumption is reasonable for the quantile-based functionals. Rather than proving a general primitive condition here, which would require additional smoothness of the projected density as the direction varies, we state a simple sufficient reduction in terms of the projected quantile curves. 

For the finite-sample theory, we collect all quantile levels used by the skewness and tail diagnostics into a single set. Let \(\mathcal A_S\subset(0,1/2)\) be the finite set of skewness levels and let \(q\in(0,1/2)\) be the (fixed) tail level. Define \(\mathcal A^\star = \{0.50,0.25,0.75,q,1-q\} \cup \{a,1-a:a\in\mathcal A_S\}.\) Thus \(\mathcal A_S\) indexes the skewness functionals, while \(\mathcal A^\star\) is the finite set of quantile levels over which uniform empirical control is required.

\begin{lemma}[Lipschitz quantile curves imply Lipschitz diagnostics]
	\label{lem:lipschitz_bowley_tail}
	Let \(A\subset \mathbb{S}^{p-1}\) be a neighborhood of \(\u_0\). Suppose that, for each \(b\in\mathcal A^\star\), the map \(\u\mapsto Q_b(\u)\) is \(L_b\)-Lipschitz on \(A\) with respect to \(d_{\mathbb S}\). Suppose also that \(Q_{1-a}(\u)-Q_a(\u) \geqslant d_0>0 \) for every \(\u\in A\) and every \(a\in\mathcal A_S\), and that \(Q_{0.75}(\u)-Q_{0.25}(\u)\geqslant d_0>0\) for every \(\u\in A\). Finally, assume \(\max_{b\in\mathcal A^\star}\sup_{\u\in A} \abs{Q_b(\u)} \leqslant M_0. \) Then, for every \(a\in\mathcal A_S\), the map \(\u\mapsto\gamma_{a}(\u)\) is Lipschitz on \(A\), uniformly over \(a\in\mathcal A_S\). The map \(\u\mapsto\tau_q(\u)\) is also Lipschitz on \(A\).
\end{lemma}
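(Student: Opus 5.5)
The plan is to write each diagnostic as a ratio $N(\u)/D(\u)$ and use the elementary quotient bound
\[
\abs{\frac{N(\u)}{D(\u)}-\frac{N(\v)}{D(\v)}}
\leqslant \frac{\abs{N(\u)-N(\v)}}{D(\u)}+\frac{\abs{N(\v)}\,\abs{D(\u)-D(\v)}}{D(\u)D(\v)},
\]
valid whenever $D(\u),D(\v)>0$. This follows from adding and subtracting $N(\v)/D(\u)$. On $A$, the assumptions give $D\geqslant d_0$ in both denominators, so the right-hand side is at most
\[
d_0^{-1}\abs{N(\u)-N(\v)}+d_0^{-2}\abs{N(\v)}\abs{D(\u)-D(\v)}.
\]
It therefore suffices to show three things: that $N$ and $D$ are Lipschitz on $A$, and that $N$ is bounded on $A$.

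\textbf{Skewness.} Fix $a\in\mathcal A_S$ and take
\[
N_a(\u)=Q_{1-a}(\u)+Q_a(\u)-2Q_{0.50}(\u),\qquad D_a(\u)=Q_{1-a}(\u)-Q_a(\u).
\]
All levels involved lie in $\mathcal A^\star$.

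\begin{itemize}
\item By the triangle inequality, $N_a$ is Lipschitz with constant $L_{1-a}+L_a+2L_{0.50}$, and $D_a$ is Lipschitz with constant $L_{1-a}+L_a$.
\item From the bound $M_0$ we get $\abs{N_a(\v)}\leqslant 4M_0$. (In fact $\abs{N_a}\leqslant D_a$ by monotonicity of quantiles, but the cruder bound suffices.)
\end{itemize}

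Substituting into the quotient bound gives
\[
\abs{\gamma_a(\u)-\gamma_a(\v)}\leqslant\Bigl\{d_0^{-1}(L_{1-a}+L_a+2L_{0.50})+4M_0d_0^{-2}(L_{1-a}+L_a)\Bigr\}d_{\mathbb S}(\u,\v).
\]

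\textbf{Uniformity in $a$.} Set $L^\star=\max_{b\in\mathcal A^\star}L_b$, which is finite because $\mathcal A^\star$ is finite. The constant above is then bounded by
\[
4L^\star d_0^{-1}+8M_0L^\star d_0^{-2},
\]
which does not depend on $a$. This gives the claimed uniformity over $a\in\mathcal A_S$.

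\textbf{Tail ratio.} Take $N(\u)=Q_{1-q}(\u)-Q_q(\u)$ and $D(\u)=Q_{0.75}(\u)-Q_{0.25}(\u)$. Here $D\geqslant d_0$ by assumption, $\abs{N}\leqslant 2M_0$, and both $N$ and $D$ are $2L^\star$-Lipschitz. The quotient bound then yields the Lipschitz constant
\[
2L^\star d_0^{-1}+4M_0L^\star d_0^{-2}.
\]

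\textbf{Remaining points.} There is no real obstacle in this argument. The only care needed is to check that every quantile level used is included in $\mathcal A^\star$, which the definition of $\mathcal A^\star$ ensures. One should also note that the Lipschitz property is needed only for pairs $\u,\v\in A$, so we never leave the region where the lower bound $d_0$ on the denominators holds.
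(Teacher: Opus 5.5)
Your proposal is correct and follows the paper's proof essentially line by line. It uses the same ratio decompositions $N_a/D_a$ and $N_\tau/D$, the same add-and-subtract quotient bound with denominators controlled by $d_0$ and $d_0^2$, and the same numerator bounds $4M_0$ and $2M_0$. Your explicit uniform constant via $L^\star=\max_{b\in\mathcal A^\star}L_b$ simply makes concrete the uniformity over $a\in\mathcal A_S$, which the paper asserts from the finiteness of $\mathcal A_S$.
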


\begin{proof}
	Fix \(a\in\mathcal A_S\). Write \(N_a(\u)=Q_{1-a}(\u)+Q_a(\u)-2Q_{0.5}(\u)\), and \(D_a(\u)=Q_{1-a}(\u)-Q_a(\u)\), so that \(\gamma_{a}(\u)=N_a(\u)/D_a(\u)\). By the assumed Lipschitzness of the quantile curves,
	\[
	\abs{N_a(\u)-N_a(\v)} \leqslant \left(L_{1-a}+L_a+2L_{0.50}\right)\,d_{\mathbb S}(\u,\v),
	\]
	and \(\abs{D_a(\u)-D_a(\v)} \leqslant (L_{1-a}+L_a)\,d_{\mathbb S}(\u,\v)\). Also
	\(D_a(\u),D_a(\v)\geqslant d_0\), and \(|N_a(\v)|\leqslant 4M_0\). Hence
	\[
	\abs{\gamma_{a}(\u)-\gamma_{a}(\v)} \leqslant \frac{|N_a(\u)-N_a(\v)|}{d_0} + \frac{|N_a(\v)|\,|D_a(\u)-D_a(\v)|}{d_0^2}.
	\]
	The right-hand side is bounded by a constant multiple of \(d_{\mathbb S}(\u,\v)\). Since \(\mathcal A_S\) is finite and the constants \(L_b\), \(M_0\), and \(d_0\) are uniform over the relevant quantile levels, this gives Lipschitzness of \(\gamma_{a}\) on \(A\), uniformly over
	\(a\in\mathcal A_S\).

	The proof for \(\tau_q\) is identical. Put \(N_\tau(\u)=Q_{1-q}(\u)-Q_q(\u)\) and \(D(\u)=Q_{0.75}(\u)-Q_{0.25}(\u)\), so that \(\tau_q(\u)=N_\tau(\u)/D(\u)\). The assumptions give \(|N_\tau(\u)-N_\tau(v)|\leqslant (L_{1-q}+L_q)d_{\mathbb S}(\u,v)\), \(|D(\u)-D(v)|\leqslant (L_{0.75}+L_{0.25})d_{\mathbb S}(\u,v)\), \(D(\u),D(v)\geqslant d_0\), and \(|N_\tau(v)|\leqslant 2M_0\). Therefore the same quotient bound gives
	\[
	\abs{\tau_q(\u)-\tau_q(v)} \leqslant \frac{|N_\tau(\u)-N_\tau(v)|}{d_0} + \frac{|N_\tau(v)|\,|D(\u)-D(v)|}{d_0^2},
	\]
	which is again bounded by a constant multiple of \(d_{\mathbb S}(\u,v)\). This proves the claim.
\end{proof}

Theorem~\ref{thm:random_direction_detection} gives a useful but incomplete message. Increasing \(m\) improves the chance that at least one sampled direction lands close to a direction carrying signal, but the improvement depends on the spherical-cap probability, which decays quickly with dimension when the cap radius is small. This is the usual geometric cost of random search on a high-dimensional sphere. If asymmetry or tail heaviness is concentrated in a narrow set of directions, random projections may require many trials. This motivates the finite-sample analysis below and, later, the use of structured directions for sparse or otherwise localized alternatives.

\subsection{Uniform finite-sample control over sampled projections}
\label{subsec:finite_sample_control}

The preceding subsection was formulated in terms of population directional functionals. The empirical diagnostics are computed from projected sample quantiles, so we need to control the difference between the empirical and population versions uniformly over the searched directions. The argument is standard: it combines the Dvoretzky--Kiefer--Wolfowitz inequality \citep{dvoretzky1956asymptotic}, Massart's sharp constant \citep{massart1990tight}, and the usual inversion argument for empirical quantiles \citep[see, for example,][Chapter~21]{vandervaart1998asymptotic}. We include the details to make explicit the logarithmic price paid for searching over many directions.

%Let \(\mathcal U_m=\{\u_1,\ldots,\u_m\}\subset \mathbb{S}^{p-1}\) be a finite collection of directions. The directions may be fixed in advance or generated randomly independently of the data. Conditional on \(\mathcal U_m\), the results below are deterministic statements about this finite set. For \(\u\in\mathcal U_m\), let \(Y_i(\u)=\u^\top(\X_i-\boldtheta)\), \(i=1,\ldots,n\), and let \(F_{\u}\) be the distribution function of \(Y_i(\u)\). Write \(Q_a(\u)=\inf\{y:F_{\u}(y)\geqslant a\}\) for the population \(a\)-quantile and \(\widehat Q_a(\u)\) for the empirical \(a\)-quantile of \(Y_1(\u),\ldots,Y_n(\u)\). \sd{are the notations here repetitive and can be shortened?}

The finite-sample analysis treats the location parameter $\boldtheta$ as fixed. For the quantile-based calculations, replacing $\boldtheta$ by a common location estimate shifts all projected observations in direction $\u$ by the same scalar and therefore does not change the corresponding quantile ratios, apart from numerical conventions in the presence of ties. Moment-based skewness and kurtosis require separate treatment of centering error and are not covered by the quantile theory below.

\begin{assumption}[Uniform local quantile regularity]
	\label{ass:uniform_quantile_regular}
	Fix a finite set of skewness levels \(\mathcal A_S\subset(0,1/2)\) and a tail level \(q\in(0,1/2)\). There exist constants \(c_0>0\), \(C_0<\infty\), \(\eta_0>0\), \(d_0>0\) such that, for every \(\u\in\mathcal U_m\) and every \(b\in\mathcal A^\star\), \(F_{\u}(Q_b(\u))=b\), and \(F_{\u}\) has a density \(f_{\u}\) satisfying \(f_{\u}(y)\geqslant c_0\) for all \(y\in[Q_b(\u)-\eta_0,Q_b(\u)+\eta_0]\). In addition, \(\max_{b\in\mathcal A^\star}\max_{\u\in\mathcal U_m}|Q_b(\u)|\leqslant C_0\). The relevant denominators are uniformly bounded away from zero: \(Q_{0.75}(\u)-Q_{0.25}(\u)\geqslant d_0\) for every \(\u\in\mathcal U_m\), and \(Q_{1-a}(\u)-Q_a(\u)\geqslant d_0\) for every \(\u\in\mathcal U_m\) and every \(a\in\mathcal A_S\). All constants \(c_0,C_0,\eta_0\), and \(d_0\) are uniform over \(\u\in\mathcal U_m\) and over the quantile levels in \(\mathcal A^\star\).
\end{assumption}

The lower density condition is the usual quantile-stability condition: it prevents small vertical errors in the empirical distribution function from becoming large horizontal errors in the quantiles. The lower bound on the interquartile range keeps the denominators of the skewness and tail-ratio statistics away from zero.

\begin{lemma}[Uniform empirical quantile control]
	\label{lem:uniform_quantile_control}
	Suppose Assumption~\ref{ass:uniform_quantile_regular} holds. For every \(\eta\in(0,1)\), there is a constant \(C_Q<\infty\), depending only on \(c_0\), such that, with probability at least \(1-\eta\),
	\[
	\max_{u\in\mathcal U_m}\max_{b\in\mathcal A^\star} \abs{\widehat Q_b(\u)-Q_b(\u)} \leqslant C_Q\sqrt{\frac{\log\{2m|\mathcal A^\star|/\eta\}}{n}},
	\]
	provided the right hand side is smaller than \(\eta_0\).
\end{lemma}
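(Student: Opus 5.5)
The plan is to combine the Dvoretzky--Kiefer--Wolfowitz inequality with Massart's constant, a union bound over the finitely many directions, and the standard inversion argument for empirical quantiles. We work conditionally on \(\mathcal U_m\), which is fixed or drawn independently of the data.

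\emph{Step 1: concentration of the empirical distribution functions.} Fix \(\u\in\mathcal U_m\) and write \(\widehat F_{\u}\) for the empirical distribution function of \(Y_1(\u),\ldots,Y_n(\u)\). The DKW inequality with Massart's constant gives
\[
\Prob\Bigl\{\sup_y|\widehat F_{\u}(y)-F_{\u}(y)|>\varepsilon\Bigr\}\leqslant 2e^{-2n\varepsilon^2}.
\]
A union bound over the \(m\) directions then yields, with probability at least \(1-2me^{-2n\varepsilon^2}\),
\[
\max_{\u\in\mathcal U_m}\sup_y|\widehat F_{\u}(y)-F_{\u}(y)|\leqslant\varepsilon .
\]
We choose \(\varepsilon=\sqrt{\log(2m|\mathcal A^\star|/\eta)/(2n)}\). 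Since \(|\mathcal A^\star|\geqslant1\), the failure probability is at most \(\eta\). Including the factor \(|\mathcal A^\star|\) is harmless; it would also cover a level-by-level union bound using pointwise Hoeffding inequalities, which is an alternative route.

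\emph{Step 2: inversion.} On the event from Step 1, fix \(\u\) and \(b\in\mathcal A^\star\), and set \(\delta=\varepsilon/c_0+\xi\) for an arbitrarily small \(\xi>0\), with \(\delta<\eta_0\). By Assumption~\ref{ass:uniform_quantile_regular}, \(F_{\u}(Q_b(\u))=b\) and the density is at least \(c_0\) on the window around \(Q_b(\u)\). Hence
\[
F_{\u}(Q_b(\u)+\delta)\geqslant b+c_0\delta>b+\varepsilon,
\qquad
F_{\u}(Q_b(\u)-\delta)\leqslant b-c_0\delta<b-\varepsilon.
\]
Combining these with the Step 1 bound gives
\[
\widehat F_{\u}(Q_b(\u)+\delta)>b,
\qquad
\widehat F_{\u}(Q_b(\u)-\delta)<b .
\]
By the definition \(\widehat Q_b(\u)=\inf\{y:\widehat F_{\u}(y)\geqslant b\}\), the first inequality gives \(\widehat Q_b(\u)\leqslant Q_b(\u)+\delta\). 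The second, together with monotonicity of \(\widehat F_{\u}\), gives \(\widehat Q_b(\u)\geqslant Q_b(\u)-\delta\). Letting \(\xi\downarrow0\) yields
\[
|\widehat Q_b(\u)-Q_b(\u)|\leqslant \varepsilon/c_0 ,
\]
uniformly over \(\u\) and \(b\). This gives the lemma with \(C_Q=1/(\sqrt2\,c_0)\), which depends only on \(c_0\). The proviso that the right-hand side of the lemma is below \(\eta_0\) is exactly what keeps \(\delta\) inside the density window.

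\emph{Main obstacle.} The main care point is the inversion step. One must handle the left-continuity and infimum conventions of the empirical quantile, which is why the strict inequalities and the slack \(\xi\) are used. One must also ensure the perturbation \(\delta\) stays inside \([Q_b(\u)-\eta_0,Q_b(\u)+\eta_0]\), where the density lower bound applies. Everything else is a direct application of DKW and the union bound.
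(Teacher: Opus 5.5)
Your proposal is correct and takes essentially the same route as the paper. Both use DKW with Massart's constant, a union bound over the \(m\) directions with \(\varepsilon=\{\log(2m|\mathcal A^\star|/\eta)/(2n)\}^{1/2}\), and inversion of the empirical quantile via the local density lower bound. The only difference is cosmetic: the paper inverts with the perturbation \(2\varepsilon/c_0\) and appeals to ``\(n\) large enough'' for the window, giving \(C_Q=\sqrt{2}/c_0\), whereas your vanishing slack \(\xi\) gives the sharper \(C_Q=1/(\sqrt{2}\,c_0)\) and makes the proviso correspond exactly to the density window.
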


\begin{proof}
	Let \(\Delta_n=\{\log(2m|\mathcal A^\star|/\eta)/(2n)\}^{1/2}\). By the Dvoretzky--Kiefer--Wolfowitz inequality with Massart's constant, for each fixed $\u$, \(\Prob\{\sup_y|\widehat F_{\u}(y)-F_{\u}(y)|>\Delta_n\}\leqslant 2\exp(-2n\Delta_n^2)\). A union bound over \(\u\in\mathcal U_m\) gives, with probability at least \(1-\eta\), the event \(\max_{\u\in\mathcal U_m}\sup_y|\widehat F_{\u}(y)-F_{\u}(y)|\leqslant \Delta_n\). We keep the harmless factor \(|\mathcal A^\star|\) in the logarithm so that the later quantile bounds can be written in a uniform notation over the finite set of quantile levels.
	
	Work on this event and fix $\u$ and \(a\). Put \(r_n=2\Delta_n/c_0\). For \(n\) large enough, \(r_n \leqslant \eta_0\). Since \(F_{\u}\) is continuous at \(Q_a(\u)\) and has density at least \(c_0\) in the local neighborhood, \(F_{\u}(Q_a(\u)+r_n)\geqslant a+2\Delta_n\) and \(F_{\u}(Q_a(\u)-r_n)\leqslant a-2\Delta_n\). Therefore \(\widehat F_{\u}(Q_a(\u)+r_n)\geqslant a+\Delta_n\geqslant a\), while \(\widehat F_{\u}(Q_a(\u)-r_n)\leqslant a-\Delta_n<a\). Hence the empirical \(a\)-quantile lies in \([Q_a(\u)-r_n,Q_a(\u)+r_n]\). Thus \(|\widehat Q_a(\u)-Q_a(\u)|\leqslant 2\Delta_n/c_0\). Since the event is uniform over $\u$, the result follows.
\end{proof}

We now translate this quantile bound into control of the directional skewness and tail-ratio statistic. Recall the estimates \(\widehat\gamma_a(\u)\) and \(\widehat\tau_q(\u)\) defined in \eqref{eq:quantile_based_skewness_statistic} and \eqref{eq:quantile_based_tail_statistic}. 

\begin{theorem}[Uniform control of directional quantile skewness]
	\label{thm:uniform_quantile_skewness_concentration}
	Suppose Assumption~\ref{ass:uniform_quantile_regular} holds. Then there is a constant \(C_\gamma<\infty\), depending only on \(c_0\), \(C_0\), and \(d_0\), such that, for every \(\eta\in(0,1)\), with probability at least \(1-\eta\),
	\[
	\max_{\u\in\mathcal U_m}\max_{a\in\mathcal A_S} \abs{\widehat\gamma_{a}(\u)-\gamma_{a}(\u)} \leqslant C_\gamma
	\sqrt{\frac{\log\{2m \abs{\mathcal A^\star}/\eta\}}{n}}.
	\]
\end{theorem}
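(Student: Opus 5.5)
The plan is to work on the high-probability event of Lemma~\ref{lem:uniform_quantile_control} and propagate the uniform quantile error through the ratio defining \(\gamma_a\) by a deterministic perturbation bound. Write \(\varepsilon_n = C_Q\sqrt{\log\{2m|\mathcal A^\star|/\eta\}/n}\). On an event of probability at least \(1-\eta\), we have \(|\widehat Q_b(\u)-Q_b(\u)|\leqslant\varepsilon_n\) simultaneously for all \(\u\in\mathcal U_m\) and \(b\in\mathcal A^\star\). Because the levels \(a\), \(1-a\) and \(0.50\) all lie in \(\mathcal A^\star\), this single event controls every numerator and denominator. The proviso \(\varepsilon_n<\eta_0\) from the lemma will be handled as described below.

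Fix \(\u\) and \(a\), and use the notation \(N_a,D_a\) from the proof of Lemma~\ref{lem:lipschitz_bowley_tail}, with empirical counterparts \(\widehat N_a,\widehat D_a\). On the event,
\[
|\widehat N_a-N_a|\leqslant 4\varepsilon_n,\qquad |\widehat D_a-D_a|\leqslant 2\varepsilon_n .
\]
Assumption~\ref{ass:uniform_quantile_regular} gives \(D_a\geqslant d_0\) and \(|N_a|\leqslant 4C_0\). Since \(|\widehat N_a-N_a|\leqslant 4\varepsilon_n\) and \(\varepsilon_n\leqslant d_0\) will hold, we also get \(|\widehat N_a|\leqslant 4C_0+4d_0\). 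I then use the quotient identity
\[
\frac{\widehat N_a}{\widehat D_a}-\frac{N_a}{D_a}=\frac{\widehat N_a-N_a}{D_a}-\frac{\widehat N_a(\widehat D_a-D_a)}{\widehat D_a D_a}.
\]
Combining these facts gives a bound of the form \(|\widehat\gamma_a(\u)-\gamma_a(\u)|\leqslant \{4/d_0+4(2C_0+2d_0)/d_0^2\}\,\varepsilon_n\), after using \(\widehat D_a\geqslant d_0/2\) in the second term. This bound is uniform in \(\u\) and \(a\) because all the constants are uniform, so the maximum inherits it. The resulting constant \(C_\gamma\) depends only on \(c_0\) (through \(C_Q\)), \(C_0\) and \(d_0\), as claimed.

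The main obstacle is the denominator: we need \(\widehat D_a\geqslant d_0/2\), which holds once \(2\varepsilon_n\leqslant d_0/2\). Together with the proviso \(\varepsilon_n<\eta_0\) from Lemma~\ref{lem:uniform_quantile_control}, this is a smallness requirement on the rate. The statement does not impose it, so I would remove it by a case split. If \(\varepsilon_n\geqslant \min\{\eta_0,d_0/4\}\), the rate is bounded below by a constant, and the claim should follow by enlarging \(C_\gamma\).

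The natural bound to use in that case is \(|\gamma_a|\leqslant 1\). This holds whenever \(Q_a\leqslant Q_{0.5}\leqslant Q_{1-a}\), and the same is true for the empirical version. So \(|\widehat\gamma_a-\gamma_a|\leqslant 2\) holds deterministically, and choosing \(C_\gamma\geqslant 2/\min\{\eta_0,d_0/4\}\) covers this case. One caveat is that this makes \(C_\gamma\) depend on \(\eta_0\) as well. I expect that either the \(\eta_0\) dependence can be absorbed in the same way as in Lemma~\ref{lem:uniform_quantile_control}, or the theorem should be read with the same implicit proviso as that lemma. I would state the proof under that proviso and note the trivial large-rate case separately.
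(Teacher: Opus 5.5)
Your proposal is correct and is essentially the paper's proof. On the event of Lemma~\ref{lem:uniform_quantile_control} you bound $|\widehat N_a-N_a|\leqslant 4\varepsilon_n$ and $|\widehat D_a-D_a|\leqslant 2\varepsilon_n$, and use $D_a\geqslant d_0$, $\widehat D_a\geqslant d_0/2$ and $|N_a|\leqslant 4C_0$ in a quotient perturbation bound. The paper handles the smallness requirement just as your proviso does, by writing \emph{for $n$ large enough}, and your remark that a case split would bring $\eta_0$ into $C_\gamma$ is accurate.

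Only the constants slip. Since $\widehat D_aD_a\geqslant d_0^2/2$, your second term should be $16(C_0+d_0)/d_0^2$ rather than $8(C_0+d_0)/d_0^2$. Also, because the stated rate omits $C_Q$, the large-rate case needs $C_\gamma\geqslant 2C_Q/\min\{\eta_0,d_0/4\}$, and it needs $\widehat D_a>0$ for $\widehat\gamma_a$ to be defined. None of this affects the argument.
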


\begin{proof}
	By Lemma~\ref{lem:uniform_quantile_control}, with probability at least \(1-\eta\), uniformly over \(\u\in\mathcal U_m\), all empirical quantiles indexed by \(\mathcal A^\star\) are within \(\delta_n=C_Q\{\log(2m|\mathcal A^\star|/\eta)/n\}^{1/2}\) of their population counterparts. We work on this event.

	Fix \(\u\in\mathcal U_m\) and \(a\in\mathcal A_S\). Write \(N_a(\u)=Q_{1-a}(\u)+Q_a(\u)-2Q_{0.50}(\u)\) and \(D_a(\u)=Q_{1-a}(\u)-Q_a(\u)\), so that \(\gamma_{a}(\u)=N_a(\u)/D_a(\u)\). Define \(\widehat N_a(\u)\) and \(\widehat D_a(\u)\) analogously. The quantile-control event gives \(|\widehat N_a(\u)-N_a(\u)|\leqslant 4\delta_n\) and \(|\widehat D_a(\u)-D_a(\u)|\leqslant 2\delta_n\). By Assumption~\ref{ass:uniform_quantile_regular}, \(D_a(\u)\geqslant d_0\), and, for \(n\) large enough, \(\widehat D_a(\u)\geqslant d_0/2\). Also \(|N_a(\u)|\leqslant 4C_0\).

	Using
	\[
	\left|
	\frac{\widehat N_a(\u)}{\widehat D_a(\u)}
	-
	\frac{N_a(\u)}{D_a(\u)}
	\right|
	\le
	\frac{|\widehat N_a(\u)-N_a(\u)|}{\widehat D_a(\u)}
	+
	|N_a(\u)|
	\frac{|\widehat D_a(\u)-D_a(\u)|}{\widehat D_a(\u)D_a(\u)},
	\]
	we obtain
	\[
    \abs{\widehat\gamma_{a}(\u)-\gamma_{a}(\u)} \leqslant \left(\frac{8}{d_0}+\frac{16C_0}{d_0^2}\right)\delta_n .
	\]
	The bound is uniform over \(\u\in\mathcal U_m\) and \(a\in\mathcal A_S\). Thus one may take \(C_\gamma=C_Q(8/d_0+16C_0/d_0^2)\), up to an absolute numerical factor.
\end{proof}

\begin{theorem}[Uniform control of directional tail ratios]
	\label{thm:uniform_tail_concentration}
	Suppose Assumption~\ref{ass:uniform_quantile_regular} holds. Then there is a constant \(C_\tau<\infty\), depending only on \(c_0\), \(C_0\), and \(d_0\), such that, for every \(\eta\in(0,1)\), with probability at least \(1-\eta\),
	\[
	\max_{u\in\mathcal U_m} \abs{\widehat\tau_q(\u)-\tau_q(\u)} \leqslant C_\tau \sqrt{\frac{\log\{2m|\mathcal A^\star|/\eta\}}{n}}.
	\]
	Consequently, for any fixed reference value \(\tau_q^0\),
	\[
	\abs{\max_{u\in\mathcal U_m} \abs{\widehat\tau_q(\u)-\tau_q^0} - \max_{u\in\mathcal U_m} \abs{\tau_q(\u)-\tau_q^0}} \leqslant C_\tau \sqrt{\frac{\log\{2m|\mathcal A^\star|/\eta\}}{n}}
	\]
	with probability at least \(1-\eta\).
\end{theorem}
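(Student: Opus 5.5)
The plan mirrors the proof of Theorem~\ref{thm:uniform_quantile_skewness_concentration}. First, invoke Lemma~\ref{lem:uniform_quantile_control} to work on an event of probability at least \(1-\eta\). On this event,
\[
\max_{\u\in\mathcal U_m}\max_{b\in\mathcal A^\star}\abs{\widehat Q_b(\u)-Q_b(\u)}\leqslant \delta_n:=C_Q\{\log(2m|\mathcal A^\star|/\eta)/n\}^{1/2}.
\]
The levels \(q,1-q,0.25,0.75\) all lie in \(\mathcal A^\star\), so this single event covers every quantile entering \(\tau_q\). All later bounds are deterministic consequences of this one event. Hence no further union bound is needed, and both displayed claims hold simultaneously with probability at least \(1-\eta\).

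Next, fix \(\u\in\mathcal U_m\) and write \(\tau_q(\u)=N_\tau(\u)/D(\u)\), where \(N_\tau=Q_{1-q}-Q_q\) and \(D=Q_{0.75}-Q_{0.25}\). Define \(\widehat N_\tau,\widehat D\) analogously. On the event,
\[
|\widehat N_\tau-N_\tau|\leqslant 2\delta_n,\qquad |\widehat D-D|\leqslant 2\delta_n .
\]
Assumption~\ref{ass:uniform_quantile_regular} gives \(D(\u)\geqslant d_0\) and \(|N_\tau(\u)|\leqslant 2C_0\). The main technical point is keeping \(\widehat D(\u)\) away from zero. I would handle it as in the skewness theorem: once \(2\delta_n\leqslant d_0/2\), we have \(\widehat D(\u)\geqslant d_0/2\). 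This is a requirement that \(n\) be large relative to \(\log(2m|\mathcal A^\star|/\eta)\), in addition to the condition \(\delta_n<\eta_0\) already needed in Lemma~\ref{lem:uniform_quantile_control}.

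If the constant-only statement is to hold for all \(n\), one can enlarge \(C_\tau\) to make the bound trivial in the small-\(n\) regime. The ratios themselves are not bounded a priori, though, so I would simply carry the same ``\(n\) large enough'' proviso implicit in Theorem~\ref{thm:uniform_quantile_skewness_concentration}. The quotient inequality then gives
\[
|\widehat\tau_q(\u)-\tau_q(\u)|\leqslant \frac{2\delta_n}{d_0/2}+2C_0\frac{2\delta_n}{(d_0/2)d_0}=\Bigl(\frac{4}{d_0}+\frac{8C_0}{d_0^2}\Bigr)\delta_n .
\]
This bound is uniform in \(\u\), so we may take \(C_\tau=C_Q(4/d_0+8C_0/d_0^2)\).

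For the consequence, use two elementary facts. For each \(\u\), the reverse triangle inequality gives
\[
\bigl||\widehat\tau_q(\u)-\tau_q^0|-|\tau_q(\u)-\tau_q^0|\bigr|\leqslant|\widehat\tau_q(\u)-\tau_q(\u)|.
\]
In addition, for finite families, \(|\max_j x_j-\max_j y_j|\leqslant\max_j|x_j-y_j|\). Combining these with the first bound, on the same event, yields the second display.
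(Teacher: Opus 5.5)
Your proposal is correct and follows the paper's proof essentially line by line. You work on the single event of Lemma~\ref{lem:uniform_quantile_control}, bound the numerator and denominator errors by $2\delta_n$, and use $D(\u)\geqslant d_0$, $\widehat D(\u)\geqslant d_0/2$ and $|N_\tau(\u)|\leqslant 2C_0$ in the quotient inequality to get $C_\tau=C_Q(4/d_0+8C_0/d_0^2)$; you then pass to the maxima via $\bigl|\max_u|a_u|-\max_u|b_u|\bigr|\leqslant\max_u|a_u-b_u|$. You also rightly note that the result needs a side condition of the form $\delta_n\leqslant\min\{\eta_0,d_0/4\}$, which cannot be absorbed into $C_\tau$ because $\widehat\tau_q$ is unbounded; this is exactly the ``$n$ large enough'' proviso the paper's proof also uses implicitly.
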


\begin{proof}
	Denote the right hand side of the last equation as $\delta_n$. Apply Lemma~\ref{lem:uniform_quantile_control} and work on the event that all relevant empirical quantiles are within \(\delta_n\) of their population values. For fixed $\u$, write \(N_T(\u)=Q_{1-q}(\u)-Q_q(\u)\), \(D(\u)=Q_{0.75}(\u)-Q_{0.25}(\u)\), and define \(\widehat N_T(\u)\), \(\widehat D(\u)\) analogously. Then \(|\widehat N_T(\u)-N_T(\u)|\leqslant 2\delta_n\), \(|\widehat D(\u)-D(\u)| \leqslant 2\delta_n\), \(D(\u)\geqslant d_0\), \(\widehat D(\u)\geqslant d_0/2\) for \(n\) large enough, and \(|N_T(\u)|\leqslant 2C_0\). Therefore
	\[
	\abs{\widehat\tau_q(\u)-\tau_q(\u)} \leqslant \frac{2\delta_n}{d_0/2} + 2C_0\frac{2\delta_n}{(d_0/2)d_0} = \left(\frac{4}{d_0}+\frac{8C_0}{d_0^2}\right)\delta_n .
	\]
	Take \(C_\tau=C_Q\{4/d_0+8C_0/d_0^2\}\), up to an absolute numerical factor.	This proves the first claim uniformly over $\u$.
	
	For the second claim, use the elementary inequality
	\(\abs{\max_u|a_u|-\max_u|b_u|} \leqslant \max_u|a_u-b_u|\), with \(a_u=\widehat\tau_q(\u)-\tau_q^0\) and \(b_u=\tau_q(\u)-\tau_q^0\). 
\end{proof}

\begin{corollary}[Uniform control of the maximum skewness diagnostic]
	\label{cor:max_skewness_concentration}
	Under the assumptions of Theorem~\ref{thm:uniform_quantile_skewness_concentration}, with probability at least \(1-\eta\),
	\[
	\abs{\max_{\u\in\mathcal U_m}\max_{a\in\mathcal A_S} \abs{\widehat\gamma_{a}(\u)} - \max_{\u\in\mathcal U_m}\max_{a\in\mathcal A_S}\abs{\gamma_{a}(\u)}} \leqslant C_\gamma \sqrt{\frac{\log\{2m|\mathcal A^\star|/\eta\}}{n}}.
	\]
\end{corollary}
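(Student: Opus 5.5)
This follows directly from Theorem~\ref{thm:uniform_quantile_skewness_concentration} together with an elementary inequality for maxima, mirroring the second part of Theorem~\ref{thm:uniform_tail_concentration}. The plan is to work on the single event $\mathcal E$ of probability at least \(1-\eta\) supplied by Theorem~\ref{thm:uniform_quantile_skewness_concentration}. On $\mathcal E$,
\[
\max_{\u\in\mathcal U_m}\max_{a\in\mathcal A_S}\abs{\widehat\gamma_a(\u)-\gamma_a(\u)}\leqslant \delta_n:=C_\gamma\sqrt{\log\{2m|\mathcal A^\star|/\eta\}/n}.
\]
Everything that follows is deterministic on $\mathcal E$, so no further probabilistic work is needed.

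The key step is the inequality
\[
\abs{\max_{i\in I}|x_i|-\max_{i\in I}|y_i|}\leqslant \max_{i\in I}|x_i-y_i|
\]
for any finite index set $I$. I would apply it with the product index set \(I=\mathcal U_m\times\mathcal A_S\), taking \(x_{(\u,a)}=\widehat\gamma_a(\u)\) and \(y_{(\u,a)}=\gamma_a(\u)\). The double maximum over \(\u\) and \(a\) is just a single maximum over this finite product, so the inequality applies directly.

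The inequality itself is proved in two steps. First, for each $i$ the reverse triangle inequality gives \(|x_i|\leqslant |y_i|+|x_i-y_i|\leqslant \max_j|y_j|+\max_j|x_j-y_j|\). Taking the maximum over $i$ gives one direction of the bound, and swapping the roles of $x$ and $y$ gives the other.

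Combining this with the bound on $\mathcal E$ yields the claim with the same constant \(C_\gamma\) and the same probability \(1-\eta\). There is no real obstacle here. The only point to check is that the ``$n$ large enough'' provisos implicit in Theorem~\ref{thm:uniform_quantile_skewness_concentration} carry over unchanged: the requirements \(\widehat D_a(\u)\geqslant d_0/2\) and \(\delta_n<\eta_0\) are inherited, since the event used here is exactly the event of that theorem.
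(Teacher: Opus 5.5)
Your proposal is correct and takes the same approach as the paper: apply the elementary inequality $\bigl|\max_i|x_i|-\max_i|y_i|\bigr|\le\max_i|x_i-y_i|$ over the finite index set $\mathcal U_m\times\mathcal A_S$, and work on the event supplied by Theorem~\ref{thm:uniform_quantile_skewness_concentration}. Your short proof of that inequality (really just the ordinary triangle inequality) and your remark that the large-$n$ provisos carry over from that theorem's event are both accurate.
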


\begin{proof}
	Use the elementary inequality
	\(
	\left|\max_i |x_i|-\max_i |y_i|\right|
	\leqslant
	\max_i |x_i-y_i|
	\)
	over the finite index set \(i=(\u,a)\in\mathcal U_m\times\mathcal A_S\), with \(x_i=\widehat\gamma_{a}(\u)\) and \(y_i=\gamma_{a}(\u)\). The claim then follows directly from Theorem~\ref{thm:uniform_quantile_skewness_concentration}.
\end{proof}

Theorems~\ref{thm:uniform_quantile_skewness_concentration} and \ref{thm:uniform_tail_concentration} show that the empirical projection diagnostics approximate their population counterparts uniformly over \(\mathcal U_m\) at the rate \(\sqrt{\log(m)/n}\), up to constants and the fixed number of quantile levels involved. Thus, searching over more directions improves the chance of finding localized departures, but increases the stochastic fluctuation of the maximum only logarithmically. The quantile formulation is also important: no third or fourth moment assumption is required. The bounds remain meaningful for heavy-tailed distributions for which moment-based skewness or kurtosis may be unstable or undefined.

\subsection{Consistency of the four-regime diagnostic}
\label{subsec:four_regime_consistency}

We now connect the uniform concentration bounds to the four-regime interpretation of the diagnostic. The result is stated relative to the searched direction set \(\mathcal U_m\). Thus the population quantities below measure only the skewness and tail-ratio evidence visible to the chosen projections. If \(\mathcal U_m\) is fixed, the conclusion is a consistency statement for that finite projection family; if \(m=m_n\) grows, it applies to the corresponding growing family, provided \(\log m_n=o(n)\).

% Recall the population skewness summary \(S_{\mathcal U,\mathcal A_S}=\max_{\u\in\mathcal U_m}\max_{a\in\mathcal A_S}|\gamma_{a}(\u)|\). For the tail component, we use the two-sided tail-departure summary \(T_{\mathcal U}=\max_{\u\in\mathcal U_m}|\tau_q(\u)-\tau_q^0|\), where \(\tau_q^0\) is the chosen reference tail value, under Gaussian calibration or otherwise. The empirical analogues are \(\widehat S_{\mathcal U_m,\mathcal A_S}=\max_{\u\in\mathcal U_m}\max_{a\in\mathcal A_S}|\widehat\gamma_{a}(\u)|\) and \(\widehat T_{\mathcal U_m}=\max_{\u\in\mathcal U_m}|\widehat\tau_q(\u)-\tau_q^0|\). \sd{again, maybe we should simply refer to the original test statistics defined in Sec 2? In fact, we need to reduce overall page count -- so maybe the below discussion can be reduced? I think some parts of it are already mentioned in the previous sections.}

The four regimes are defined through whether \(S_{\mathcal U_m,\mathcal A_S}\) and \(T_{\mathcal U_m}\), given by expressions \eqref{eq:skewness_measure} and \eqref{eq:tail_measure} respectively, are null or separated from zero. For positive constants \(\Delta_S\) and \(\Delta_T\), we consider the classes
\[
\mathcal C_0:\ S_{\mathcal U_m,\mathcal A_S}=0,\ T_{\mathcal U_m}=0,\qquad
\mathcal C_T:\ S_{\mathcal U_m,\mathcal A_S}=0,\ T_{\mathcal U_m}\ge\Delta_T,
\]
and
\[
\mathcal C_S:\ S_{\mathcal U_m,\mathcal A_S}\ge\Delta_S,\ T_{\mathcal U_m}=0,\qquad
\mathcal C_{ST}:\ S_{\mathcal U_m,\mathcal A_S}\ge\Delta_S,\ T_{\mathcal U_m}\ge\Delta_T.
\]
These correspond, respectively, to symmetric reference-tail, symmetric tail-departed, skewed reference-tail, and skewed tail-departed behavior relative to the chosen tail reference. When the reference is Gaussian and the observed tail-ratio exceeds \(\tau_q^0\), the tail-departed regimes may be interpreted as heavier-tailed than Gaussian. However, because \(T_{\mathcal U_m}\) is two-sided, it also includes projections whose tail-ratio is smaller than the reference value. Exact zeros are useful for exposition; an approximate version is given at the end of the subsection.

\begin{remark}
	All four regimes in this subsection are defined relative to the searched direction set \(\mathcal U_m\) and the finite skewness-level set \(\mathcal A_S\). Thus \(S_{\mathcal U_m,\mathcal A_S}=0\) means that no quantile skewness is visible among the inspected projections and the chosen levels \(a\in\mathcal A_S\), not necessarily that \(\gamma_{a}(\u)=0\) for every \(\u\in\mathbb S^{p-1}\) and every \(a\in(0,1/2)\). Similarly, \(T_{\mathcal U_m}=0\) means that every inspected projection has the same tail-ratio as the chosen reference, not necessarily that the full distribution has reference-law tails. A global version would require additional control over the full sphere and, if desired, over a continuum of skewness levels.
\end{remark}

Let \(a_{n,m}(\eta)=C\{\log(2m|\mathcal A^\star|/\eta)/n\}^{1/2}\), where \(C\) is large enough to dominate the constants in Theorems~\ref{thm:uniform_quantile_skewness_concentration} and \ref{thm:uniform_tail_concentration}. Theorem~\ref{thm:uniform_quantile_skewness_concentration} controls \(|\widehat S_{\mathcal U_m,\mathcal A_S}-S_{\mathcal U_m,\mathcal A_S}|\), while Theorem~\ref{thm:uniform_tail_concentration} controls \(|\widehat T_{\mathcal U_m}-T_{\mathcal U_m}|\), both at the rate \(a_{n,m}(\eta)\), where \(\widehat S_{\mathcal U_m,\mathcal A_S}\) and \(\widehat T_{\mathcal U_m}\) are given by \eqref{eq:empirical_skew_tail_measure}. We now present a deterministic lemma showing that a threshold classifier is stable when the empirical summaries are closer to their population values than the relevant separation margins.

\begin{lemma}[Deterministic classification stability]
	\label{lem:deterministic_classification_stability}
	Suppose \(|\widehat S_{\mathcal U_m,\mathcal A_S}-S_{\mathcal U_m,\mathcal A_S}|\le\varepsilon_S\) and \(|\widehat T_{\mathcal U_m}-T_{\mathcal U_m}|\le\varepsilon_T\). Consider the rule
	\[
	\widehat C =
	\begin{cases}
		\mathcal C_0, & \widehat S_{\mathcal U_m,\mathcal A_S}<\lambda_S,\ \widehat T_{\mathcal U_m}<\lambda_T,\\
		\mathcal C_T, & \widehat S_{\mathcal U_m,\mathcal A_S}<\lambda_S,\ \widehat T_{\mathcal U_m}\ge\lambda_T,\\
		\mathcal C_S, & \widehat S_{\mathcal U_m,\mathcal A_S}\ge\lambda_S,\ \widehat T_{\mathcal U_m}<\lambda_T,\\
		\mathcal C_{ST}, & \widehat S_{\mathcal U_m,\mathcal A_S}\ge\lambda_S,\ \widehat T_{\mathcal U_m}\ge\lambda_T.
	\end{cases}
	\]
	If \(0<\lambda_S<\Delta_S\), \(0<\lambda_T<\Delta_T\), \(\varepsilon_S<\min\{\lambda_S,\Delta_S-\lambda_S\}\), and \(\varepsilon_T<\min\{\lambda_T,\Delta_T-\lambda_T\}\), then the rule assigns the correct population class among \(\mathcal C_0,\mathcal C_T,\mathcal C_S,\mathcal C_{ST}\).
\end{lemma}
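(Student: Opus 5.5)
The proof is a direct case analysis that separates the two coordinates. The rule's output is determined by the pair of binary events $\{\widehat S_{\mathcal U_m,\mathcal A_S}\ge\lambda_S\}$ and $\{\widehat T_{\mathcal U_m}\ge\lambda_T\}$. Likewise, the population class is determined by whether $S_{\mathcal U_m,\mathcal A_S}=0$ or $S_{\mathcal U_m,\mathcal A_S}\ge\Delta_S$, and whether $T_{\mathcal U_m}=0$ or $T_{\mathcal U_m}\ge\Delta_T$. It therefore suffices to prove two one-dimensional claims and then combine them.

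\emph{Skewness coordinate.} If $S_{\mathcal U_m,\mathcal A_S}=0$, then $\widehat S_{\mathcal U_m,\mathcal A_S}\le\varepsilon_S<\lambda_S$. If $S_{\mathcal U_m,\mathcal A_S}\ge\Delta_S$, then $\widehat S_{\mathcal U_m,\mathcal A_S}\ge S_{\mathcal U_m,\mathcal A_S}-\varepsilon_S\ge\Delta_S-\varepsilon_S>\lambda_S$. Both steps use only the assumed bound $|\widehat S_{\mathcal U_m,\mathcal A_S}-S_{\mathcal U_m,\mathcal A_S}|\le\varepsilon_S$ and the condition $\varepsilon_S<\min\{\lambda_S,\Delta_S-\lambda_S\}$.

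\emph{Tail coordinate.} The same argument applies with $\varepsilon_T$, $\lambda_T$ and $\Delta_T$ in place of $\varepsilon_S$, $\lambda_S$ and $\Delta_S$.

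Combining the two, the four population classes $\mathcal C_0,\mathcal C_T,\mathcal C_S,\mathcal C_{ST}$ map to the four threshold cells exactly as the rule specifies. For example, under $\mathcal C_T$ we get $\widehat S_{\mathcal U_m,\mathcal A_S}<\lambda_S$ and $\widehat T_{\mathcal U_m}\ge\lambda_T$, so the rule outputs $\mathcal C_T$. The other three cases follow identically.

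The only care needed is with the boundaries: the rule uses $\ge\lambda$ to declare departure and $<\lambda$ to declare no departure, and the strict inequalities on $\varepsilon_S$ and $\varepsilon_T$ guarantee that no borderline tie arises. The lemma also presupposes that the true distribution lies in one of the four classes. I would state this explicitly as a case hypothesis, since values of $S_{\mathcal U_m,\mathcal A_S}$ in $(0,\Delta_S)$ are not covered. There is no substantive obstacle; the lemma is purely deterministic.
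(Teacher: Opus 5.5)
Your proof is correct and follows essentially the same approach as the paper's. The paper applies the same two inequalities, \(\widehat S_{\mathcal U_m,\mathcal A_S}\le\varepsilon_S<\lambda_S\) when \(S_{\mathcal U_m,\mathcal A_S}=0\) and \(\widehat S_{\mathcal U_m,\mathcal A_S}\ge\Delta_S-\varepsilon_S>\lambda_S\) when \(S_{\mathcal U_m,\mathcal A_S}\ge\Delta_S\) (and likewise for \(T_{\mathcal U_m}\)), to each of the four classes in turn; factoring them into two coordinate-wise claims first, as you do, is only an organizational difference. Your remark that the population must lie in one of the four classes is also right: the paper leaves this implicit in the lemma and states it explicitly only in the subsequent finite-sample theorem.
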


\begin{proof}
	If the population class is \(\mathcal C_0\), then \(S_{\mathcal U_m,\mathcal A_S}=0\) and \(T_{\mathcal U_m}=0\). Hence \(\widehat S_{\mathcal U_m,\mathcal A_S}\le\varepsilon_S<\lambda_S\) and \(\widehat T_{\mathcal U_m}\le\varepsilon_T<\lambda_T\), so the rule assigns \(\mathcal C_0\).

	If the population class is \(\mathcal C_T\), then \(S_{\mathcal U_m,\mathcal A_S}=0\) and \(T_{\mathcal U_m}\ge\Delta_T\). Thus \(\widehat S_{\mathcal U_m,\mathcal A_S}\le\varepsilon_S<\lambda_S\), while \(\widehat T_{\mathcal U_m}\geqslant T_{\mathcal U_m}-\varepsilon_T\ge\Delta_T-\varepsilon_T>\lambda_T\). Hence the rule assigns \(\mathcal C_T\).

	If the population class is \(\mathcal C_S\), then \(S_{\mathcal U_m,\mathcal A_S}\ge\Delta_S\) and \(T_{\mathcal U_m}=0\). Thus \(\widehat S_{\mathcal U_m,\mathcal A_S}\geqslant S_{\mathcal U_m,\mathcal A_S}-\varepsilon_S\ge\Delta_S-\varepsilon_S>\lambda_S\), while \(\widehat T_{\mathcal U_m}\le\varepsilon_T<\lambda_T\). Hence the rule assigns \(\mathcal C_S\).

	Finally, if the population class is \(\mathcal C_{ST}\), then \(S_{\mathcal U_m,\mathcal A_S}\ge\Delta_S\) and \(T_{\mathcal U_m}\ge\Delta_T\). Therefore \(\widehat S_{\mathcal U_m,\mathcal A_S}\ge\Delta_S-\varepsilon_S>\lambda_S\) and \(\widehat T_{\mathcal U_m}\ge\Delta_T-\varepsilon_T>\lambda_T\), so the rule assigns \(\mathcal C_{ST}\).
\end{proof}

The theoretical classifier below uses deterministic thresholds on \(\widehat S_{\mathcal U_m,\mathcal A_S}\) and \(\widehat T_{\mathcal U_m}\). This is the formal object for which we prove finite-sample correctness. In the implementation, these thresholds are replaced by simulation-calibrated reference quantiles, equivalently by calibrated \(p\)-values as given in \eqref{eq:calibrated_p_val}.

\begin{theorem}[Finite-sample correctness under separated regimes]
	\label{thm:finite_sample_four_regime}
	Suppose Assumption~\ref{ass:uniform_quantile_regular} holds. Suppose the population distribution belongs to one of \(\mathcal C_0,\mathcal C_T,\mathcal C_S,\mathcal C_{ST}\) relative to the direction set \(\mathcal U_m\), the skewness-level set \(\mathcal A_S\), and the chosen tail reference \(\tau_q^0\). Let \(\widehat C\) be the threshold rule in Lemma~\ref{lem:deterministic_classification_stability}. If \(0<\lambda_S<\Delta_S\) and \(0<\lambda_T<\Delta_T\), then, for every \(\eta\in(0,1)\), the classification is correct with probability at least \(1-\eta\), provided \(a_{n,m}(\eta)<\min\{\lambda_S,\Delta_S-\lambda_S,\lambda_T,\Delta_T-\lambda_T\}\). In particular, with \(\lambda_S=\Delta_S/2\) and \(\lambda_T=\Delta_T/2\), it is enough that \(a_{n,m}(\eta)<\frac12\min\{\Delta_S,\Delta_T\}\).
\end{theorem}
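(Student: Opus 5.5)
The plan is to combine the uniform concentration results with the deterministic stability lemma on a single high-probability event. The one subtlety is to make sure the skewness and tail bounds hold simultaneously at total level \(1-\eta\), rather than each at level \(1-\eta\) separately.

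\emph{Step 1: a single good event.} I would not invoke Theorems~\ref{thm:uniform_quantile_skewness_concentration} and \ref{thm:uniform_tail_concentration} separately, since a union bound would cost \(2\eta\). Both theorems are proved on the same event from Lemma~\ref{lem:uniform_quantile_control}. On that event every empirical quantile indexed by \(\mathcal A^\star\) lies within \(\delta_n = C_Q\{\log(2m|\mathcal A^\star|/\eta)/n\}^{1/2}\) of its population value, uniformly over \(\u\in\mathcal U_m\), and the event has probability at least \(1-\eta\). On it, the deterministic quotient bounds in those two proofs give, simultaneously,
\[
\max_{\u,a}|\widehat\gamma_a(\u)-\gamma_a(\u)|\le C_\gamma' \delta_n/C_Q
\quad\text{and}\quad
\max_{\u}|\widehat\tau_q(\u)-\tau_q(\u)|\le C_\tau'\delta_n/C_Q .
\]
Here I use the fact that the quotient bounds need \(\widehat D\ge d_0/2\), which holds once \(\delta_n\le d_0/2\). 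I would note that this is implied by the hypothesis \(a_{n,m}(\eta)<\min\{\cdot\}\), after enlarging \(C\) if needed. The requirement \(\delta_n<\eta_0\) in Lemma~\ref{lem:uniform_quantile_control} is absorbed the same way. This is the step I expect to need the most care: the proviso ``for \(n\) large enough'' in the earlier proofs has to be converted into an explicit smallness condition on \(a_{n,m}(\eta)\). I would handle it by defining \(C\) in \(a_{n,m}(\eta)\) large enough to dominate all these constants and the conditions \(\delta_n\le \min\{\eta_0,d_0/2\}\). This is possible because every relevant separation margin is at most a fixed bound; if needed, \(\Delta_S,\Delta_T\) can be capped at constants.

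\emph{Step 2: pass to the maxima.} On the same event, apply the elementary inequality \(|\max_i|x_i|-\max_i|y_i||\le\max_i|x_i-y_i|\), as in Corollary~\ref{cor:max_skewness_concentration} and the second part of Theorem~\ref{thm:uniform_tail_concentration}. This gives \(|\widehat S_{\mathcal U_m,\mathcal A_S}-S_{\mathcal U_m,\mathcal A_S}|\le a_{n,m}(\eta)\) and \(|\widehat T_{\mathcal U_m}-T_{\mathcal U_m}|\le a_{n,m}(\eta)\).

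\emph{Step 3: invoke stability.} Set \(\varepsilon_S=\varepsilon_T=a_{n,m}(\eta)\). The hypothesis \(a_{n,m}(\eta)<\min\{\lambda_S,\Delta_S-\lambda_S,\lambda_T,\Delta_T-\lambda_T\}\) is exactly the condition of Lemma~\ref{lem:deterministic_classification_stability}. Hence \(\widehat C\) equals the true population class on the event, which has probability at least \(1-\eta\).

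For the special case \(\lambda_S=\Delta_S/2\) and \(\lambda_T=\Delta_T/2\), the minimum becomes \(\tfrac12\min\{\Delta_S,\Delta_T\}\), which gives the final sentence of the theorem.
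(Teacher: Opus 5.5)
Your argument is essentially the paper's proof. The paper also takes the event \(|\widehat S_{\mathcal U_m,\mathcal A_S}-S_{\mathcal U_m,\mathcal A_S}|\le a_{n,m}(\eta)\), \(|\widehat T_{\mathcal U_m}-T_{\mathcal U_m}|\le a_{n,m}(\eta)\) from Corollary~\ref{cor:max_skewness_concentration} and Theorem~\ref{thm:uniform_tail_concentration}, and applies Lemma~\ref{lem:deterministic_classification_stability} with \(\varepsilon_S=\varepsilon_T=a_{n,m}(\eta)\). Where the paper just says ``after increasing the constant if necessary'', you correctly observe that both bounds hold on the single DKW event of Lemma~\ref{lem:uniform_quantile_control}, so no union bound is needed.

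One caveat concerns your Step~1 aside. First a small slip: \(\widehat D\ge d_0/2\) requires \(\delta_n\le d_0/4\), not \(d_0/2\). More substantively, the margin hypothesis does force \(a_{n,m}(\eta)\), and hence \(\delta_n\), to be small in some classes. In \(\mathcal C_S\) and \(\mathcal C_{ST}\) this holds because \(\Delta_S\le S_{\mathcal U_m,\mathcal A_S}\le 1\). In \(\mathcal C_T\) it holds because \(\Delta_T\le 2C_0/d_0+|\tau_q^0|\). In \(\mathcal C_0\), however, the class puts no constraint on \(\Delta_S,\Delta_T\). With \(\lambda_S>1\) and \(\lambda_T\) large, the margin condition then leaves \(\delta_n\) uncontrolled, and so \(\widehat T_{\mathcal U_m}\) is uncontrolled too. ``Capping'' \(\Delta_S,\Delta_T\) does not help, because it would violate \(\lambda<\Delta\). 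In that corner you need to add an explicit condition \(a_{n,m}(\eta)\le c^\ast\) for a suitable constant \(c^\ast\) depending on \(c_0,\eta_0,d_0\). The paper's one-line proof also leaves this requirement implicit, inheriting it from the ``for \(n\) large enough'' steps in Theorems~\ref{thm:uniform_quantile_skewness_concentration} and~\ref{thm:uniform_tail_concentration}.
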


\begin{proof}
	By Corollary~\ref{cor:max_skewness_concentration} and Theorem~\ref{thm:uniform_tail_concentration}, after increasing the constant in \(a_{n,m}(\eta)\) if necessary, the event \(|\widehat S_{\mathcal U_m,\mathcal A_S}-S_{\mathcal U_m,\mathcal A_S}|\leqslant a_{n,m}(\eta)\) and \(|\widehat T_{\mathcal U_m}-T_{\mathcal U_m}|\leqslant a_{n,m}(\eta)\) has probability at least \(1-\eta\). On this event, the assumed sample-size condition implies the margin condition in Lemma~\ref{lem:deterministic_classification_stability}, with \(\varepsilon_S=\varepsilon_T=a_{n,m}(\eta)\). The conclusion follows.
\end{proof}

\begin{corollary}[Asymptotic consistency over a growing projection family]
	\label{cor:asymptotic_four_regime_consistency}
	Let \(m=m_n\) and suppose Assumption~\ref{ass:uniform_quantile_regular} holds uniformly in \(n\). Suppose also that \(\mathcal A_S\) is fixed. If \(\log m_n=o(n)\), and the population distribution belongs to one of the four separated classes above with fixed separation constants \(\Delta_S>0\) and \(\Delta_T>0\), then, for \(\lambda_S=\Delta_S/2\) and \(\lambda_T=\Delta_T/2\), \(\Prob(\widehat C=C)\to1\), where \(C\) denotes the population class relative to \(\mathcal U_{m_n}\), \(\mathcal A_S\), and the chosen tail reference \(\tau_q^0\).
\end{corollary}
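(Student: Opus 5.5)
The plan is to deduce the corollary directly from Theorem~\ref{thm:finite_sample_four_regime} by letting the confidence level \(\eta\) shrink with \(n\) in a controlled way. Fix the class \(C\in\{\mathcal C_0,\mathcal C_T,\mathcal C_S,\mathcal C_{ST}\}\) to which the population belongs relative to \(\mathcal U_{m_n}\), with separation constants \(\Delta_S,\Delta_T\) not depending on \(n\). Set \(\lambda_S=\Delta_S/2\) and \(\lambda_T=\Delta_T/2\). Theorem~\ref{thm:finite_sample_four_regime} then says that, for each \(n\) and each \(\eta\in(0,1)\),
\[
\Prob(\widehat C= C)\geqslant 1-\eta \quad \text{whenever} \quad a_{n,m_n}(\eta)<\tfrac12\min\{\Delta_S,\Delta_T\}.
\]

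The key step is to choose a sequence \(\eta_n\to0\) along which this margin condition eventually holds. Since \(\log m_n=o(n)\), we have \(\log m_n\leqslant \epsilon_n n\) for some \(\epsilon_n\to0\). I would take \(\eta_n=\exp\{-\sqrt{\epsilon_n}\,n\}\), or more simply any \(\eta_n\to0\) with \(\log(1/\eta_n)=o(n)\), such as \(\eta_n=1/n\). Because \(|\mathcal A^\star|\leqslant 5+2|\mathcal A_S|\) is fixed, we get
\[
\log\{2m_n|\mathcal A^\star|/\eta_n\}=\log(2|\mathcal A^\star|)+\log m_n+\log(1/\eta_n)=o(n),
\]
and hence \(a_{n,m_n}(\eta_n)=C\{o(n)/n\}^{1/2}\to0\). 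So for all sufficiently large \(n\) the margin condition holds, and therefore \(\Prob(\widehat C=C)\geqslant1-\eta_n\to1\).

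The main point needing care is that the constant \(C\) in \(a_{n,m}(\eta)\) must not grow with \(n\). This is where the assumption that Assumption~\ref{ass:uniform_quantile_regular} holds uniformly in \(n\) enters. The constants \(c_0,C_0,\eta_0,d_0\) are then fixed, so the constants \(C_Q\), \(C_\gamma\) and \(C_\tau\) of Lemma~\ref{lem:uniform_quantile_control} and Theorems~\ref{thm:uniform_quantile_skewness_concentration}--\ref{thm:uniform_tail_concentration} are fixed as well.

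The auxiliary "\(n\) large enough" requirements in those proofs also hold eventually, because each is implied by \(a_{n,m_n}(\eta_n)\to0\). These requirements are that the quantile error is below \(\eta_0\) and that the estimated denominators satisfy \(\widehat D\geqslant d_0/2\) on the good event.

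If the direction set is random, I would argue conditionally on \(\mathcal U_{m_n}\), which is drawn independently of the data, and then integrate. The bound \(1-\eta_n\) is uniform in the realized directions, so integrating preserves it.
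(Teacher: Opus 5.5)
Your proposal is correct and is essentially the paper's argument. Both use fixed $|\mathcal A^\star|$ and $\log m_n=o(n)$ to get $a_{n,m_n}\to0$, then invoke Theorem~\ref{thm:finite_sample_four_regime} with $\lambda_S=\Delta_S/2$ and $\lambda_T=\Delta_T/2$. The paper fixes $\eta$ and then lets it be arbitrary, whereas your vanishing sequence $\eta_n$ gives the same limit plus an explicit rate $1-\eta_n$. Use $\eta_n=1/n$ for this, since $\exp\{-\sqrt{\epsilon_n}\,n\}$ need not tend to zero if $\epsilon_n$ decays faster than $n^{-2}$.
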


\begin{proof}
	For every fixed \(\eta\in(0,1)\), since \(\mathcal A_S\) is fixed, \(|\mathcal A^\star|\) is fixed, and the condition \(\log m_n=o(n)\) implies \(a_{n,m_n}(\eta)=C\{\log(2m_n|\mathcal A^\star|/\eta)/n\}^{1/2}\to0\). Hence, for all large \(n\), \(a_{n,m_n}(\eta)<\frac12\min\{\Delta_S,\Delta_T\}\). The finite-sample theorem then gives \(\Prob(\widehat C=C)\ge1-\eta\). Since \(\eta\) is arbitrary, the result follows.
\end{proof}

The uniformity requirement in Assumption~\ref{ass:uniform_quantile_regular} is substantive when \(m=m_n\) grows. It requires the local density lower bound, the relevant quantile bounds, and the denominator lower bounds \(Q_{1-a}(\u)-Q_a(\u)\geqslant d_0\), \(a\in\mathcal A_S\), and \(Q_{0.75}(\u)-Q_{0.25}(\u)\geqslant d_0\), to hold uniformly over the growing direction set. For fixed \(m\), this is mild. For increasing random direction sets, it should be understood as a regularity condition on the underlying distribution and on the range of projections being searched.

The preceding result uses deterministic thresholds because this is the cleanest way to express the separation argument. The implemented procedure may instead use simulation-calibrated \(p\)-values. The same principle is at work: under a reference law, the calibration distribution accounts for the finite-sample fluctuation of the maxima over \(\mathcal U_m\) and \(\mathcal A_S\); under separated alternatives, the population signal is bounded away from the reference value, while the empirical error is of order \(\{\log(m|\mathcal A^\star|)/n\}^{1/2}\). Once the signal is larger than the calibration fluctuation, the corresponding calibrated \(p\)-value tends to be small. Under the reference law, calibration controls the nominal error up to Monte Carlo error and model misspecification.

\begin{remark}[Approximate regimes]
	\label{rem:approximate_regimes}
	The exact equalities \(S_{\mathcal U_m,\mathcal A_S}=0\) and \(T_{\mathcal U_m}=0\) are idealizations. One may instead define approximate null and alternative regions by \(S_{\mathcal U_m,\mathcal A_S}\le\delta_S\) versus \(S_{\mathcal U_m,\mathcal A_S}\ge\Delta_S\), and \(T_{\mathcal U_m}\le\delta_T\) versus \(T_{\mathcal U_m}\ge\Delta_T\), where \(0\le\delta_S<\Delta_S\) and \(0\le\delta_T<\Delta_T\). The same proof applies whenever the thresholds satisfy \(\delta_S<\lambda_S<\Delta_S\) and \(\delta_T<\lambda_T<\Delta_T\), with empirical errors smaller than the minimum margin
	\[
	\min\{\lambda_S-\delta_S,\Delta_S-\lambda_S,\lambda_T-\delta_T,\Delta_T-\lambda_T\}.
	\]
	This approximate formulation is more realistic in applications, where a distribution may be nearly symmetric, may show weak quantile skewness only at some levels \(a\in\mathcal A_S\), or may have tail-ratios only mildly different from the reference law.
\end{remark}

\subsection{Sparse directional alternatives}
\label{subsec:sparse_directional_alternatives}

The preceding results show that random projections can detect a localized directional departure if at least one sampled direction falls close to a signal-bearing direction. In high dimensions, however, a purely random search can be inefficient when the departure is concentrated in a small number of coordinates. We illustrate this point through a simple rank-one alternative. Consider
\begin{equation}
    \label{eq:rank_one_alt}
    \X = \bm{Z} + \delta V \bm{v},
\end{equation}
where \(\bm{Z}\sim N_p(0,\identity_p)\), \(\v\in \mathbb{S}^{p-1}\), \(V\) is a scalar random variable independent of \(\bm Z\), and \(\delta>0\) controls the strength of the non-Gaussian perturbation. The vector \(\v\) is the signal direction. If \(V\) is skewed, the perturbation creates directional asymmetry. If \(V\) is symmetric but heavy-tailed, it creates tail inflation without directional skewness. For any \(\u\in \mathbb{S}^{p-1}\), \(\u^\top \X = \u^\top \bm Z+\delta(\u^\top \v)V\). Thus the non-Gaussian component enters the projection only through the alignment \(\u^\top \v\).

The calculations below are stated in terms of cumulants. They should be read as a transparent proxy for directional skewness and tail inflation. The main diagnostic in the paper uses quantile-based summaries, but the cumulant model makes the high-dimensional geometry particularly clear.

\begin{proposition}[Rank-one skewness signal]
	\label{prop:rank_one_third_cumulant}
	Suppose \eqref{eq:rank_one_alt} holds, \(\E(V)=0\), \(\E|V|^3<\infty\), and \(\kappa_3(V)\ne0\), where \(\kappa_3(V)=\E(V^3)\). Then, for every \(\u\in \mathbb{S}^{p-1}\), $\kappa_3(\u^\top \X)=\delta^3 (\u^\top \v)^3\kappa_3(V)$. Consequently, \(\sup_{\|\u\| =1} \abs{\kappa_3(\u^\top \X)} = \delta^3 \abs{\kappa_3(V)}\), and the supremum is attained at \(\u=\pm \v\).
\end{proposition}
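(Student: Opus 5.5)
The plan is to use two standard properties of third cumulants: they are additive over independent summands and homogeneous of degree three under scaling. Fix \(\u\in\mathbb{S}^{p-1}\) and write \(\u^\top\X=W+cV\), where \(W=\u^\top\bm Z\) and \(c=\delta(\u^\top\v)\). Since \(\bm Z\sim N_p(0,\identity_p)\) and \(\norm{\u}=1\), we have \(W\sim N(0,1)\). Independence of \(\bm Z\) and \(V\) makes \(W\) independent of \(V\). Finite third moments hold because \(W\) is Gaussian and \(\E|V|^3<\infty\), so every cumulant below is well defined.

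For the identity, I would avoid general cumulant machinery and compute directly, since everything is centred. We have \(\E W=0\) and \(\E V=0\), so \(\E(\u^\top\X)=0\) and \(\kappa_3(\u^\top\X)=\E\{(W+cV)^3\}\). Expanding gives
\[
\E(W^3)+3c\,\E(W^2)\E(V)+3c^2\,\E(W)\E(V^2)+c^3\E(V^3).
\]
The first term vanishes by symmetry of the normal law. The two cross terms vanish by independence together with \(\E V=0\) and \(\E W=0\). Only \(c^3\E(V^3)=\delta^3(\u^\top\v)^3\kappa_3(V)\) survives. The middle term needs \(\E V^2<\infty\); this follows from \(\E|V|^3<\infty\) by Lyapunov's inequality.

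For the supremum, take absolute values: \(\abs{\kappa_3(\u^\top\X)}=\delta^3\abs{\u^\top\v}^3\abs{\kappa_3(V)}\). By Cauchy--Schwarz, \(\abs{\u^\top\v}\le\norm{\u}\norm{\v}=1\), with equality exactly when \(\u=\pm\v\), because \(\norm{\v}=1\). Since \(t\mapsto t^3\) is increasing on \([0,1]\) and \(\delta^3\abs{\kappa_3(V)}>0\), the supremum equals \(\delta^3\abs{\kappa_3(V)}\) and is attained at \(\u=\pm\v\). The condition \(\kappa_3(V)\neq0\) is what makes these the only maximisers, rather than every direction tying at zero.

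There is no real obstacle here. The only point needing care is to justify that the cross terms vanish using independence and the centring \(\E V=0\), and that the required moments are finite.
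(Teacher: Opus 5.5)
Your proof is correct and follows the paper's route: write \(\bm u^\top\X=\bm u^\top\bm Z+\delta(\bm u^\top\bm v)V\) as a sum of independent pieces, keep only the \(V\) contribution, and then use \(|\bm u^\top\bm v|\le 1\). The paper simply invokes additivity and third-order homogeneity of cumulants, whereas you verify this directly by expanding \(\E\{(W+cV)^3\}\) using \(\E W=\E V=0\) and checking the moment conditions, and you additionally use the Cauchy--Schwarz equality case to show that \(\pm\bm v\) are the only maximisers.
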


\begin{proof}
	For fixed $\u$, \(\u^\top \X=\u^\top \bm Z+\delta(\u^\top \v)V\). The two terms are independent, \(\u^\top \bm Z \sim N(0,1)\), and the third cumulant of a centered Gaussian variable is zero. Since cumulants are additive over independent sums and homogeneous of order three, \(\kappa_3(\u^\top \X)=\delta^3(\u^\top \v)^3\kappa_3(V)\). The supremum follows from \(|\u^\top \v| \leqslant 1\).
\end{proof}

The above indicates that the optimal direction is \(\v\), but a random search does not know \(\v\). Its efficacy is governed by \(M_m(\v)=\max_{1\leqslant j\leqslant m}|\u_j^\top \v|\), where \(\u_1,\ldots,\u_m\) are independent uniform directions on \(\mathbb{S}^{p-1}\). By rotational invariance, the distribution of \(|\u_j^\top \v|\) is same as that of the absolute first coordinate of a uniformly distributed point on the sphere. 

The following standard fact is a consequence of concentration of measure on the sphere; see, for example, \citet{ledoux2001concentration} and \citet[Chapter~3]{vershynin2018highdimensional}.

\begin{lemma}[Random alignment with a fixed direction]
	\label{lem:random_alignment_order}
	Let \(\u_1,\ldots,\u_m\) be independent uniform directions on \(\mathbb{S}^{p-1}\), and let \(\v\in \mathbb{S}^{p-1}\) be fixed. There exist universal constants \(c,C>0\) such that, for \(0<\eta<1\), with probability at least \(1-\eta\),
	\(M_m(\v)\leqslant C\sqrt{\log(2m/\eta)/p}.\)
	Conversely, there exist universal constants \(c,C>0\) such that, whenever
	\(1<m\leqslant \exp(cp)\), for some universal \(\alpha\in(0,1)\),
	\(
	\Prob\left\{
	M_m(\v)\geqslant c\sqrt{(\log m)/p}
	\right\}
	\geqslant 1-\exp\left\{-C m^{\alpha}\right\}.
	\)
\end{lemma}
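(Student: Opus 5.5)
By rotational invariance, each \(W_j=\u_j^\top\v\) has the law of the first coordinate \(U_1\) of a uniform point on \(\mathbb S^{p-1}\). Moreover \(W_1,\ldots,W_m\) are independent, because the \(\u_j\) are. The plan is therefore to reduce both statements to one-dimensional tail estimates for \(|U_1|\), using the explicit density \(f_p(t)=c_p(1-t^2)^{(p-3)/2}\) recorded after Theorem~\ref{thm:random_direction_detection}. A union bound handles the upper bound, and independence handles the lower bound.

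\emph{Upper bound.} I would use the standard concentration inequality on the sphere: \(\Prob(|U_1|\geqslant t)\leqslant 2\exp(-pt^2/2)\) for \(t\in[0,1]\). It can be proved directly from the cap formula. Writing \(\Prob(U_1\geqslant t)=\int_t^1 c_p(1-s^2)^{(p-3)/2}ds\), use \(1-s^2\leqslant e^{-s^2}\) and \(c_p\lesssim\sqrt p\), then handle small \(p\) separately. Alternatively, represent \(U_1=G_1/\|\bm G\|\) with \(\bm G\sim N_p(0,\identity_p)\), and combine Gaussian tails for \(G_1\) with the lower tail of \(\|\bm G\|^2\). A union bound then gives
\[
\Prob\{M_m(\v)>t\}\leqslant 2m\exp(-pt^2/2).
\]
Setting \(t=\sqrt{2\log(2m/\eta)/p}\) gives probability at most \(\eta\). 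This yields the first claim with \(C=\sqrt2\); if \(t>1\) the claim is trivial.

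\emph{Lower bound.} By independence,
\[
\Prob\{M_m(\v)<t\}=\{1-\Prob(|U_1|\geqslant t)\}^m\leqslant\exp\{-m\,\Prob(|U_1|\geqslant t)\}.
\]
It is therefore enough to show an anti-concentration bound of the form \(\Prob(|U_1|\geqslant c\sqrt{(\log m)/p})\geqslant m^{\alpha-1}\) with \(\alpha\in(0,1)\). I would bound the cap integral from below by integrating over \([t,2t]\), or over \([t,t+1/\sqrt p]\). On that interval \(1-s^2\geqslant \exp(-2s^2)\) for \(s\leqslant 1/2\), so the integrand is at least \(c_p\exp\{-(p-3)s^2\}\). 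With \(c_p\gtrsim\sqrt p\) this gives
\[
\Prob(U_1\geqslant t)\gtrsim \exp(-C'pt^2)\qquad\text{for } t\geqslant 1/\sqrt p .
\]
Choosing \(t=c\sqrt{(\log m)/p}\) with \(c^2C'<1\) then gives a bound of order \(m^{-c^2C'}\), i.e.\ \(m^{\alpha-1}\) with \(\alpha=1-c^2C'\). The condition \(m\leqslant\exp(cp)\) keeps \(t\) below \(1/2\), so the inequality \(1-s^2\geqslant e^{-2s^2}\) is valid on the whole interval. Substituting into the product bound gives \(\Prob\{M_m(\v)<t\}\leqslant\exp(-m^{\alpha})\), up to the absolute constant absorbed into the exponent's constant \(C\).

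The main obstacle is making this lower tail estimate uniform in \(p\) with universal constants. Three places need care. The first is very small \(p\) (for example \(p=2,3\)), where the exponent \((p-3)/2\) is nonpositive. The second is the regime where \(\log m\) is small, which requires \(m>1\) and possibly shrinking \(c\) so that \(t\geqslant 1/\sqrt p\) or a direct constant bound applies. The third is the two-sided sharp estimate \(c_p\asymp\sqrt p\), which follows from standard Gamma-ratio bounds (Gautschi or Wendel). I would settle these by a case split: \(p\) below a fixed constant handled by crude bounds, and \(p\) large handled by the exponential comparison above. All constants would then be absorbed into \(c\), \(C\), and \(\alpha\).
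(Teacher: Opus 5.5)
Your proposal is correct and follows the same route as the paper. The upper bound comes from a sub-Gaussian spherical-cap tail plus a union bound over the $m$ directions. The converse uses the product identity \(\Prob\{M_m(\v)<t\}=\{1-\Prob(|U_1|\geqslant t)\}^m\) together with a matching lower bound on the cap probability. Your write-up supplies the cap estimates that the paper only cites as ``standard'', and your small-\(p\) case is in fact vacuous once \(c\) is small, since \(2\leqslant m\leqslant \exp(cp)\) forces \(p\geqslant \log 2/c\).
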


\begin{proof}
	The upper bound follows from the standard spherical-coordinate tail bound \(\Prob(|\bm{U}_1|\geqslant t)\leqslant 2\exp(-cpt^2)\) for \(\bm{U}\sim{\rm Unif}(\mathbb{S}^{p-1})\), followed by a union bound over \(m\) directions. The lower bound follows from the matching lower bound for the spherical cap probability and the identity \(\Prob\{M_m(\v)<t\}=\{1-\Prob(|\bm{U}_1|\geqslant t)\}^m\). These are standard estimates for spherical caps and random projections.
\end{proof}

This lower bound is informative when \(\log m\) is non-negligible relative to the scale of the search problem. In particular, for polynomially many random directions, \(m=p^b\), the typical alignment is of order \(\sqrt{\log p/p}\); detecting very sparse alternatives may therefore require augmenting random directions with coordinate or structured directions.

Combining the cumulant identity with the alignment bound gives the random-projection signal size.

\begin{theorem}[Random-projection skewness signal under the rank-one alternative]
	\label{thm:rank_one_random_projection_skew_signal}
	Under the assumptions of Proposition~\ref{prop:rank_one_third_cumulant}, if \(\u_1,\ldots,\u_m\) are independent uniform directions, then
	\(
	\max_{1\leqslant j\leqslant m} \abs{\kappa_3\left(\u_j^\top \X\right)} = \delta^3 \abs{\kappa_3(V)} \,M_m(\v)^3.
	\)
	In particular, for every \(\eta\in(0,1)\),
	\[
	\max_{1\leqslant j\leqslant m} \abs{\kappa_3(\u_j^\top \X)} \leqslant C\delta^3 \abs{\kappa_3(V)} \left\{\frac{\log(2m/\eta)}{p}\right\}^{3/2}
	\]
	with probability at least \(1-\eta\). Conversely, in the regime covered by the lower bound in Lemma~\ref{lem:random_alignment_order}, the same quantity is bounded below, with the probability stated there, by a constant multiple of $\delta^3 \abs{\kappa_3(V)} \left(p^{-1}\log m\right)^{3/2}$.
\end{theorem}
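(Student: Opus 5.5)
The plan is to reduce everything to Proposition~\ref{prop:rank_one_third_cumulant} and Lemma~\ref{lem:random_alignment_order}. The first step is a deterministic identity. For each fixed $j$, the random directions are independent of the data-generating law. Hence $\kappa_3(\u_j^\top \X)$ should be read as the cumulant of the projection conditional on $\u_j$, that is, $\kappa_3$ evaluated at the deterministic unit vector $\u=\u_j$. Proposition~\ref{prop:rank_one_third_cumulant} then gives
\[
\abs{\kappa_3(\u_j^\top \X)}=\delta^3\abs{\kappa_3(V)}\,\abs{\u_j^\top \v}^3 .
\]
The map $t\mapsto t^3$ is increasing on $[0,1]$, so taking the maximum over $j$ commutes with cubing. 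This yields
\[
\max_j \abs{\kappa_3(\u_j^\top \X)}=\delta^3\abs{\kappa_3(V)}\,\Bigl(\max_j\abs{\u_j^\top\v}\Bigr)^3=\delta^3\abs{\kappa_3(V)}\,M_m(\v)^3 .
\]
This is the displayed equality.

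The upper bound comes from the first part of Lemma~\ref{lem:random_alignment_order}. On an event of probability at least $1-\eta$, we have $M_m(\v)\leqslant C\sqrt{\log(2m/\eta)/p}$. Cubing this bound and substituting it into the identity gives
\[
\max_j \abs{\kappa_3(\u_j^\top \X)}\leqslant C^3\delta^3\abs{\kappa_3(V)}\{\log(2m/\eta)/p\}^{3/2}.
\]
The constant $C^3$ is then renamed $C$, since the statement allows a generic universal constant.

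The lower bound works the same way. In the regime $1<m\leqslant \exp(cp)$, the second part of Lemma~\ref{lem:random_alignment_order} gives $M_m(\v)\geqslant c\sqrt{(\log m)/p}$ with probability at least $1-\exp(-Cm^\alpha)$. On that event, monotonicity of cubing gives
\[
\max_j\abs{\kappa_3(\u_j^\top\X)}\geqslant c^3\delta^3\abs{\kappa_3(V)}(p^{-1}\log m)^{3/2}.
\]
This is the claimed constant multiple, holding with the probability stated in the lemma.

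No step is genuinely hard. The only point requiring care is interpretive: the probability statements refer to the randomness of the directions $\u_j$, not of $\X$. The cumulant must be understood conditionally on $\mathcal U_m$, which is legitimate because the directions are drawn independently of the data. The identity then holds pointwise for every realization of the directions, and all the probability is inherited from Lemma~\ref{lem:random_alignment_order}.
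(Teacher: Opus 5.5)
Your proposal is correct and follows the same route as the paper. You apply Proposition~\ref{prop:rank_one_third_cumulant} at each realized direction $\u_j$, take the maximum using monotonicity of $t\mapsto t^3$ to obtain the identity with $M_m(\v)^3$, and cube the two alignment bounds of Lemma~\ref{lem:random_alignment_order}, absorbing $C^3$ and $c^3$ into the constants. Your remark that the cumulant is evaluated conditionally on the sampled directions, so that all probability comes from the $\u_j$, is left implicit in the paper but is exactly the intended reading.
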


\begin{proof}
	By Proposition~\ref{prop:rank_one_third_cumulant}, \(|\kappa_3(\u_j^\top \X)|=\delta^3|\kappa_3(V)|\,|\u_j^\top \v|^3\). Taking the maximum over \(j\) gives the identity. The order statement follows from Lemma~\ref{lem:random_alignment_order}.
\end{proof}

This order can be small when \(p\) is large unless \(m\) is very large, as shown next.

\begin{proposition}[Coordinate augmentation for sparse signal directions]
	\label{prop:coordinate_sparse_signal}
	Suppose \(\v\in \mathbb{S}^{p-1}\) is \(s\)-sparse, meaning \(\|\v\|_0\leqslant s\). Then \(\max_{1\leqslant k\leqslant p}|\bm{e}_k^\top \v|\geqslant s^{-1/2}\). Consequently, under \eqref{eq:rank_one_alt} and the assumptions of Proposition~\ref{prop:rank_one_third_cumulant},
	\[
	\max_{1\leqslant k\leqslant p} \abs{\kappa_3\left(\bm e_k^\top \X\right)} 	\geqslant \delta^3 \abs{\kappa_3(V)}s^{-3/2}.
	\]
\end{proposition}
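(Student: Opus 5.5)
The argument has two parts: a pigeonhole bound on the largest coordinate of \(\v\), followed by the cumulant identity of Proposition~\ref{prop:rank_one_third_cumulant} applied at a coordinate direction.

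\textbf{Step 1 (largest coordinate).} Let \(J=\{k:v_k\neq 0\}\) be the support of \(\v\), so \(|J|\leqslant s\). Since \(\norm{\v}=1\),
\[
1=\sum_{k\in J} v_k^2\leqslant |J|\max_{k}v_k^2\leqslant s\max_k v_k^2 .
\]
Hence \(\max_k|\bm e_k^\top\v|=\max_k|v_k|\geqslant s^{-1/2}\). Note that \(J\) is nonempty because \(\v\) is a unit vector, so \(s\geqslant1\) and the bound is meaningful.

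\textbf{Step 2 (skewness signal at a coordinate).} Each \(\bm e_k\) lies in \(\mathbb S^{p-1}\). Applying Proposition~\ref{prop:rank_one_third_cumulant} with \(\u=\bm e_k\) gives
\[
\kappa_3(\bm e_k^\top\X)=\delta^3 v_k^3\kappa_3(V),
\qquad\text{so}\qquad
\abs{\kappa_3(\bm e_k^\top\X)}=\delta^3\abs{\kappa_3(V)}\,|v_k|^3 .
\]
Maximizing over \(k\) and using the monotonicity of \(t\mapsto t^3\) on \([0,\infty)\) together with Step 1 yields
\[
\max_{1\leqslant k\leqslant p}\abs{\kappa_3(\bm e_k^\top\X)}\geqslant \delta^3\abs{\kappa_3(V)}\,s^{-3/2}.
\]

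Neither step presents a real obstacle; the only care needed is to apply the proposition to coordinate directions, which are valid unit vectors. It is worth remarking, in contrast with Theorem~\ref{thm:rank_one_random_projection_skew_signal}, that this bound is deterministic and free of \(p\). It therefore dominates the random-projection order \((\log m/p)^{3/2}\) whenever \(s\ll p/\log m\).
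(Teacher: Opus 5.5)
Your proposal is correct and follows the paper's proof: the paper also uses the pigeonhole bound \(1=\sum_{k}v_k^2\leqslant s\max_k v_k^2\) to get \(\max_k|\bm e_k^\top\v|\geqslant s^{-1/2}\), and then applies Proposition~\ref{prop:rank_one_third_cumulant} at \(\u=\bm e_k\). Your write-up simply makes the support-counting and the monotonicity of \(t\mapsto t^3\) explicit.
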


\begin{proof}
	Since \(\v\) is \(s\)-sparse and \(\|\v\|_2=1\), at least one nonzero coordinate must have magnitude at least \(s^{-1/2}\). Applying Proposition~\ref{prop:rank_one_third_cumulant} with \(\u= \bm e_k\) gives the stated lower bound.
\end{proof}

A parallel calculation applies to tail inflation when \(V\) has a positive fourth cumulant. This result is given below.

\begin{proposition}[Rank-one tail signal]
	\label{prop:rank_one_fourth_cumulant}
	Suppose \eqref{eq:rank_one_alt} holds, \(\E(V)=0\), \(\E(V^4)<\infty\), and \(\kappa_4(V)=\E(V^4)-3\{\E(V^2)\}^2>0\). Then, for \(\u\in \mathbb{S}^{p-1}\), $\kappa_4(\u^\top \X)=\delta^4(\u^\top \v)^4\kappa_4(V)$.	Consequently, \(\sup_{\|\u\|=1}|\kappa_4(\u^\top \X)|=\delta^4\kappa_4(V)\), attained at \(\u=\pm \v\).
\end{proposition}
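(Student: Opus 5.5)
The proof will mirror that of Proposition~\ref{prop:rank_one_third_cumulant}, now using fourth cumulants. Fix \(\u\in\mathbb{S}^{p-1}\) and write \(\u^\top\X=\u^\top\bm Z+\delta(\u^\top\v)V\). The two summands are independent because \(V\) is independent of \(\bm Z\). Since \(\norm{\u}=1\), the first summand is \(N(0,1)\). The assumption \(\E(V^4)<\infty\) guarantees that all cumulants up to order four exist for both summands, and hence for the sum.

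The argument then uses two standard facts. First, cumulants of each order are additive over independent summands. Second, the fourth cumulant is homogeneous of degree four: \(\kappa_4(cW)=c^4\kappa_4(W)\) for any scalar \(c\). Because every Gaussian cumulant of order at least three vanishes, \(\kappa_4(\u^\top\bm Z)=0\). This yields
\[
\kappa_4(\u^\top\X)=\delta^4(\u^\top\v)^4\kappa_4(V).
\]
For completeness, I will check that the stated formula \(\kappa_4(V)=\E(V^4)-3\{\E(V^2)\}^2\) is indeed the fourth cumulant. This holds because \(\E(V)=0\), so the general centered-moment expression reduces to this form.

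For the supremum, note that \(\kappa_4(V)>0\) and \((\u^\top\v)^4\geqslant 0\). Hence \(\abs{\kappa_4(\u^\top\X)}=\delta^4(\u^\top\v)^4\kappa_4(V)\). By Cauchy--Schwarz, \(\abs{\u^\top\v}\leqslant\norm{\u}\norm{\v}=1\), with equality exactly when \(\u=\pm\v\). The supremum is therefore \(\delta^4\kappa_4(V)\), and it is attained at \(\u=\pm\v\).

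No step is a real obstacle. The only point needing care is the justification of cumulant additivity under the stated moment conditions. This follows from factorization of the log characteristic function of an independent sum, combined with differentiability up to order four, which finite fourth moments ensure.
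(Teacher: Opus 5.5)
Your proof is correct and follows the paper's argument: independence of $\bm{u}^\top\bm{Z}\sim N(0,1)$ and $\delta(\bm{u}^\top\bm{v})V$, additivity and degree-four homogeneity of cumulants, vanishing of the Gaussian fourth cumulant, and $|\bm{u}^\top\bm{v}|\leqslant 1$ with equality exactly at $\bm{u}=\pm\bm{v}$. Your extra checks fill in details the paper leaves implicit: that $\mathbb{E}(V^4)-3\{\mathbb{E}(V^2)\}^2$ is the fourth cumulant because $\mathbb{E}(V)=0$, and that additivity holds under finite fourth moments.
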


\begin{proof}
	The proof is same as the previous case. Fourth cumulants are additive over independent sums and are homogeneous of order four. The fourth cumulant of \(\u^\top \bm Z \sim N(0,1)\) is 0, so only the perturbation term contributes.
\end{proof}

\begin{corollary}[Random and coordinate signal sizes for fourth cumulants]
	\label{cor:rank_one_tail_signal_sizes}
	Under assumptions of Proposition~\ref{prop:rank_one_fourth_cumulant}, $\max_{1\leqslant j\leqslant m}|\kappa_4(\u_j^\top \X)| = \delta^4\kappa_4(V)M_m(\v)^4$. Hence, the random-projection fourth-cumulant signal is of order \(\delta^4\kappa_4(V)(\log m/p)^2\) with high probability. If \(\v\) is \(s\)-sparse, then the coordinate directions satisfy	$\max_{1\leqslant k\leqslant p}|\kappa_4(\bm e_k^\top \X)| \geqslant \delta^4\kappa_4(V)s^{-2}$.
\end{corollary}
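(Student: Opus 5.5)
The plan mirrors the third-cumulant argument, since Proposition~\ref{prop:rank_one_fourth_cumulant} already gives a closed form for every direction. First, apply Proposition~\ref{prop:rank_one_fourth_cumulant} to each sampled direction \(\u_j\). This gives \(|\kappa_4(\u_j^\top \X)|=\delta^4\kappa_4(V)(\u_j^\top\v)^4\). The absolute value is harmless here because \(\kappa_4(V)>0\) and the fourth power is nonnegative.

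Second, take the maximum over \(j\). The map \(t\mapsto t^4\) is increasing on \([0,1]\), so \(\max_j (\u_j^\top\v)^4=\{\max_j|\u_j^\top\v|\}^4=M_m(\v)^4\). This yields the stated identity exactly, as in the proof of Theorem~\ref{thm:rank_one_random_projection_skew_signal}.

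Third, obtain the order statement by substituting the two bounds of Lemma~\ref{lem:random_alignment_order}.
\begin{itemize}
\item \emph{Upper bound.} With probability at least \(1-\eta\), \(M_m(\v)^4\le C^4\{\log(2m/\eta)/p\}^2\).
\item \emph{Lower bound.} In the regime \(1<m\le\exp(cp)\), with probability at least \(1-\exp(-Cm^\alpha)\), \(M_m(\v)^4\ge c^4(\log m/p)^2\).
\end{itemize}
Together these give a signal of order \(\delta^4\kappa_4(V)(\log m/p)^2\) with high probability. For fixed \(\eta\), \(\log(2m/\eta)\) is comparable to \(\log m\) once \(m\ge 2\), so the upper bound matches this order. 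I would state the phrase ``of order'' with exactly these two-sided constants and probabilities to make it precise.

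Fourth, handle the coordinate claim. Since \(\v\) is \(s\)-sparse with unit norm, Proposition~\ref{prop:coordinate_sparse_signal} gives some \(k\) with \(|\bm e_k^\top\v|\ge s^{-1/2}\). Applying Proposition~\ref{prop:rank_one_fourth_cumulant} with \(\u=\bm e_k\) gives \(\kappa_4(\bm e_k^\top\X)=\delta^4(\bm e_k^\top\v)^4\kappa_4(V)\ge\delta^4\kappa_4(V)s^{-2}\), and the maximum over \(k\) is at least this value.

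No step is a real obstacle. The only point needing care is making the informal ``order'' statement precise, and that reduces entirely to the two-sided alignment bounds in Lemma~\ref{lem:random_alignment_order}.
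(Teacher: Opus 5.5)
Your proposal is correct and follows the paper's route. You apply Proposition~\ref{prop:rank_one_fourth_cumulant} to each direction and use monotonicity of \(t\mapsto t^4\) to obtain the identity with \(M_m(\v)^4\); you substitute the two-sided alignment bounds of Lemma~\ref{lem:random_alignment_order} for the order statement; and you use \(\max_k|\bm e_k^\top\v|\geqslant s^{-1/2}\) for the coordinate bound. Your version only spells out details the paper leaves implicit: the constants \(C^4\) and \(c^4\), the regime \(1<m\leqslant\exp(cp)\), and the comparability of \(\log(2m/\eta)\) with \(\log m\).
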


\begin{proof}
	The first identity follows from Proposition~\ref{prop:rank_one_fourth_cumulant}. The random-direction order follows from Lemma~\ref{lem:random_alignment_order}. The coordinate lower bound follows from \(\max_k|\bm e_k^\top \v|\geqslant s^{-1/2}\).
\end{proof}

These calculations make the role of the direction set explicit. For skewness, a random search sees a signal of order \(\delta^3|\kappa_3(V)|(\log m/p)^{3/2}\), whereas coordinate directions see at least \(\delta^3|\kappa_3(V)|s^{-3/2}\) when the signal direction is \(s\)-sparse. For fourth-cumulant tail inflation, the analogous orders are \(\delta^4\kappa_4(V)(\log m/p)^2\) for random directions and \(\delta^4\kappa_4(V)s^{-2}\) for coordinate directions. Thus random directions provide broad model-free coverage of the sphere, but can be inefficient for sparse high-dimensional alternatives. Coordinate augmentation is beneficial precisely in such cases. We find it important to note that directions obtained from PCA or ICA may also improve sensitivity when the non-Gaussian direction is aligned with a dominant variance component or an independent component. Their formal treatment, however, requires either sample splitting or additional perturbation arguments, because those directions are data-dependent. We therefore leave PCA and ICA directions as possible future work, while the theory above explains why random and coordinate directions already capture two complementary regimes.

\section{Simulation study}\label{sec:simulation}

We conduct a simulation study to evaluate the finite-sample behavior of the proposed projection diagnostics.  The simulation is designed to assess three aspects of the method: empirical size under a Gaussian benchmark, power against symmetric heavy-tailed alternatives, and power against skewed alternatives with and without heavy tails.  We also examine the effect of dimension, sample size, and the magnitude of directional skewness. All of our empirical analysis are done in RStudio (version 2025.05.0+496), equipped with R version 4.5.1 running under macOS Tahoe 26.5.

\subsection{Setups}

For all simulation settings, observations are generated independently in $\R^p$.  Dependence among coordinates is induced through the covariance matrix $\Sigma$ with the $(i,j)^{th}$ element $\rho^{|i-j|})$ where $\rho$ controls the extent of dependence. We found the results to be mostly robust for different choices of $\rho$ and shall report the results for $\rho=0.5$ below. In line with our earlier discussions, each generated data matrix is centered columnwise before the projection statistics are computed; and the diagnostic directions are taken to be $\mathcal U_m = \mathcal U_{\mathrm{rand}} \cup \mathcal U_{\mathrm{coord}}$, where $\mathcal U_{\mathrm{rand}}$ consists of $m$ independent random directions drawn uniformly from $\mathbb S^{p-1}$, and $\mathcal U_{\mathrm{coord}}=\{\bm{e}_1,\ldots,\bm{e}_p\}$ is the set of coordinate directions.  In the implementation, a random direction is generated as $\bm{Z}/\norm{\bm Z}_2$, where $\bm Z \sim N_p(0,\identity_p)$.  The coordinate directions are included because they improve sensitivity to sparse directional departures.

The skewness diagnostic is based on several quantile levels rather than only the Bowley level. Specifically, following \eqref{eq:quantile_based_skewness_statistic}, we take the statistic
\begin{equation*}
        \widehat S_{\mathcal U_m,\mathcal A_S} = \max_{\u\in\mathcal U_m} \max_{a \in\mathcal A_S}
        \abs{\widehat\gamma_{a}(\u)}.
\end{equation*}
with the grid $\mathcal A_S = \{0.05,0.10,0.15,0.20,0.25\}$. We reiterate that the case $a=0.25$ corresponds to the Bowley-type skewness, while smaller values of $a$ allow the statistic to detect asymmetry away from the central inter-quartile region.

On the other hand, the tail diagnostic is based on the inter-quantile tail ratio $\widehat\tau_{a}(\u)$ defined in \eqref{eq:quantile_based_tail_statistic}. With the Gaussian benchmark value defined as $\tau_q^N$, we use the one-sided statistic
\begin{equation*}
        \widehat T_{\mathcal U_m}^{+} = \max_{\u\in\mathcal U_m} \left\{\widehat\tau_{q}(\u) - \tau_q^N\right\}_{+},
\end{equation*}
where $x_+=\max(x,0)$ and $q$ is fixed at 0.025 which is found to render good results, thereby obviating the need to take a grid of multiple quantile levels. It is also imperative to point out that this statistic helps with the objective of detecting heavy-tailedness rather than any tail-shape deviation. 

For both tests, critical values are obtained by Monte Carlo calibration under the Gaussian reference model.  For each pair $(n,p)$ and for the same direction set $\mathcal U_m$, we generate $B$ independent samples from $N_p(0,\identity_p)$, center each sample, and compute the corresponding values of $\widehat S_{\mathcal U_m,\mathcal A_S}$ and $\widehat T_{\mathcal U_m}^{+}$.  The empirical $(1-\alpha)$ quantiles of the resulting null distributions are used as critical values, with $\alpha=0.05$. The same quantile grid, tail probability, direction set construction, and calibration scheme are used for all competing data-generating mechanisms.

For the data-generating mechanisms, we consider five different settings.  The first is the Gaussian benchmark, where observations are generated independently from $N_p(0,\Sigma)$. This model represents the symmetric benchmark-tailed regime and is used to evaluate empirical size. The second model is the multivariate $t$ distribution generated through the scale-mixture representation $\bm{Z}/\sqrt{W}$, where $\bm{Z} \sim N_p(0,\Sigma)$, $W\sim \chi^2_{\nu}/\nu$ and they are independent of each other.  We take $\nu=5$ in our simulations. This model is centrally symmetric but heavy-tailed, and hence represents the symmetric heavy-tailed regime. The third model is a hidden-factor skew-normal-type model, where the sample observations are generated as $\bm{Z}+\boldalpha\abs{H}$, with $\bm{Z} \sim N_p(0,\Sigma)$ independent of $H \sim N(0,1)$. The vector $\boldalpha \in \R^p$ determines the direction and magnitude of asymmetry. This model is used to represent the skewed benchmark-tailed regime.  For a projection direction $\u$, the projected variable satisfies $\u^\top \bm{Z} + (\u^\top\boldalpha)\abs{H}$, so the strength of projected skewness depends on the alignment between $\u$ and $\boldalpha$. The fourth model is a hidden-factor skew-$t$-type model where the random variable mentioned in the third model is divided by $\sqrt{W}$, for an independent variable $W\sim \chi^2_{\nu}/\nu$, in the same spirit as multivariate $t$ distribution. We again use $\nu=5$ in this.  This model combines directional asymmetry with heavy tails and therefore represents the skewed heavy-tailed regime. Finally, the fifth model is a symmetric contaminated normal distribution, where each observation is generated from $N_p(0,\Sigma)$ with probability $1-\varepsilon$, and from $N_p(0,c\Sigma)$ with probability $\varepsilon$. We use $\varepsilon=0.05$ and $c=10$ in our implementation. Observe that this model is symmetric about zero but has inflated tails due to a small fraction of high-variance observations.  It is included to assess whether the tail diagnostic detects contamination-driven tail inflation without producing excessive skewness rejections.

For the skewed models, we first use $\boldalpha = \delta p^{-0.5} \, \mathbf{1}_p$, where $\mathbf{1}_p$ is a vector of all 1, to introduce skewness evenly in all directions. Then, we introduce sparse directional skewness with $\boldalpha = \delta \bm{e}_1$, where $\bm{e}_1$ is the first coordinate vector. In all cases, $\delta>0$ is the skewness-strength parameter. It is not itself a skewness coefficient, but it is the magnitude of the asymmetric latent-factor perturbation.  Larger values of $\delta$ produce stronger directional asymmetry.  In the simulation, we vary $\delta \in \{2,3,\ldots,10\}$. The inclusion of coordinate directions in $\mathcal U_m$ ensures that the sparse signal direction is represented in the search set, while the random directions provide additional coverage of the unit sphere.

For each generated sample, we compute $\widehat S_{\mathcal U_m,\mathcal A_S}$ and $\widehat T_{\mathcal U_m}^{+}$ and compare them with their Gaussian Monte Carlo critical values.  The resulting decisions are summarized through the following four diagnostic categories:
\begin{equation*}
\begin{array}{ll}
\text{symmetric benchmark:} & \widehat S_{\mathcal U_m,\mathcal A_S}\leqslant c_S,\quad \widehat T_{\mathcal U_m}^{+}\leqslant c_T,\\
\text{symmetric tail-departed:} & \widehat S_{\mathcal U_m,\mathcal A_S}\leqslant c_S,\quad \widehat T_{\mathcal U_m}^{+}> c_T,\\
\text{skewed benchmark:} & \widehat S_{\mathcal U_m,\mathcal A_S}> c_S,\quad \widehat T_{\mathcal U_m}^{+}\leqslant c_T,\\
\text{skewed tail-departed:} & \widehat S_{\mathcal U_m,\mathcal A_S}> c_S,\quad \widehat T_{\mathcal U_m}^{+}> c_T,
\end{array}
\end{equation*}
where $c_S$ and $c_T$ are the calibrated critical values.  The simulation reports the rejection frequencies of the skewness and tail components separately, together with the proportion of samples assigned to the correct diagnostic category.

\subsection{Simulation results}\label{sec:simulation_results}

We now discuss the finite-sample behavior of the proposed diagnostics.  The main simulation results are reported for the general directional-skewness setting, where the skewness vector is dense: the total magnitude of the skewness vector is $\norm{\boldalpha}=\delta$, but the asymmetric perturbation is spread over all coordinates.  This setting is more challenging than a sparse alternative because no single coordinate direction carries the full skewness signal.  

First, \Cref{fig:skew_results_general} summarizes the rejection probabilities of the skewness component. It shows that the multi-quantile directional skewness diagnostic has good size behavior under symmetric models and increasing power under skewed alternatives.  For the Gaussian, Student-$t$, and contaminated normal models, the rejection probabilities remain close to the nominal level across all values of $n$, $p$, and $\delta$.  This indicates that the skewness component is not responding to symmetric heavy tails or to symmetric contamination.  This feature is essential for the intended interpretation of the diagnostic, since rejection of the skewness component should be evidence against central symmetry rather than evidence of general non-Gaussianity.

\begin{figure}[!ht]
\centering
\includegraphics[width=0.75\textwidth,keepaspectratio]{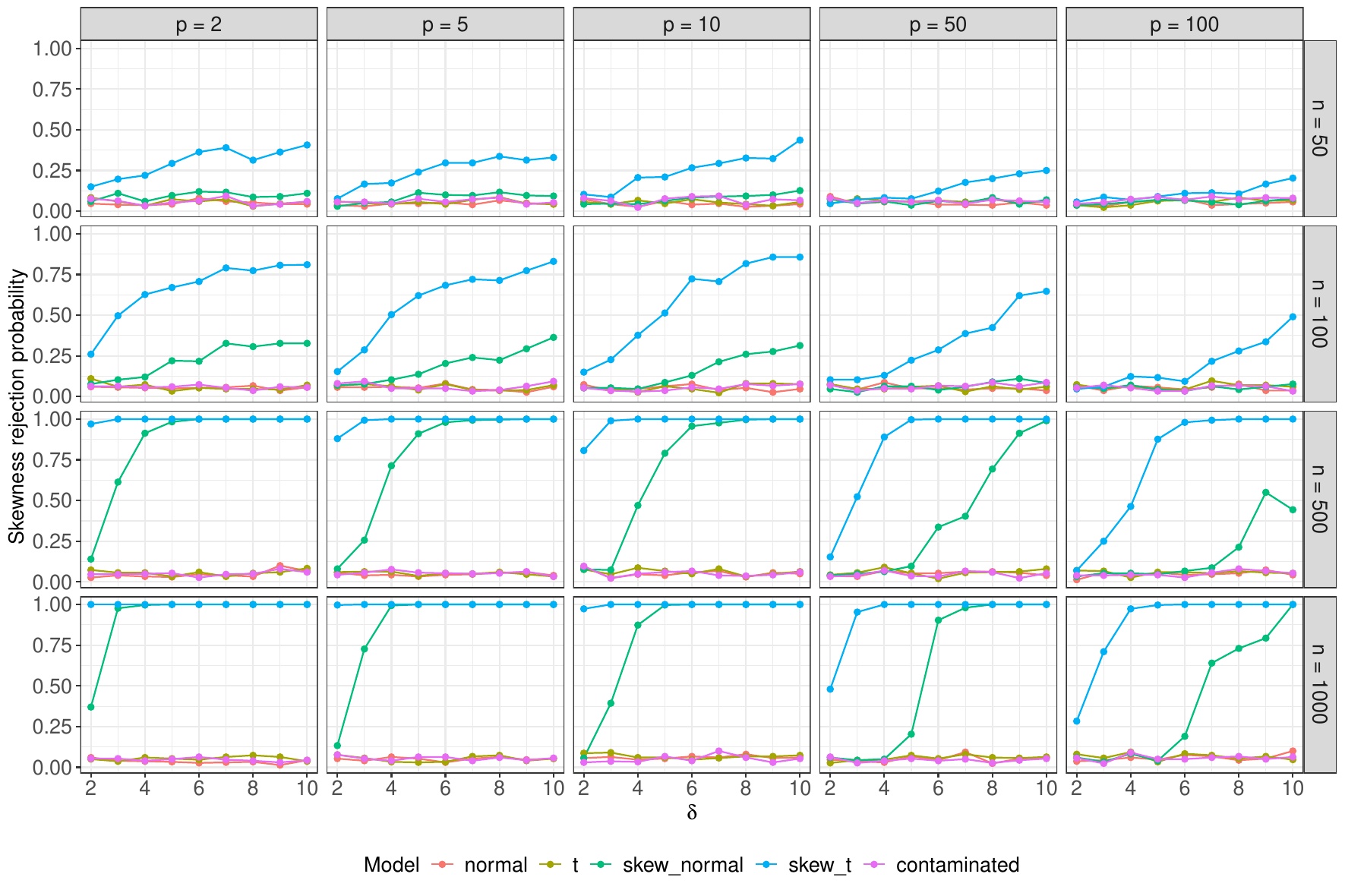}
\caption{Empirical rejection probabilities of the skewness diagnostic under the general directional-skewness setting. The rejection probability is plotted against the skewness-strength parameter $\delta$, with panels corresponding to different values of $n$ and $p$.}
\label{fig:skew_results_general}
\end{figure}

For the skew-normal and skew-$t$ models, the rejection probability increases with the skewness-strength parameter $\delta$.  The trend is especially pronounced for the skew-$t$ model.  Even for moderate sample sizes, the skew-$t$ rejection probability rises substantially with $\delta$, and for $n=500$ or $n=1000$ it is close to one over much of the grid.  This indicates that the skewness component can detect asymmetric heavy-tailed alternatives effectively once the sample size is moderate. On the other hand, the skew-normal model is more challenging, particularly when $n$ is small or $p$ is large.  For $n=50$, the rejection probabilities against the skew-normal alternative are modest across all dimensions, although they tend to increase with $\delta$.  For $n=100$, power improves, but the effect of dimension remains visible.  For $n=500$ and $n=1000$, the power becomes much stronger, especially for moderate and large $\delta$.  In lower and moderate dimensions, the rejection probability approaches one as $\delta$ increases.  In higher dimensions, such as $p=50$ and $p=100$, larger sample sizes and stronger skewness are required.  This behavior is expected in the dense-skewness setting: the skewness signal is distributed across many coordinates, so individual coordinate directions are less informative, and the diagnostic relies more heavily on random directional coverage.

The comparison across dimensions also illustrates the geometric difficulty of detecting directional asymmetry in high dimensions.  Since the skewness direction is dense, random directions must have sufficient alignment with the vector $\alpha$ in order to reveal strong projected asymmetry.  When $p$ is large and $n$ is small, this alignment effect, combined with the finite-sample variability of quantiles, reduces power.  As $n$ increases, the empirical quantile estimates stabilize, and the diagnostic becomes more sensitive to the dense skewness direction.

\begin{figure}[!ht]
\centering
\includegraphics[width=0.8\textwidth,keepaspectratio]{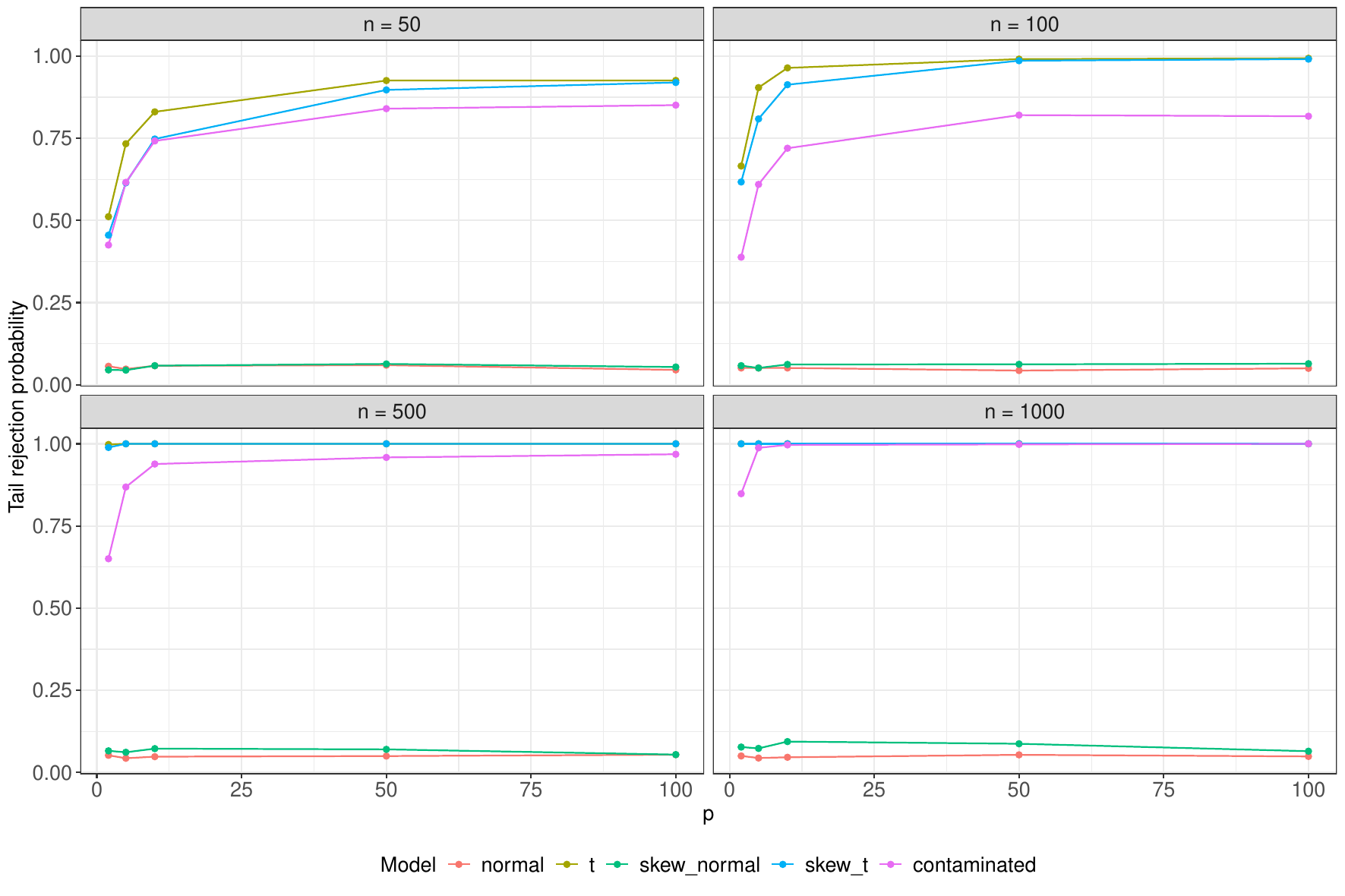}
\caption{Empirical rejection probabilities of the one-sided tail diagnostic under the general directional-skewness setting. Rejection probabilities are plotted against $p$ with panels corresponding to different sample sizes.}
\label{fig:tail_results_general}
\end{figure}

Next, in \Cref{fig:tail_results_general}, we report the rejection probabilities of the one-sided tail diagnostic.  The results show a clear separation between Gaussian tail and heavy-tailed models.  The Gaussian model remains close to the nominal rejection level.  The skew-normal model also remains close to the nominal level, despite being directionally asymmetric.  This confirms that the tail-ratio statistic is not spuriously rejecting merely because the distribution is skewed.  Instead, it responds primarily to tail inflation relative to the Gaussian inter-quantile benchmark. For the multivariate $t$ and skew-$t$ models, the tail rejection probability is high even for small samples and becomes essentially one as the sample size increases.  This is consistent with the scale-mixture construction of these models, which produces heavy-tailed projected distributions in all nondegenerate directions.  The contaminated normal model also shows high rejection probabilities, although the rejection rate is lower for small $n$.  This is natural because only a fraction $\varepsilon=0.05$ of observations comes from the inflated-variance component, and in smaller samples the realized number and extremeness of contaminated observations may vary substantially.  As $n$ increases, the tail diagnostic detects the contamination-driven tail inflation with high probability.

For completeness, we also report the corresponding simulation results for the sparse directional-skewness setting in Figures~\ref{fig:skew_results_sparse} and~\ref{fig:tail_results_sparse}.  Recall that in this setting, $\boldalpha=\delta \bm{e}_1$, so the skewness signal is concentrated in the first coordinate.  Since the direction set includes the coordinate directions, the signal-bearing direction is explicitly present in the projection search.  This makes the sparse setting more favorable for skewness detection than the general setting. The sparse-skewness results show the same qualitative separation between skewness and tail behavior as in the dense case, but the skewness component is more powerful.  This is most apparent for the skew-normal model.  In the dense setting, the skewness signal is spread across all coordinates, and detection in high dimensions requires sufficient random directional alignment with $\boldalpha$.  In the sparse setting, the signal is concentrated in $\bm{e}_1$, which is included among the coordinate directions.  Consequently, the diagnostic can evaluate a projection that is directly aligned with the skewness signal, leading to substantially higher rejection probabilities. The tail diagnostic changes little between the two settings. In both cases, the tail rejection probability remains close to the nominal level for the Gaussian and skew-normal models and is high for the other three models.

\begin{figure}[!ht]
\centering
\includegraphics[width=0.75\textwidth,keepaspectratio]{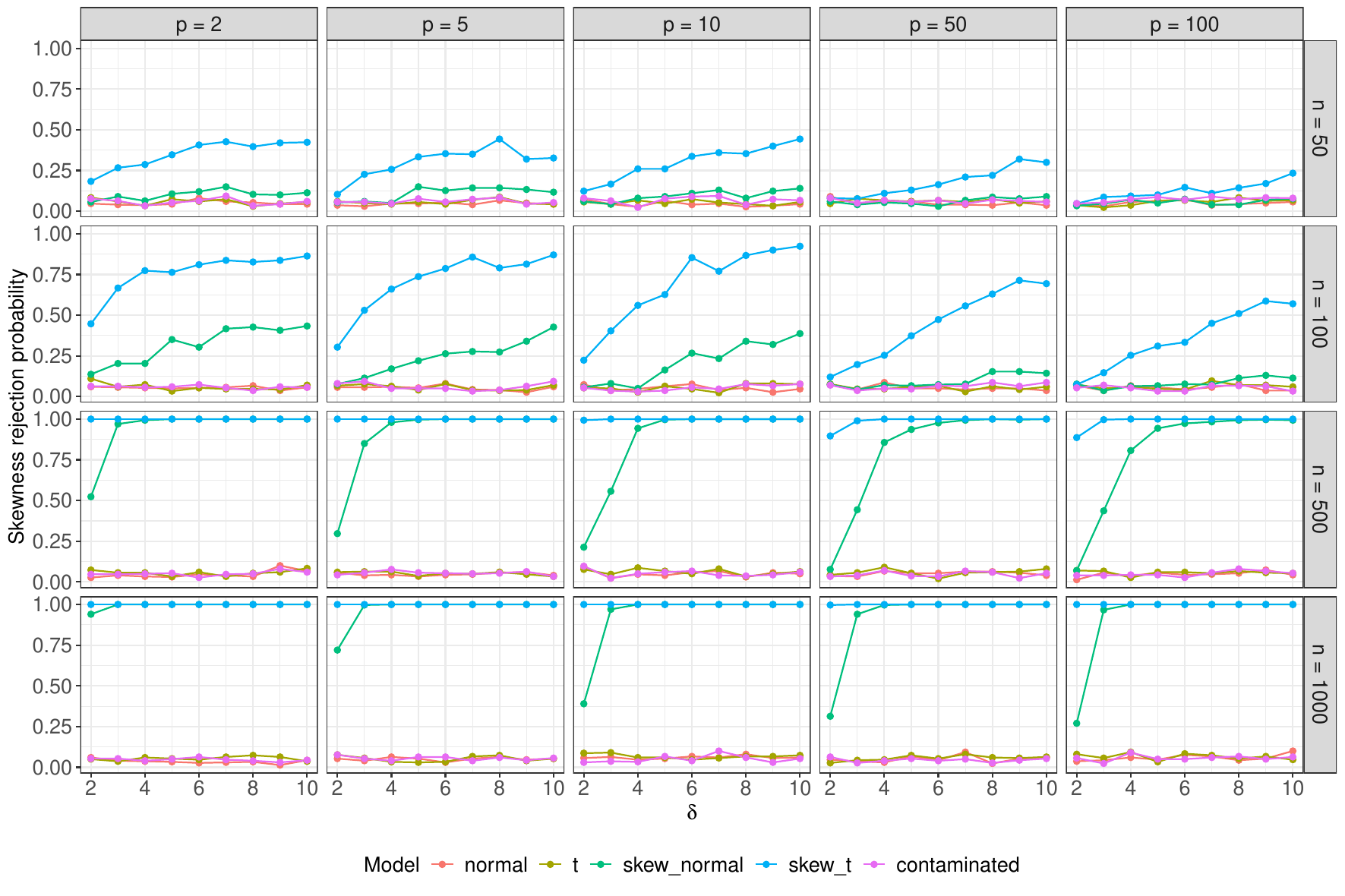}
\caption{Empirical rejection probabilities of the skewness diagnostic under the sparse directional-skewness setting.  Compared with the dense setting in Figure~\ref{fig:skew_results_general}, the skewness rejection probabilities are generally higher, especially for the skew-normal model, because the coordinate direction carrying the skewness signal is included in $\mathcal U_m$.}
\label{fig:skew_results_sparse}
\end{figure}

\begin{figure}[!ht]
\centering
\includegraphics[width=0.8\textwidth,keepaspectratio]{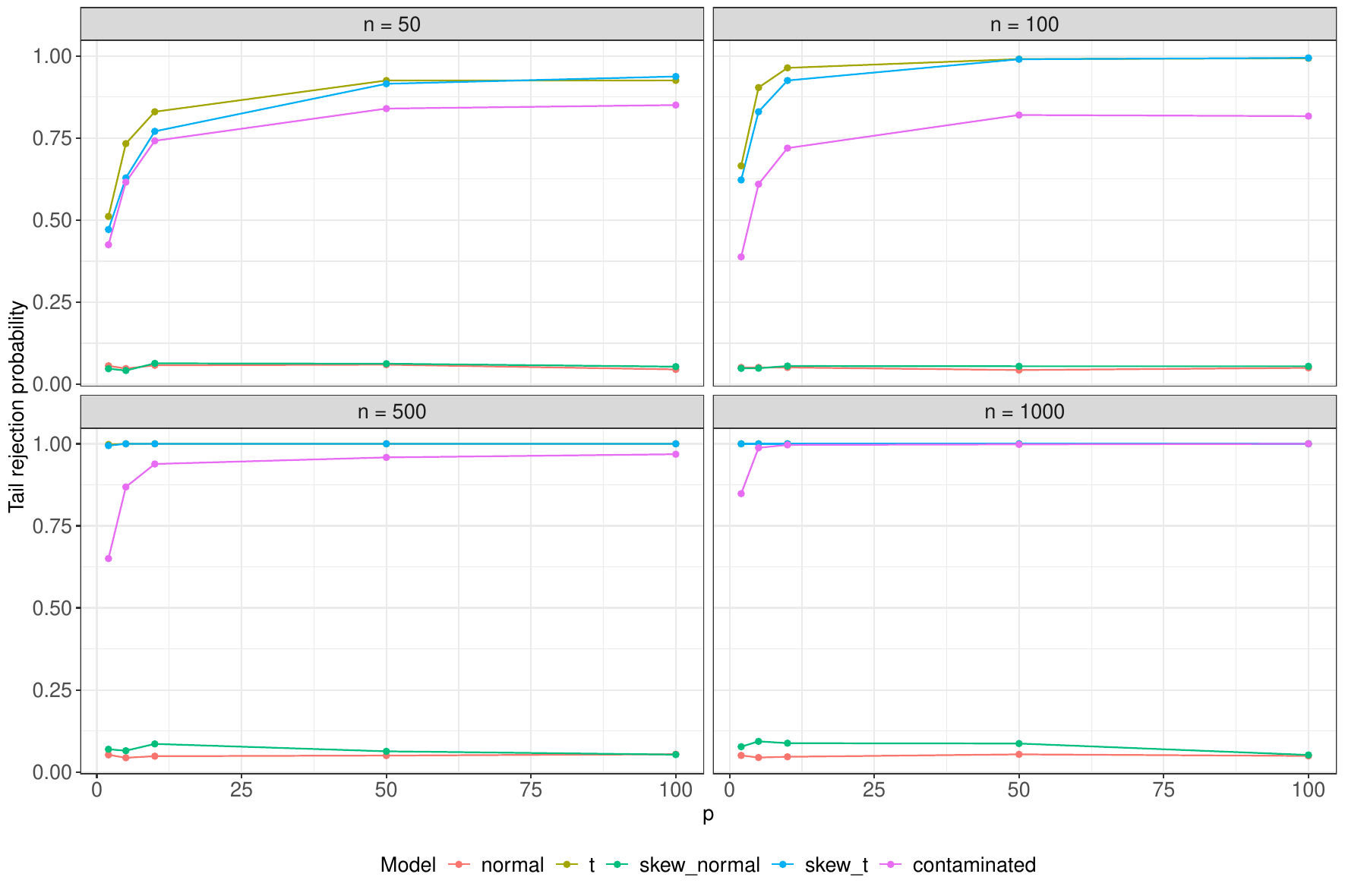}
\caption{Empirical rejection probabilities of the one-sided tail diagnostic under the sparse directional-skewness setting. The qualitative behavior is similar to that in Figure~\ref{fig:tail_results_general}: Student-$t$, skew-$t$, and contaminated normal models are detected as heavy-tailed with high probability, while Gaussian and skew-normal models remain close to the nominal level.}
\label{fig:tail_results_sparse}
\end{figure}

Taken together, these results support the intended separation between the two diagnostic components.  The skewness statistic rejects primarily for skew-normal and skew-$t$ alternatives, while remaining close to the nominal level for symmetric distributions, including symmetric heavy-tailed and contaminated models.  The tail statistic rejects primarily for Student-$t$, skew-$t$, and contaminated normal alternatives, while remaining close to the nominal level for Gaussian and skew-normal data.  The method therefore provides more information than an omnibus test of multivariate normality: it distinguishes evidence of directional asymmetry from evidence of tail inflation. The results also show that tail inflation is generally easier to detect than skewness in the present design.  The tail diagnostic has high power for Student-$t$ and skew-$t$ alternatives even at relatively small sample sizes.  By contrast, skewness detection depends more strongly on $n$, $p$, and $\delta$.  This is particularly visible for the dense skew-normal model, where the asymmetry is distributed across many coordinates and is therefore less visible in any single projection.  Nevertheless, for moderate to large sample sizes, the multi-quantile skewness statistic achieves high power against both skew-normal and skew-$t$ alternatives.

The four-regime interpretation follows directly from these componentwise findings.  Gaussian samples are expected to be classified as symmetric benchmark-tailed; Student-$t$ and contaminated normal samples as symmetric tailed; skew-normal samples as skewed benchmark-tailed; and skew-$t$ samples as skewed tail-departed. Below, in \Cref{tab:classification_dense_sparse}, we present and compare the four-regime classification results for dense and sparse skewness alternatives under specific choices of $n$ and $p$.  The entries are row proportions, so each row describes the empirical distribution of the selected diagnostic category conditional on the true data-generating mechanism.  The table therefore shows not only whether the classifier is correct, but also the type of error made when misclassification occurs.

\begin{table}[!ht]
\centering
\scriptsize
\caption{Four-regime classification proportions for the general and sparse skewness settings with $\delta=5$. Entries are row proportions over Monte Carlo replications. The columns SB, KB, ST, and KT denote symmetric benchmark, skewed benchmark, symmetric tail-departed, and skewed tail-departed, respectively.}
\label{tab:classification_dense_sparse}
\begin{tabular}{llcccc@{\hspace{2em}}cccc}
\toprule
& & \multicolumn{4}{c}{General skewness structure} & \multicolumn{4}{c}{Sparse skewness structure} \\
\cmidrule(lr){3-6} \cmidrule(lr){7-10}
Setting & True model
& SB & KB & ST & KT
& SB & KB & ST & KT \\
\midrule
$n=100$ $p=10$& Normal              & 0.883 & 0.057 & 0.053 & 0.007 & 0.883 & 0.057 & 0.053 & 0.007 \\
& Student-$t$         & 0.033 & 0.003 & 0.903 & 0.060 & 0.033 & 0.003 & 0.903 & 0.060 \\
& Skew-normal         & 0.850 & 0.077 & 0.063 & 0.010 & 0.777 & 0.147 & 0.060 & 0.017 \\
& Skew-$t$            & 0.067 & 0.030 & 0.420 & 0.483 & 0.050 & 0.043 & 0.323 & 0.583 \\
& Contaminated & 0.290 & 0.000 & 0.673 & 0.037 & 0.290 & 0.000 & 0.673 & 0.037 \\
\midrule
$n=500,\ p=10$ & Normal              & 0.920 & 0.040 & 0.040 & 0.000 & 0.920 & 0.040 & 0.040 & 0.000 \\
& Student-$t$         & 0.000 & 0.000 & 0.933 & 0.067 & 0.000 & 0.000 & 0.933 & 0.067 \\
& Skew-normal         & 0.193 & 0.740 & 0.017 & 0.050 & 0.003 & 0.917 & 0.000 & 0.080 \\
& Skew-$t$            & 0.000 & 0.000 & 0.000 & 1.000 & 0.000 & 0.000 & 0.000 & 1.000 \\
& Contaminated & 0.090 & 0.007 & 0.850 & 0.053 & 0.090 & 0.007 & 0.850 & 0.053 \\
\midrule
$n=500,\ p=50$ & Normal              & 0.900 & 0.053 & 0.047 & 0.000 & 0.900 & 0.053 & 0.047 & 0.000 \\
& Student-$t$         & 0.000 & 0.000 & 0.947 & 0.053 & 0.000 & 0.000 & 0.947 & 0.053 \\
& Skew-normal         & 0.853 & 0.093 & 0.050 & 0.003 & 0.063 & 0.890 & 0.000 & 0.047 \\
& Skew-$t$            & 0.000 & 0.000 & 0.003 & 0.997 & 0.000 & 0.000 & 0.000 & 1.000 \\
& Contaminated & 0.040 & 0.003 & 0.923 & 0.033 & 0.040 & 0.003 & 0.923 & 0.033 \\
\midrule
$n=1000,\ p=100$ & Normal              & 0.940 & 0.043 & 0.013 & 0.003 & 0.940 & 0.043 & 0.013 & 0.003 \\
& Student-$t$         & 0.000 & 0.000 & 0.967 & 0.033 & 0.000 & 0.000 & 0.967 & 0.033 \\
& Skew-normal         & 0.923 & 0.037 & 0.037 & 0.003 & 0.000 & 0.967 & 0.000 & 0.033 \\
& Skew-$t$            & 0.000 & 0.000 & 0.003 & 0.997 & 0.000 & 0.000 & 0.000 & 1.000 \\
& Contaminated & 0.000 & 0.000 & 0.950 & 0.050 & 0.000 & 0.000 & 0.950 & 0.050 \\
\bottomrule
\end{tabular}
\end{table}

The table illustrates that for the Gaussian model, most samples are assigned to the symmetric benchmark category, with rejection rates close to the nominal level. For the Student-$t$ and contaminated normal models, most samples are assigned to the symmetric tailed category.  Thus, the classifier continues to separate symmetric heavy-tailed behavior from directional skewness. The main difference between the two settings appears for the skew-normal model.  In the general case, the skew-normal alternative is considerably harder to classify correctly in high dimensions.  For example, at $n=500$ and $p=50$, only $0.093$ of dense skew-normal samples are classified as skewed benchmark, while $0.853$ are classified as symmetric benchmark.  At $n=1000$ and $p=100$, the same pattern persists.  This reflects the difficulty of detecting a dense but weakly projected skewness direction when the signal is spread over many coordinates. In contrast, the sparse skew-normal alternative is classified much more accurately once the sample size is moderate.  At $n=500$ and $p=10$, the sparse skew-normal model is classified as skewed benchmark with probability $0.917$, compared with $0.740$ in the dense case. At $n=500$ and $p=50$, the difference is more pronounced: the sparse case has a correct classification probability of $0.890$, whereas the dense case has only $0.093$.  At $n=1000$ and $p=100$ as well, the results are substantially better in the sparse case.  This comparison highlights the role of the direction set. Under the sparse skewness setting, the signal-bearing projection is directly searched.  For dense skewness, the signal direction is not necessarily well represented by the finite set of random and coordinate directions, especially in high dimensions. Moving on to the skew-$t$ model, we notice that classification is strong in both settings once the sample size is moderate. For $n=500$ and larger, both dense and sparse skew-$t$ samples are classified almost always as skewed tail-departed.  The heavy-tailed scale-mixture component helps the diagnostic detect departure from the Gaussian benchmark, while the skewness component becomes sufficiently strong at these sample sizes.

Overall, the table confirms the main qualitative message of the simulation study.  The tail component is stable across dense and sparse skewness designs and correctly identifies Student-$t$, skew-$t$, and contaminated normal models as tailed.  The skewness component is sensitive to how well the skewness direction is represented in the projection set.  Sparse skewness is easier to detect because the coordinate directions include the signal direction.  Dense skewness can be substantially harder in high dimensions, suggesting that additional structured directions, such as PCA or ICA directions, may be useful in settings where the asymmetric direction is not sparse.

\subsection{Comparison against existing methods}

We also compared the proposed diagnostics with several standard procedures for assessing multivariate normality, including Mardia's skewness and kurtosis tests \citep{mardia1970measures}, the Henze--Zirkler test \citep{henze1990class}, the Doornik--Hansen test \citep{doornik2008omnibus}, the energy test \citep{szekely2013energy}, and the Malkovich--Afifi projection-kurtosis test \citep{malkovich1973tests}. These tests are implemented using the \texttt{R} packages \texttt{MVN} \citep{MVNpackage} and \texttt{mnt} \citep{mntpackage}. The broad conclusion from these comparisons is that the proposed method has rejection behavior comparable to existing methods under the main non-Gaussian alternatives considered in the simulation study.  In particular, the proposed skewness component has competitive power against skewed alternatives when the directional signal is sufficiently represented in the projection set, and the proposed tail component has high power against Student-\(t\), skew-\(t\), and contaminated normal alternatives. We also note that the Malkovich--Afifi projection-kurtosis test is extremely time-consuming and therefore, albeit projection based, should not be preferred over our proposed technique.

The comparison also highlights an important distinction between the proposed procedure and existing omnibus tests.  Methods such as the Henze--Zirkler, Doornik--Hansen, and energy tests are effective at detecting departures from multivariate normality, but their rejection decisions do not identify the nature of the departure.  Similarly, moment-based tests such as Mardia's skewness and kurtosis tests provide useful classical benchmarks, but they can be sensitive to moment existence, tail behavior, and high-dimensional finite-sample effects.  In our simulations, the proposed quantile-based diagnostics gave broadly similar power while retaining a more interpretable two-component structure: one statistic targets directional asymmetry, and the other targets tail inflation relative to a benchmark.

Thus, the main contribution of the proposed method is not that it uniformly dominates all existing tests in raw rejection probability.  Rather, it provides a robust and interpretable diagnostic decomposition of multivariate non-Gaussianity.  An omnibus test may conclude that a sample is non-normal, but it does not indicate whether a symmetric heavy-tailed model, a skewed benchmark-tailed model, or a skewed heavy-tailed model is more appropriate.  The proposed procedure is designed precisely for this screening task, and the simulation results suggest that this additional interpretability is obtained without a substantial loss of power relative to standard competing methods.

\section{Application to Indian air-quality data}\label{sec:real_data}

As a real-life application, we illustrate the proposed diagnostic on Indian air-quality data for five major cities: Delhi, Mumbai, Chennai, Kolkata, and Bangalore. The data have been originally sourced from the Central Pollution Control Board of the Government of India\footnote{CPCB: \url{https://cpcb.nic.in/}}, and a pre-processed version is publicly available in Kaggle\footnote{Kaggle data source: \url{https://www.kaggle.com/datasets/ankushpanday1/air-quality-data-in-india-2015-2024/}}. The objective of this analysis is not to build a forecasting model for pollutant concentrations, but to demonstrate how the proposed procedure can be used as a preliminary distributional screening tool for multivariate environmental data.  Such screening is useful because pollutant vectors often contain episodic spikes, asymmetric responses to meteorological conditions, and joint tail events, all of which may invalidate simple Gaussian modeling assumptions. For each city, we construct a 10-dimensional pollutant vector using
\begin{equation}
        \X_t =
        \left(\mathrm{PM}_{2.5,t},\mathrm{PM}_{10,t},\mathrm{NO}_{2,t},\mathrm{NH}_{3,t},\mathrm{SO}_{2,t},\mathrm{CO}_{t},\mathrm{O}_{3,t},\mathrm{Benzene}_{t}, \mathrm{Toluene}_{t},\mathrm{Xylene}_{t}\right)^\top,
\end{equation}
where $t$ stands for different days. To reduce short-range temporal dependence and to obtain comparable sample sizes across cities, we follow the idea of thinning and retain only observations recorded on days spread out from each other. This produces 522 weekly observations for each city over the period of 2015 to 2024. Note that the data vector combines particulate matter, gaseous precursors, combustion-related pollutants, ozone, and volatile organic compounds.  We avoid using \(\mathrm{NO}\), \(\mathrm{NO}_2\), and \(\mathrm{NO_x}\) simultaneously, since that would introduce unnecessary redundancy and high correlation. The variables benzene, toluene, and xylene represent a standard group of volatile organic compounds and provide information about traffic and combustion-related pollution.

For implementation, we first consider the entire 10 year period; and then split the data into two equal subperiods: before 2020 (pre-COVID) and 2020 onward (post-COVID). This split is intended to examine whether the distributional classification changes over time. A similar analysis for Indian cities was conducted by \cite{dhar2025comparative}. Also, before applying the diagnostics, each pollutant variable is transformed using \(\log(1+x)\) and then standardized componentwise. The projection diagnostics are computed using the same tuning choices as in the simulation study.  We use the quantile grid $\mathcal A_S=\{0.05,0.10,0.15,0.20,0.25\}$ for the directional skewness statistic and \(q=0.025\) for the one-sided tail statistic.  The projection set consists of random directions and coordinate directions.  Critical values are obtained by Gaussian Monte Carlo calibration using the same sample size, dimension, projection set, and quantile grid as in the observed data.  The resulting decisions are reported in Table~\ref{tab:air_quality_results}.

\begin{table}[H]
\centering
\scriptsize
\caption{Projection diagnostics for Indian air-quality pollutant vectors.  The columns \(S\) and \(T\) report the observed skewness and one-sided tail statistics, while \(c_S\) and \(c_T\) are their Gaussian Monte Carlo critical values.  The full period uses the complete sample, while the two subperiods split the data into pre-2020 and 2020 onward periods.}
\label{tab:air_quality_results}
\begin{tabular}{llccccc}
\toprule
Period & City & \(S\) & \(c_S\) & \(T\) & \(c_T\) & Diagnostic class \\
\midrule
Full sample & Delhi     & 0.593 & 0.231 & 0.893 & 0.739 & Skewed tail-departed \\
Full sample & Mumbai    & 0.608 & 0.227 & 1.014 & 0.747 & Skewed tail-departed \\
Full sample & Chennai   & 0.607 & 0.227 & 1.277 & 0.728 & Skewed tail-departed \\
Full sample & Kolkata   & 0.585 & 0.232 & 0.849 & 0.737 & Skewed tail-departed \\
Full sample & Bangalore & 0.633 & 0.224 & 0.977 & 0.740 & Skewed tail-departed \\
\midrule
Pre-2020 & Delhi     & 0.671 & 0.320 & 1.289 & 1.150 & Skewed tail-departed \\
Pre-2020 & Mumbai    & 0.647 & 0.321 & 0.980 & 1.120 & Skewed benchmark \\
Pre-2020 & Chennai   & 0.631 & 0.322 & 1.472 & 1.130 & Skewed tail-departed \\
Pre-2020 & Kolkata   & 0.597 & 0.321 & 0.947 & 1.110 & Skewed benchmark \\
Pre-2020 & Bangalore & 0.630 & 0.322 & 1.129 & 1.140 & Skewed benchmark \\
\midrule
2020 onward & Delhi     & 0.628 & 0.324 & 1.590 & 1.130 & Skewed tail-departed \\
2020 onward & Mumbai    & 0.607 & 0.321 & 1.370 & 1.120 & Skewed tail-departed \\
2020 onward & Chennai   & 0.634 & 0.318 & 1.200 & 1.120 & Skewed tail-departed \\
2020 onward & Kolkata   & 0.645 & 0.317 & 1.150 & 1.130 & Skewed tail-departed \\
2020 onward & Bangalore & 0.652 & 0.321 & 1.370 & 1.140 & Skewed tail-departed \\
\bottomrule
\end{tabular}
\end{table}

The full-sample results show a strikingly consistent pattern.  For all five cities, both statistics exceed their corresponding Gaussian critical values.  Consequently, all five cities are classified as skewed tail-departed.  This suggests that the joint distribution of the pollutant vector is not adequately described by a Gaussian or symmetric benchmark-tailed model.  It also suggests that a symmetric heavy-tailed model alone would be incomplete, since the skewness component rejects strongly.  The diagnostic therefore points toward the use of skewed heavy-tailed multivariate models, such as skew-\(t\)-type models, as more appropriate distributional working models for the full-period pollutant vectors. On this note, we refer to the work by \cite{bartoletti2010modelling} who argued that multivariate skewed models are more suitable for air quality data analysis.

The subperiod analysis reveals a more nuanced temporal pattern.  In the pre-2020 period, all five cities show strong evidence of directional skewness.  However, the evidence for tail inflation is city-dependent.  Delhi and Chennai are classified as skewed tail-departed, whereas Mumbai, Kolkata, and Bangalore are classified as skewed benchmark.  Thus, before 2020, the multivariate pollutant distributions for Mumbai, Kolkata, and Bangalore exhibit directional asymmetry, but their tail statistics do not exceed the Gaussian tail critical values at the chosen level.  For these cities and this period, a skewed but not necessarily heavy-tailed model is suggested by the diagnostic. One may in fact introduce other types of transformation to reduce the extent of skewness and model these under a multivariate Gaussian framework. Contrary to this, in the post-2020 period, all five cities are classified as skewed tail-departed. The most pronounced tail-ratio departure occur for Delhi, followed by Mumbai and Bangalore. From a modeling perspective, this suggests that skewed heavy-tailed distributions are more appropriate for the post-2020 pollutant vectors across all five cities.

These results illustrate the value of separating the two components of non-Gaussianity.  A standard omnibus normality test would indicate only that the pollutant vectors are non-Gaussian.  The proposed procedure provides a more interpretable diagnosis.  For the full sample and the post-2020 period, the departure is both asymmetric and heavy-tailed.  For several cities in the pre-2020 period, the departure is primarily asymmetric rather than strongly tail-inflated.  This distinction is useful because it suggests different modeling choices: skewed benchmark-tailed classifications motivate asymmetric models without necessarily requiring heavy-tailed radial components, while skewed tail-departed classifications motivate asymmetric heavy-tailed models. On this note, we also find it pertinent to highlight the time ordering of the observations. Although we follow the idea of thinning to reduce short-range serial dependence, the observations are still temporally ordered and may exhibit seasonality or longer-range dependence. Therefore, the present analysis should be interpreted as a marginal distributional diagnostic rather than as a formal time-series test with exact size control. A dependence-adjusted calibration, for example using seasonal or block resampling, would be a natural extension for formal time-series inference.  Nevertheless, the magnitude of the observed statistics relative to their Gaussian critical values indicates substantial departures from symmetric benchmark-tailed behavior in the empirical pollutant distributions.

\section{Conclusions}\label{sec:conclusions}

We have developed a projection-based framework for diagnosing directional asymmetry and tail-ratio departure in multivariate data. The procedure works with one-dimensional projections, combining random directions with coordinate directions, and uses quantile-based summaries to reduce dependence on moment assumptions. The theoretical results show that the empirical diagnostics are uniformly close to their population targets over the searched directions and skewness levels, and that the resulting four-regime classification is consistent under separation. The sparse-direction calculation also clarifies why coordinate augmentation can be helpful when the departure is concentrated in a few variables. The framework is intended as an interpretable diagnostic tool that can guide subsequent modeling choices, especially when deciding whether a symmetric, skewed, tail-departed, or combined model is appropriate.

The numerical results support this interpretation. Across the simulation designs, the skewness component remains close to the nominal level under symmetric models, while gaining power against skew-normal and skew-\(t\) alternatives as the skewness strength and sample size increase. The tail-ratio component shows strong power against all heavy-tailed alternatives, while remaining stable under Gaussian and skew-normal models. Taken together, these findings indicate that the two components are not simply duplicating an omnibus test of multivariate normality. Rather, they provide separate evidence about two different forms of non-Gaussianity: violation of central symmetry and tail-ratio departure from a benchmark law. The simulations also clarify the role of the direction set. Dense directional skewness is intrinsically harder to detect in high dimensions because the signal is spread over many coordinates and a finite random search may not align closely with the skewness direction. Sparse skewness is easier to detect when coordinate directions are included, because the signal-bearing coordinate is explicitly searched. This agrees with the rank-one calculations in the theory and suggests a practical interpretation of the method: random directions provide broad coverage, while coordinate directions protect against marginal or sparse departures. The comparison with existing tests further shows that the proposed method has broadly comparable rejection behavior under the main alternatives, while offering a more interpretable decomposition than standard omnibus normality tests. Finally, with the real-data analysis for the Indian air-quality data, the proposed diagnostics identify both directional asymmetry and tail-ratio departure in the multivariate pollutant vectors, with some differences across cities and time periods. Such information is useful before fitting a multivariate distributional model. A standard omnibus test may indicate that the pollutant vector is non-Gaussian, but it does not say whether a symmetric heavy-tailed model, a skewed benchmark-tailed model, or a skewed heavy-tailed model is more appropriate. The proposed diagnostic provides exactly this type of preliminary model guidance.

Several extensions remain open. First, although the present implementation uses random and coordinate directions, the same framework can incorporate data-adaptive directions, including PCA, ICA, or projection-pursuit directions. A careful theory for such directions would require accounting for their data dependence, possibly through sample splitting or perturbation arguments. Second, the calibration step can be developed further. In this paper, Gaussian calibration is used as the main reference, while elliptical calibration is discussed as a possible alternative. More detailed theory for bootstrap and semi-parametric calibration would be useful, especially when the reference model is not Gaussian. Third, the present theory is developed for independent observations. Some applications naturally involve temporal dependence and seasonality, so block-resampling or dependence-adjusted versions of the diagnostic would be valuable for formal time-series inference. Finally, extensions to functional, compositional, or ultra high-dimensional data would broaden the scope of the method and connect it more directly to modern multivariate applications.

\bibliographystyle{cas-model2-names}
\bibliography{ref.bib}

@article{dvoretzky1956asymptotic,
	title={Asymptotic minimax character of the sample distribution function and of the classical multinomial estimator},
	author={Dvoretzky, Aryeh and Kiefer, Jack and Wolfowitz, Jacob},
	journal={The Annals of Mathematical Statistics},
	volume={27},
	number={3},
	pages={642--669},
	year={1956}
}

@article{massart1990tight,
	title={The tight constant in the {Dvoretzky--Kiefer--Wolfowitz} inequality},
	author={Massart, Pascal},
	journal={The Annals of Probability},
	volume={18},
	number={3},
	pages={1269--1283},
	year={1990}
}

@book{vandervaart1998asymptotic,
	title={Asymptotic Statistics},
	author={van der Vaart, Aad W.},
	publisher={Cambridge University Press},
	year={1998}
}

@book{ledoux2001concentration,
	title={The Concentration of Measure Phenomenon},
	author={Ledoux, Michel},
	publisher={American Mathematical Society},
	year={2001}
}

@book{vershynin2018highdimensional,
	title={High-Dimensional Probability: An Introduction with Applications in Data Science},
	author={Vershynin, Roman},
	edition={2nd},
	publisher={Cambridge University Press},
	year={2026}
}

@article{cambanis1981elliptically,
	title={On the theory of elliptically contoured distributions},
	author={Cambanis, Stamatis and Huang, Steel and Simons, Gordon},
	journal={Journal of Multivariate Analysis},
	volume={11},
	number={3},
	pages={368--385},
	year={1981}
}

@article{malkovich1973tests,
	title   = {On tests for multivariate normality},
	author  = {Malkovich, James F. and Afifi, A. A.},
	journal = {Journal of the American Statistical Association},
	volume  = {68},
	number  = {341},
	pages   = {176--179},
	year    = {1973}
}

@article{baringhaus1991limit,
	title   = {Limit distributions for measures of multivariate skewness and kurtosis based on projections},
	author  = {Baringhaus, Ludwig and Henze, Norbert},
	journal = {Journal of Multivariate Analysis},
	volume  = {38},
	number  = {1},
	pages   = {51--69},
	year    = {1991}
}

@article{kong2012quantile,
	title   = {Quantile tomography: Using quantiles with multivariate data},
	author  = {Kong, Linglong and Mizera, Ivan},
	journal = {Statistica Sinica},
	volume  = {22},
	number  = {4},
	pages   = {1589--1610},
	year    = {2012}
}

@article{crow1967robust,
	title   = {Robust estimation of location},
	author  = {Crow, Edwin L. and Siddiqui, M. M.},
	journal = {Journal of the American Statistical Association},
	volume  = {62},
	number  = {318},
	pages   = {353--389},
	year    = {1967}
}

@article{ruppert1987what,
	title   = {What is kurtosis? An influence function approach},
	author  = {Ruppert, David},
	journal = {The American Statistician},
	volume  = {41},
	number  = {1},
	pages   = {1--5},
	year    = {1987}
}

@article{jones2011skewness,
	title   = {Skewness-invariant measures of kurtosis},
	author  = {Jones, M. C. and Rosco, J. F. and Pewsey, Arthur},
	journal = {The American Statistician},
	volume  = {65},
	number  = {2},
	pages   = {89--95},
	year    = {2011}
}

@book{fang1990symmetric,
	title={Symmetric Multivariate and Related Distributions},
	author={Fang, Kai-Tai and Kotz, Samuel and Ng, Kai Wang},
	publisher={Chapman and Hall},
	year={1990}
}

@article{hall1993almost,
	title   = {On almost linearity of low dimensional projections from high dimensional data},
	author  = {Hall, Peter and Li, Ker-Chau},
	journal = {The Annals of Statistics},
	volume  = {21},
	number  = {2},
	pages   = {867--889},
	year    = {1993}
}

@article{ledoit2004well,
	title   = {A well-conditioned estimator for large-dimensional covariance matrices},
	author  = {Ledoit, Olivier and Wolf, Michael},
	journal = {Journal of Multivariate Analysis},
	volume  = {88},
	number  = {2},
	pages   = {365--411},
	year    = {2004}
}

@article{chen2010shrinkage,
	title   = {Shrinkage algorithms for {MMSE} covariance estimation},
	author  = {Chen, Yilun and Wiesel, Ami and Eldar, Yonina C. and Hero, Alfred O.},
	journal = {IEEE Transactions on Signal Processing},
	volume  = {58},
	number  = {10},
	pages   = {5016--5029},
	year    = {2010}
}

@article{bickel2008covariance,
	title   = {Covariance regularization by thresholding},
	author  = {Bickel, Peter J. and Levina, Elizaveta},
	journal = {The Annals of Statistics},
	volume  = {36},
	number  = {6},
	pages   = {2577--2604},
	year    = {2008}
}

@article{cai2011adaptive,
	title   = {Adaptive thresholding for sparse covariance matrix estimation},
	author  = {Cai, Tony T. and Liu, Weidong},
	journal = {Journal of the American Statistical Association},
	volume  = {106},
	number  = {494},
	pages   = {672--684},
	year    = {2011}
}

@article{maronna1976robust,
	title   = {Robust {M}-estimators of multivariate location and scatter},
	author  = {Maronna, Ricardo A.},
	journal = {The Annals of Statistics},
	volume  = {4},
	number  = {1},
	pages   = {51--67},
	year    = {1976}
}

@article{tyler1987distribution,
	title   = {A distribution-free {M}-estimator of multivariate scatter},
	author  = {Tyler, David E.},
	journal = {The Annals of Statistics},
	volume  = {15},
	number  = {1},
	pages   = {234--251},
	year    = {1987}
}

@article{friedman1974projection,
	title   = {A projection pursuit algorithm for exploratory data analysis},
	author  = {Friedman, Jerome H. and Tukey, John W.},
	journal = {IEEE Transactions on Computers},
	volume  = {C-23},
	number  = {9},
	pages   = {881--890},
	year    = {1974}
}

@article{diaconis1984asymptotics,
	title   = {Asymptotics of graphical projection pursuit},
	author  = {Diaconis, Persi and Freedman, David},
	journal = {The Annals of Statistics},
	volume  = {12},
	number  = {3},
	pages   = {793--815},
	year    = {1984}
}

@article{pena2001multivariate,
	title   = {Multivariate outlier detection and robust covariance matrix estimation},
	author  = {Pe{\~n}a, Daniel and Prieto, Francisco J.},
	journal = {Technometrics},
	volume  = {43},
	number  = {3},
	pages   = {286--300},
	year    = {2001}
}

@book{hyvarinen2001ica,
	title     = {Independent Component Analysis},
	author    = {Hyv{\"a}rinen, Aapo and Karhunen, Juha and Oja, Erkki},
	publisher = {John Wiley \& Sons},
	year      = {2001}
}

@article{loperfido2018skewness,
	title   = {Skewness-based projection pursuit: A computational approach},
	author  = {Loperfido, Nicola},
	journal = {Computational Statistics \& Data Analysis},
	volume  = {120},
	pages   = {42--57},
	year    = {2018}
}

@article{brys2004robust,
	title   = {A Robust Measure of Skewness},
	author  = {Brys, Guy and Hubert, Mia and Struyf, Anja},
	journal = {Journal of Computational and Graphical Statistics},
	volume  = {13},
	number  = {4},
	pages   = {996--1017},
	year    = {2004}
}

@article{chaudhuri1996geometric,
  title={On a geometric notion of quantiles for multivariate data},
  author={Chaudhuri, Probal},
  journal={Journal of the American statistical association},
  volume={91},
  number={434},
  pages={862--872},
  year={1996},
  publisher={Taylor \& Francis}
}

@article{mardia1970measures,
  title={Measures of multivariate skewness and kurtosis with applications},
  author={Mardia, Kanti V},
  journal={Biometrika},
  volume={57},
  number={3},
  pages={519--530},
  year={1970},
  publisher={Oxford University Press}
}

@article{mardia1974applications,
  title={Applications of some measures of multivariate skewness and kurtosis in testing normality and robustness studies},
  author={Mardia, Kanti V},
  journal={Sankhy{\=a}: The Indian Journal of Statistics, Series B},
  pages={115--128},
  year={1974},
  publisher={JSTOR}
}

@article{henze1990class,
  title={A class of invariant consistent tests for multivariate normality},
  author={Henze, Norbert and Zirkler, Bernd},
  journal={Communications in statistics-Theory and Methods},
  volume={19},
  number={10},
  pages={3595--3617},
  year={1990},
  publisher={Taylor \& Francis}
}

@article{doornik2008omnibus,
  title={An omnibus test for univariate and multivariate normality},
  author={Doornik, Jurgen A and Hansen, Henrik},
  journal={Oxford bulletin of economics and statistics},
  volume={70},
  pages={927--939},
  year={2008},
  publisher={Wiley Online Library}
}

@article{szekely2017energy,
  title={The energy of data},
  author={Sz{\'e}kely, G{\'a}bor J and Rizzo, Maria L},
  journal={Annual Review of Statistics and Its Application},
  volume={4},
  pages={447--479},
  year={2017},
  publisher={Annual Reviews}
}

@article{szekely2013energy,
  title={Energy statistics: A class of statistics based on distances},
  author={Sz{\'e}kely, G{\'a}bor J and Rizzo, Maria L},
  journal={Journal of statistical planning and inference},
  volume={143},
  number={8},
  pages={1249--1272},
  year={2013},
  publisher={Elsevier}
}

@article{chowdhury2022sub,
  title={Sub-dimensional Mardia measures of multivariate skewness and kurtosis},
  author={Chowdhury, Joydeep and Dutta, Subhajit and Arellano-Valle, Reinaldo B and Genton, Marc G},
  journal={Journal of Multivariate Analysis},
  volume={192},
  pages={105089},
  year={2022},
  publisher={Elsevier}
}

@Article{MVNpackage,
    title = {MVN: An R Package for Assessing Multivariate Normality.},
    author = {Selcuk Korkmaz and Dincer Goksuluk and Gokmen Zararsiz},
    journal = {The R Journal},
    year = {2014},
    volume = {6},
    number = {2},
    pages = {151--162},
    url = {https://journal.r-project.org/articles/RJ-2014-031/RJ-2014-031.pdf},
}

@Manual{mntpackage,
    title = {mnt: Affine Invariant Tests of Multivariate Normality},
    author = {Lucas Butsch and Bruno Ebner},
    year = {2020},
    note = {R package version 1.3},
    url = {https://CRAN.R-project.org/package=mnt},
    doi = {10.32614/CRAN.package.mnt},
}

@article{bartoletti2010modelling,
  title={Modelling air pollution data by the skew-normal distribution},
  author={Bartoletti, Silvia and Loperfido, Nicola},
  journal={Stochastic Environmental Research and Risk Assessment},
  volume={24},
  number={4},
  pages={513--517},
  year={2010},
  publisher={Springer}
}

@article{dhar2025comparative,
  title={A comparative study of air quality between pre and post COVID-19 periods in India},
  author={Dhar, Sarbendu Bikash},
  journal={Environment, Development and Sustainability},
  volume={27},
  number={1},
  pages={1829--1853},
  year={2025},
  publisher={Springer}
}

\end{document}